\documentclass[11pt,letterpaper]{article}

\usepackage[utf8]{inputenc}
\usepackage[english]{babel}

\usepackage[dvipsnames]{xcolor}
\usepackage[colorlinks=true,pdfpagemode=UseNone,urlcolor=RoyalBlue,linkcolor=RoyalBlue,citecolor=OliveGreen,pdfstartview=FitH]{hyperref}
\definecolor{hkured}{HTML}{EE4123}
\definecolor{hkublue}{HTML}{009BD4}
\definecolor{hkugreen}{HTML}{00B38C}
\definecolor{hkuyellow}{HTML}{FED401}

\usepackage{framed}
\definecolor{shadecolor}{HTML}{E0E0E0}

\usepackage{tikz}
\usetikzlibrary{shapes, arrows.meta, positioning, patterns}
\usepackage{pgfplots}
\usepgfplotslibrary{fillbetween}
\pgfdeclarelayer{ft}
\pgfdeclarelayer{bg}
\pgfsetlayers{bg,main,ft}
\pgfplotsset{compat=1.15}

\usepackage{subcaption}
\usepackage{tabto}

\usepackage[margin=1in]{geometry}
\usepackage{amsfonts,amsmath,amsthm,amssymb}
\usepackage{mathtools,thmtools}
\usepackage{bm}

\usepackage{array}
\newcolumntype{x}[1]{>{\centering\arraybackslash\hspace{0pt}}p{#1}}

\usepackage{nicefrac}

\declaretheorem[name=Informal Theorem]{informal}
\declaretheorem[parent=section]{theorem}
\declaretheorem[sibling=theorem]{lemma}
\declaretheorem[sibling=theorem]{corollary}

\declaretheorem[style=definition,parent=section]{definition}
\declaretheorem[style=definition,parent=section]{example}

\usepackage{todonotes}
\usepackage{tcolorbox}
\usepackage{booktabs}
\usepackage{caption}
    
\usepackage{multirow}

\usepackage{enumitem}

\usepackage[numbers, sort]{natbib}
\newtcolorbox{algorithm}[1]
{
	adjusted title = {#1},
	fonttitle = \bfseries,
  	beforeafter skip = 12pt,
}
\usepackage{cleveref}

\renewcommand{\Pr}{\operatorname{\mathbf{Pr}}}
\newcommand{\E}{\operatorname{\mathbf{E}}}
\newcommand{\Var}{\operatorname{\mathbf{Var}}}
\newcommand{\defeq}{\stackrel{\mathrm{def}}{=}}

\newcommand{\dif}[1]{\mathrm{d}#1}

\newcommand{\Z}{\mathbb{Z}}
\newcommand{\R}{\mathbb{R}}

\newcommand{\load}{L}
\newcommand{\loadset}{\mathcal{L}}
\newcommand{\loadeset}{\mathcal{S}}
\newcommand{\polytope}{\mathcal{P}}
\newcommand{\support}{\mathrm{span}}

\newcommand{\Poisson}{\mathrm{Poisson}}
\newcommand{\Bernoulli}{\mathrm{Bern}}
\newcommand{\Exponential}{\mathrm{Exp}}
\newcommand{\Swap}{\mathrm{Swap}}
\newcommand{\MarkSwap}{\mathrm{MarkSwap}}

\newcommand{\MarkTime}{T}

\title{Generalized Balls into Bins}
\author{
   Zhiyi Huang \footnote{The University of Hong Kong. Email: zhiyi@cs.hku.hk.}
   \and
   Kaifeng Lin \footnote{SKLP, Institute of Computing Technology, CAS. Email: linkaifeng25s@ict.ac.cn. Some work was done when the author was at Tsinghua University.}
   \and
   Qinpei Lou \footnote{The University of Hong Kong. Email: qinpeilou@connect.hku.hk. Some work was done when the author was at Zhejiang University.}
   \and
   Xinyue Xiang \footnote{University of Macau. Email: yc67195@um.edu.mo. Some work was done when the author was at Peking University.}
   \and
   Peilin Yang \footnote{The University of Hong Kong. Email: peilinyang@connect.hku.hk.}
}
\date{August 2026}

\begin{document}

\begin{titlepage}
\thispagestyle{empty}

\maketitle

\begin{abstract}
\thispagestyle{empty}
Consider a set of bins and two-choice balls arriving by a Poisson process. 
We must allocate each incoming ball immediately to one of two incident bins. 
For a given function $f$ and every bin, we aim to bound the expectation of $f(\load)$---where $\load$ is the bin's final load---based on the arrival rate of balls incident to that bin.
We call this problem Generalized Balls into Bins, capturing many problems as special cases including the original Balls into Bins by Azar et al.~(1994) and Online Stochastic Matching by Feldman et al.~(2009). 
We show that Greedy provides optimal amortized bounds for all convex and concave functions $f$.
Further, we propose another algorithm that achieves non-trivial bounds without amortization.
As an application, we design a competitive algorithm for a stochastic model of completion time minimization on unrelated machines.
\end{abstract}	

\end{titlepage}

\section{Introduction}
\label{sec:intro}

Thirty years ago, \citet*{AzarBKU:SODA:1994} introduced the celebrated Balls into Bins problem about online allocation of $n$ balls into $n$ bins.
If we allocate each ball independently to a random bin, a simple calculation shows that the most loaded bin will have about $\log n/\log\log n$ balls with high probability.
By contrast, \citet{AzarBKU:SODA:1994} showed that the maximum load decreases to about $\log \log n$ if we randomly sample two bins independently for each ball and then allocate the ball to the less loaded bin.
This surprising result is widely known as ``the power of two choices''.

The two-choice paradigm has found numerous applications in hashing~\cite{KarpLM:STOC:1992,Wieder:FTTCS:2017,CelisRSW:SICOMP:2013}, load balancing~\cite{AdlerCMR:STOC:1995,Mitzenmacher:FOCS:1996}, routing~\cite{ColeMMMRSSV:STOC:1998}, fair division~\cite{EdmondsP:FOCS:2006}, etc.
We refer readers to \citet*{MitzenmacherRS:2001} for the early results on this topic, and \citet{Wieder:FTTCS:2017} for a more recent survey.

Fifteen years later, \citet*{FeldmanMMM:FOCS:2009} proposed the Online Stochastic Matching problem---a stochastic variant of the Online Bipartite Matching problem by \citet*{KarpVV:STOC:1990}.
The problem has applications in the online allocation of impressions to advertisers based on distributional information, e.g., learned from historical data.
Their algorithm samples two suggested matchings between the advertisers and the possible impressions based on the distributional information.
Then, for every arrived impression, it first attempts to allocate the impression according to the first matching;
if that advertiser is already matched, it then tries the second matching.
With this approach, \citet{FeldmanMMM:FOCS:2009} obtained a competitive ratio greater than the $1-\frac{1}{e}$ barrier, and credited their idea to the ``power of two choices'' framework.

Despite the similarity between the two-choice Balls into Bins algorithm by \citet{AzarBKU:SODA:1994} and the two-choice Online Stochastic Matching algorithm by \citet{FeldmanMMM:FOCS:2009}, no overarching framework has yet been developed to formally capture these two problems as special cases.

\subsection{Conceptual Contribution}

This paper introduces the Generalized Balls into Bins problem as such a framework, inspired by the graphical model of Balls into Bins~\cite{KenthapadiP:SODA:2006,PeresTW:RSA:2015,BansalF:STOC:2022} and the recent progress on Online Stochastic Matching~\cite{ChenHS:FOCS:2024,Yan:SODA:2024,QiuFZW:WINE:2023}.
We start by explaining these works.

The graphical model of Balls into Bins considers a set of bins represented by the vertices of an undirected graph $G = (V, E)$, and a set of two-choice balls represented by the edges.
The model proceeds in $T$ steps.
At each step, nature samples a ball/edge uniformly at random;
then, the algorithm allocates it to an incident bin, aiming to minimize the maximum load of the bins at the end.
\citet{KenthapadiP:SODA:2006}, \citet{PeresTW:RSA:2015}, and \citet{BansalF:STOC:2022} generalized the classical Balls into Bins results to the graphical model, under various structural conditions on the graph.
It is helpful to consider an alternative Poisson model, where two-choice balls $\{u, v\}$ arrive with rate $\lambda_{\{u,v\}}$.
By standard arguments (see \Cref{sec:asymptotic-equivalence}), the original graphical model is asymptotically equivalent to the Poisson model with rate $\lambda_{\{u,v\}} = \frac{T}{|E|}$ if $\{u, v\} \in E$, and $0$ otherwise.
%

On the other hand, several recent algorithms for Online Stochastic Matching adopted a relax-and-round approach.
For each incoming impression/ball, the algorithm sub-samples two potential advertisers/bins based on the solution to a linear program (LP) relaxation, independent of previous arrivals and allocations. 
This can be seen as a half-integral allocation or a two-choice ball incident to two bins.
Since the balls arrive by a Poisson process, the sub-sampled two-choice balls are also Poisson.
Then, a rounding algorithm allocates the ball to one of the two sub-sampled bins.
\citet{ChenHS:FOCS:2024} designed such rounding algorithms, which they called Stochastic Online Correlated Selection, leading to improved algorithms for Online Stochastic Matching and several generalizations.
The rounding for matching can be seen as a Balls into Bins algorithm but with a different objective: maximizing the matching size, i.e., the sum of the indicators for the bins' non-emptiness.

As a comprehensive framework, the Generalized Balls into Bins problem also considers a set of bins and a set of two-choice balls arriving by a Poisson process.
When each ball arrives, the online algorithm must immediately allocate it to an incident bin.
As a baseline, the Random algorithm allocates each ball to a random incident bin.
If a bin $v$'s (normalized) degree is $d_v$, defined as half the sum of its incident balls' arrival rates, then its load $\load_v$ follows the Poisson distribution with rate $d_v$, denoted as $\Poisson(d_v)$.
We are interested in algorithms that ``balance the loads''.

How should we measure balancedness to capture both the maximum load in Balls into Bins, and the bins' non-emptiness in Online Stochastic Matching?
Our first approach considers the specific function $f$ that we want to maximize/minimize, and ensures that the expectation of $f(\load_v)$ is at least/at most a bound $g(d_v)$ that depends on the bin's degree $d_v$.
The bound should be strictly better than the baseline, i.e., the expectation of $f(\load_v)$ when $\load_v$ follows $\Poisson(d_v)$.
In many problems, it is sufficient to consider the sum of $f(\load_v)$ over all bins $v$. 
If an algorithm ensures $\sum_v f(\load_v)$ is at least/at most $\sum_v g(d_v)$, we say that it guarantees an \emph{amortized} bound $g$.

Instead of focusing on a specific function $f$, we can also measure balancedness through the notion of second-order stochastic dominance, which is widely adopted in Economics for measuring risks.
For some distributions $D(d_v)$ that are parameterized by the degree $d_v$ and dominate the baseline $\Poisson(d_v)$, we ensure that the distribution of bin $v$'s load dominates $D(d_v)$.
This is a stronger notion than the first one, in the sense that it implies bound $g(d_v) = \E_{\load_v \sim D(d_v)} f(\load_v)$ for all concave/convex functions $f$ simultaneously.


\begin{shaded}
\vspace{-6pt}
\begin{example}[Matching]
	\label{exa:matching}
    The maximization problem with concave function $f(\load) = \min \big\{ \load, 1 \big\}$ corresponds to maximizing the cardinality of a matching between balls and bins, with each bin matched to the first ball allocated to it.
    We will further study a generalization that considers function $f(\load) = \min \big\{ \load, c \big\}$ for some positive integer capacity $c$.
    This corresponds to a $c$-matching problem where each bin can be matched up to $c$ times.
\end{example}

\begin{example}[Completion Time Minimization]
	\label{exa:completion-time}
    Suppose that each ball is a unit-size job and the incident bins are machines that can process the job.
    The algorithm allocates jobs arriving online to incident machines, after which the machines process the allocated jobs one by one.
    Minimizing the total completion time of the jobs corresponds to the minimization problem with convex function $f(\load) = \frac{\load^2 + \load}{2}$, since the completion time of the $\load$ jobs allocated to a machine are $1$ to $\load$.
\end{example}


\begin{example}[Load Balancing]
    Again, let the balls be unit-size jobs, and the bins be machines.
    The Balls into Bins problem considers minimizing the maximum load $\max_{v \in V} \load_v$.
    It is known that the softmax function:
    \[
        \frac{1}{\eta} \ln \sum_{v \in V} e^{\eta \load_v}
    \]
    is a $\frac{1}{\eta} \ln |V|$ additive approximation to the maximum load.
    Indeed, bounding the expectation of $\sum_{v \in V} e^{\eta \load_v}$ is a crucial step in the analysis by \citet{TalwarW:ICALP:2014}.
    Further, if one could bound the expectation of $\sum_{v \in V} e^{\eta \load_v}$, a bound on the expected maximum load would follow since:
    \[
    	\E \frac{1}{\eta} \ln \sum_{v \in V} e^{\eta \load_v} ~\le~ \frac{1}{\eta} \ln \E \sum_{v \in V} e^{\eta \load_v}~.
    \]
%
	\vspace{-20pt}
\end{example}
\end{shaded}

The literature has also considered the heavily loaded regime with a large $\bar{\load} = \E \load_v$ for all bins, where the goal is to bound the maximum deviation from the expectation $\max_{v \in V} \big| \load_v - \bar{\load} \big|$.
	This can be captured by considering the following potentials in the literature:
	\[
		\sum_{v \in V} e^{\eta |\load_v - \bar{\load}|} ~\le~ \sum_{v \in V} \Big( e^{\eta ( \load_v - \bar{\load} )} + e^{\eta ( \bar{\load} - \load_v )} \Big)
		~.
	\]
	We remark that the Poisson arrival model is unsuitable for making progress in the heavily loaded regime, because the maximum deviation is strictly worse than the known bounds in the discrete-time model, due to the difference between the realized number of balls and its expectation.

\subsection{Technical Contribution}

\paragraph{Amortized Bounds via Greedy.}
We first study the simple Greedy algorithm that allocates each ball to the less loaded incident bin.
It is easy to find examples for which Greedy fails to give a non-trivial bound without amortization.
Hence, we will focus on amortized bounds.

\begin{informal}[\Cref{cor:concave-amortized,cor:convex-amortized}]
	For bounded degree instances, Greedy ensures for any convex/concave function $f$ that $\sum_v \E f(\load_v)$ is at least as good as:
	\[
		\sum_v \, \E_{j \sim \mathrm{Poisson}(2d)} \frac{f( \lfloor \nicefrac{j}{2} \rfloor ) + f( \lceil \nicefrac{j}{2} \rceil )}{2d}
        \cdot d_v
        ~.
	\] 
\end{informal}

The above bound matches the best achievable in the simple instance of two bins and one ball between them with rate $2d$. 
In this sense and to our surprise, Greedy is optimal simultaneously for all convex and concave functions on degree-$d$ instances.
We stress that instances with bounded degrees, especially those with $d = 1$, are of prominent importance in all three examples above.


\paragraph{Potential-based Amortized Analysis.}
We prove the above amortized bounds by a potential-based analysis, which may be of independent interest.
We construct a potential $\Phi_e(t)$ for each ball $e$, and consider an overall potential function $\Phi(t) = \sum_e \Phi_e(t)$.
We show that the sum of the algorithm's objective and the potential function is monotone over time, and the potential function diminishes to zero by the end of the time horizon.
This means the algorithm's expected objective at the end is at least the initial potential function value $\Phi(0)$.

The potential of a ball $e$ at time $t$ depends on the loads of $e$'s incident bins.
It captures how much the incident bins' function values $f(\load_v)$ would increase, should ball $e$ arrive in the remaining time horizon with rate $2d$, the maximum rate possible given the bounded degree, and assuming no arrivals of the other balls.
See \Cref{sec:greedy} for the precise definition.

This is in part inspired by the recent potential-based and differential-inequality-based analyses by \citet{HuangWINE:2024,HuangSY:STOC:2022}, in the sense that their analyses and ours both involve proving a quadratic inequality about the arrival rates of balls incident to a bin.
Compared to the counterparts in prior works, our potential is simpler and has an intuitive meaning as described above.
Our quadratic inequality is also simpler and reduces to Chebyshev's sum inequality. 
The simplicity and cleanness allow us to handle general convex/concave functions, while prior works only apply to matching.

\paragraph{Approximate MDP Algorithm.}
Generalized Balls into Bins can be seen as a Markov decision process (MDP), and thus, the optimal online policy is characterized by a dynamic program with an exponentially large state space.
From this perspective, we can interpret $\Phi(t)$ as an approximation to the value-to-go/cost-to-go function of the dynamic program, satisfying the Hamilton–Jacobi–Bellman equation with inequality.
As a result, an approximate MDP algorithm that replaces the value-to-go/cost-to-go function with $\Phi(t)$ gives the same optimal amortized bounds.

Further, this draws an interesting connection to a long line of research on approximate dynamic programming algorithms for MDP from Operations Research~\cite{Adelman:OPRE:2007,deFariasVR:OPRE:2003,ZhangA:TS:2009,deFariasVR:2007,Zhang:MSOM:2011,MeissnerS:EJOR:2012,TongT:JoC:2014,KunnumkalT:MOR:2016,ZhangSZ:OPRE:2022,MaRST:OPRE:2020,Jiang:2023,LiuVR:MSOM:2008}.
Most of them studied the network revenue maximization problems, which can be viewed as stochastic models of online packing, related to but different from our problem.
Nevertheless, rephrasing them in our context, they approximate the value-to-go function by a linear combination of basis functions, and almost all of them consider one basis function for each resource/bin $v$, i.e., $\Phi(t) = \sum_v \Phi_v(t)$.
Our work, by contrast, constructs a basis function for each ball/pair of bins, and demonstrates the effectiveness of these bases with the optimal amortized bounds for a broad class of Generalized Balls into Bins problems.
Intuitively, the ball-wise basis functions capture the pairwise substitutability of bins, i.e., ``the power of two choices,'' while the bin-wise basis functions in previous works cannot.
We believe this novel class of basis functions may lead to further related research in the literature on approximate MDP algorithms.

\paragraph{Non-amortized Algorithm.}
In \Cref{sec:mark-swap}, we complement the amortized optimality of Greedy by designing an algorithm called Mark-and-Swap that achieves non-trivial bounds for general convex/concave functions without amortization, through second-order stochastic dominance.

\begin{informal}[\Cref{lem:mark-swap-worst-case} and \Cref{thm:mark-swap-worst-case}]
	For some distributions $\MarkSwap(d)$ that stochastically dominate $\Poisson(d)$ for any $d \ge 0$, Mark-and-Swap ensures that the load $\load_v$ of any bin $v$ stochastically dominates $\MarkSwap(d_v)$.
\end{informal}

The table below compares the bounds given by the baseline Random algorithm and by the Mark-and-Swap and Greedy algorithms, when the bins have degrees $1$.
We examine the three functions from the aforementioned examples. 
For load balancing, we let $\eta = 1$ for ease of demonstration.

\begin{table}[h]
\renewcommand{\arraystretch}{1.3}
\centering
\begin{tabular}{llx{2.7cm}x{2.7cm}x{2.7cm}}
	\toprule
	& & Random & Mark-and-Swap & Greedy {\footnotesize (amortized)} \\
	\midrule
	Matching & $f(\load) = \min \{\load, 1\}$ & $1-\frac{1}{e}\approx 0.63$ & $1-\frac{2}{e^2} \approx 0.73$ & $1-\frac{2}{e^2} \approx 0.73$\\
	Completion time & $f(\load) = \frac{\load^2+\load}{2}$ & $1.5$ & $\frac{3}{2} - \frac{1}{e ^ {2}}\approx 1.36$ & $\frac{21}{16} - \frac{1}{16e^4} \approx 1.31$ \\
	Load balancing & $f(\load) = e^{\load}$ & $e^{e-1} \approx 5.57$ & $\approx 4.76$ & $\approx 3.89$ \\
	\bottomrule
\end{tabular}	
\end{table}

Furthermore, the non-amortized analysis of Mark-and-Swap is easier to generalize than the amortized analysis of Greedy.
Indeed, we apply it to obtain a $1.435$-competitive algorithm for the Online Stochastic Completion Time Minimization problem (\Cref{thm:completion-time}), where jobs of different sizes arrive online and are allocated to a set of unrelated machines.
By contrast, independently rounding Skutella's quadratic program~\cite{Skutella:JACM:2001} is only $1.5$-competitive.
Besides replacing independent rounding with Mark-and-Swap, we also introduce an LP relaxation for the stochastic model, borrowing ideas from the Online Stochastic Matching problem.
See \Cref{sec:completion-time} for details.

\subsection{Future Directions}


\paragraph{Multi-choice Balls.}
This paper focuses on two-choice balls due to their historic significance in the Balls into Bins and Online Stochastic Matching problems.
Nonetheless, it would be interesting to study allocating multi-choice balls to bins.
\Cref{sec:discussion} presents some preliminary results, including a generalization of Greedy and its amortized optimality to $k$-choice balls in $c$-matching, and a discussion on balls with mixed numbers of choices.

The superiority of multi-choice algorithms is known for matching.
\citet{HuangSY:STOC:2022} showed that $2$-choice algorithms are at best $0.706$-competitive, and gave a state-of-the-art $0.716$-competitive multi-choice algorithm. 
We leave it for future research the design of algorithms that allocate multi-choice balls effectively, and the decomposition of LP solution into multi-choice balls (see \Cref{sec:completion-time} and \citet{ChenHS:FOCS:2024} for some examples of decompositions into two-choice balls).

\paragraph{Improved Bounds for Well Connected Graphs.}
This paper parameterizes the bounds by the degree of the bin, which is local information that does not capture the global structure of the graph.
By contrast, \citet{BansalF:STOC:2022} studied the graphical Balls into Bins problem and found that an algorithm's ability to balance the loads depends on the connectedness of the graph. 
Indeed, the results in this paper do not provide an asymptotic improvement over the $\log n / \log\log n$ bound for the maximum load, as such an improvement is impossible for an arbitrary graph.
It is a natural and important next step to consider algorithms with better bounds for well-connected graphs.
Progress along this line will likely also lead to better algorithms for problems such as matching and completion time minimization.

\paragraph{Non-homogeneous Poisson Arrivals.}
\citet{ChenHS:FOCS:2024} gave a two-choice Stochastic Online Correlated Selection algorithm, which in our context provides the same bound of $1 - \frac{2}{e^2}$ as Mark-and-Swap and Greedy for $f(\load) = \min \{\load, 1\}$ when the bins have degrees $1$.
Further, their bound holds more generally for non-homogeneous Poisson arrivals.
On one hand, the baseline Random algorithm and the resulting load distribution $\Poisson(d)$ extend to the non-homogeneous model.
On the other hand, the analyses of Greedy and Mark-and-Swap in this paper fail to generalize.
Designing algorithms with non-trivial bounds for non-homogeneous arrivals is another interesting research direction.

\subsection{Other Related Work}

\paragraph{Balls into Bins.}
\citet{BerenbrinkCSV:STOC:2000} and \citet{TalwarW:ICALP:2014} studied the heavily loaded regime, where there are much more balls than bins.
\citet{Mitzenmacher:TPDS:2001} and \citet{ColeFMMRSU:1998} considered models in which balls depart over time.
Last but not least, \citet{Vocking:JACM:2003} and \citet{Godfrey:SODA:2008} investigated algorithms for multi-choice balls.

\paragraph{Online Stochastic Matching.}
Since \citet{FeldmanMMM:FOCS:2009}, there has been a series of results~\cite{BahmaniK:ESA:2010,HaeuplerMZ:WINE:2011,ManshadiOS:MOR:2012,JailletL:MOR:2014,BrubachSSX:ESA:2016,HuangS:STOC:2021} on two-choice algorithms for Online Stochastic Matching.
The state-of-the-art competitive ratios for unweighted matching are $0.716$ for homogeneous arrivals via a multi-choice algorithm~\cite{HuangSY:STOC:2022} and $0.69$ for non-homogeneous arrivals via a two-choice algorithm~\cite{ChenHS:FOCS:2024}.
\citet{ChenHS:FOCS:2024} also gave algorithms with competitive ratios better than $1-\frac{1}{e}$ for Display Ads and AdWords in the non-homogeneous Poisson model.
We refer readers to \citet{Mehta:FTTCS:2013} and \citet{HuangTW:SIGecom:2024} for further references on online matching and related problems.

\paragraph{Completion Time Minimization.}
Online Stochastic Completion Time Minimization studies online scheduling for completion time minimization, with unrelated machines and stochastically generated jobs released one by one at time $0$.
The case of zero or common release time has been studied for various scheduling problems in both the offline setting~\cite{LeungLPZ:IPL:2007,Skutella:JACM:2001,SchulzS:SIDMA:2002} and the online setting~\cite{BampisKLP:ISAAC:2023,BampisLMZ:COCOON:2012,AlbersMS:SPAA:2007}.
There were no prior results on the stochastic model considered in this paper.

Interestingly, the $1.5$ ratio from independently rounding Skutella's quadratic program is also a natural barrier for offline completion time minimization.
This barrier was first broken by \citet{BansalSS:STOC:2016} using dependent rounding, followed by a series of improvements~\cite{Li:SICOMP:2020,ImS:SODA:2020,ImL:SODA:2023,Harris:SODA:2024,Li:SODA:2025}.

\section{Preliminaries}
\label{sec:prelim}

For any real number $x \in \R$, we write $\lfloor x \rfloor$, $\lceil x \rceil$, and $\{ x \}$ for the greatest integer less than or equal to $x$, the smallest integer greater than or equal to $x$, and the fractional part of $x$.
We write $z^+$ for function $\max \{ z, 0 \}$.


We write $X \sim D$ to denote that a random variable $X$ follows distribution $D$.
We will use $X$ and $D$ interchangeably for notational convenience when the meaning is clear from the context.
For example, $\E X = \E D$ and $\Var X = \Var D$ are the expectation and variance of random variable $X$/distribution $D$.
For any distribution $D$ and any real-valued function $f$ defined on the support of $D$, we write $f(D) \defeq \E_{X \sim D} f(X)$ for the expectation of the function.

Let $\Poisson(\lambda)$ and $\Exponential(\lambda)$ be the Poisson and exponential distributions with rate $\lambda > 0$.
Let $\Bernoulli(p)$ be the Bernoulli distribution that takes value $1$ with probability $0 \le p \le 1$.

For any two distributions $D_1$ and $D_2$, let $D_1 + D_2$ denote the distribution of $X_1 + X_2$ with $X_1$ drawn from $D_1$ and $X_2$ drawn from $D_2$ independently.
Similarly, for any distribution $D$ and any positive integer $k \ge 1$, let $k \cdot D$ denote the distribution of $X_1 + X_2 + \dots + X_k$ with $X_i$ drawn from $D$ independently for $1 \le i \le k$.

A random variable $X \sim D_X$ \emph{second-order stochastically dominates} another random variable $Y \sim D_Y$, written as $X \succeq Y$ or $D_X \succeq D_Y$, if $\E f(X) \ge \E f(Y)$ for any non-decreasing concave function $f$.
If the two variables have the same expectations, we have an equivalent definition:
there is a random variable $Z$ such that (1) $X+Z$ follows the same distribution as $Y$, and (2) $\E \,[\, Z \mid X \,] = 0$ for any realization of $X$.
Following the terminology in Statistics and Economics, we say that $Y$ is a \emph{mean-preserving spread} of $X$.
See, e.g., \citet{RothschildS:JET:1970}.

\section{Model of Generalized Balls into Bins}

Consider a set of bins $V$ and a set of two-choice balls $E = \binom{V}{2}$.
Balls $e \in E$ arrive by a Poisson process with rates $\lambda_e \ge 0$ in time horizon $[0, 1]$.
The notations $V$ and $E$---inherited from the graphical model of Balls into Bins~\cite{BansalF:STOC:2022}---suggest that we interpret the balls and bins as the edges and vertices of a weighted undirected graph, whose weights are the rates of the balls.

When a ball $e = \{u, v\}$ arrives, the algorithm must immediately allocate it to an incident bin, i.e., either $u$ or $v$.
We write $e \sim v$ or $v \sim e$ to denote that ball $e$ is incident to bin $v$.
We further define the (normalized) \emph{degree} of a bin $v \in V$ as:
\[
    d_v \defeq \frac{1}{2} \sum_{e \sim v} \lambda_e
    ~.
\]


Let $\load_v$ be a random variable denoting the number of balls allocated to a bin $v \in V$ at the end, which we will refer to as bin $v$'s \emph{load}.
We want to maximize (respectively, minimize) the expected sum of a concave (respectively, convex) function $f : \Z_{\ge 0} \to \R_{\ge 0}$ applied to the bins' loads, i.e.:
\[
    \E \, \sum_{v \in V} f(\load_v)
    ~.
\]

We only consider convex functions $f$ whose expectations $f\big(\Poisson(d)\big)$ are well defined for the baseline distribution $\Poisson(d)$ for any degree $d \ge 0$.
%
%
Consider any minimization problem with a convex function $f$. 
We say that an online algorithm guarantees a bound $g : \R_{\ge 0} \to \R_{\ge 0}$ for function $f$, if for every bin $v \in V$ we have:
\[
	\E f(\load_v) \le g(d_v)
	~.
\]

We say that it guarantees an amortized bound $g$ if:
\[
	\E \sum_{v \in V} f(\load_v) \le \sum_{v \in V} g(d_v)
	~.
\]


%
%

We define the same concepts for maximization problems with concave functions $f$ by changing the directions of the inequalities.

\subsection{Extension: Beyond Two Choices}

We can extend the model to study balls incident to $k$ bins.
Consider a set of bins $V$ and the set of $k$-choice balls $E = \binom{V}{k}$.
Balls $e \in E$ arrive by a Poisson process with rates $\lambda_e$ in time horizon $[0, 1]$.
In other words, we consider a weighted undirected $k$-uniform hypergraph, whose hyperedges and vertices correspond to the $k$-choice balls and bins.

The definitions in the basic model directly apply, except that the degree of a bin $v \in V$ is normalized differently:
\[
	d_v = \frac{1}{k} \sum_{e \sim v} \lambda_e
	~.
\]

Even more generally, we may consider a model in which balls might be incident to different numbers of bins.
In this case, we view the balls and bins as the hyperedges and vertices of a possibly non-uniform hypergraph.
We define the degree of a bin $v \in V$ to be:
\[
	d_v = \sum_{e \sim v} \frac{1}{|e|} \lambda_e
	~,
\]
where $|e|$ denotes the size of a hyperedge $e$.

\subsection{Extension: Combinatorial Functions}

The next extension associates each bin with a combinatorial function over multisets of incident balls.
We want to minimize or maximize the sum of the bins' function values over the allocated balls.
Consider a set of bins $V$ and a set of $2$-choice balls $E$.
Note that $E$ may not be $\binom{V}{2}$ in this extension because we allow multiple edges/balls between two vertices/bins if they are distinct for the combinatorial functions.

While we leave the general form of this extension for future research, we highlight two special combinatorial functions that are linear combinations of the functions from \Cref{exa:matching,exa:completion-time} over nested subsets of incident balls.

\begin{example}[Edge-weighted Matching with Free Disposal, a.k.a., Display Ads]
	Let there be a positive weight $w_{e,v} \ge 0$ between any bin $v$ and any incident ball $e \sim v$.
	A bin $v$'s combinatorial function maps a multiset $S$ of incident balls allocated to it to the maximum weight, i.e.:
	\[
		f_v(S) \,=\, \max_{e \in S} w_{e,v}
		~.
	\]
	This corresponds to a maximum weight bipartite matching problem, where each bin is matched to the heaviest ball allocated to it.
	
	We may rewrite this objective by considering bin $v$'s load for each weight-level $w > 0$:
	\[
		\load_v(w) \defeq \Big| \big\{ e \in S : w_{e,v} \ge w \big\} \Big|
		~,
	\]
	i.e., the number of balls allocated to $v$ whose weights are at least $w$.
	Then, we have:
	\[
		f_v(S) = \int_0^\infty \min \big\{ \load_v(w) , 1 \big\} \,\dif{w}
		~.
	\]
\end{example}

\begin{example}[Completion Time Minimization with General Job Sizes]
	Let $s_{e,v} \ge 0$ denote the processing time required for machine $v$ to complete job $e$.
	Given a set of jobs $S$ allocated to a machine $v$, processing the jobs in ascending order of their sizes, known as Smith's Rule, minimizes their total completion time.
	Suppose there are $k$ jobs in $S$ with sizes $s_1 \le s_2 \le \dots \le s_k$.
	The minimized total completion time is:
	\[
		f_v(S) = s_1 + (s_1+s_2) + \dots + (s_1 + s_2 + \dots + s_k)
		~.
	\]
	
	Let $\load_v(s)$ be the number of jobs of sizes at least $s$ that are allocated to $v$.
	The above objective can be rewritten as:
	\begin{equation}
		\label{eqn:completion-time-by-size}
		f_v(S) = \int_0^\infty \frac{\load_v(s)^2 + \load_v(s)}{2} \,\dif{s}
		~.
	\end{equation}
\end{example}

\section{Greedy Algorithm and Potential-based Amortized Analysis}
\label{sec:greedy}

This section considers the simple Greedy algorithm defined as follows.

\begin{algorithm}{Greedy Algorithm}
    Allocate each ball to the less loaded incident bin, breaking ties uniformly at random.
\end{algorithm}

The Greedy algorithm has been studied extensively in the literature.
For a 1-regular complete graph, the seminal ``power of two choices'' paper by \citet{AzarBKU:SODA:1994} showed that Greedy yields a maximum load at most $\log \log |V| + O(1)$ with high probability;
At the same time, the baseline Random algorithm gets maximum load at most $(1 + o(1))\ln |V| / \ln \ln |V|$ with high probability.
Further, the Greedy algorithm resembles the Balance algorithm for the online $c$-matching problem and its generalization (e.g., \citet{KalyanasundaramP:TCS:2000}).

\subsubsection*{Impossibility of Non-amortized Bounds}

We first remark that Greedy cannot give any non-trivial bound without amortization. 
Consider a star graph with a root $r$ and $|V|-1$ leaves. 
For any leaf $\ell$, let the arrival rate of edge $\{r, \ell\}$ be $2$.
Although the degrees of the leaves are $1$, Greedy would allocate almost two balls to each leaf in expectation when $V$ is sufficiently large, because when each ball arrives the incident leaf would be less loaded than the root with high probability.
Hence, from each leaf's perspective, it gets a higher load from Greedy than from the baseline Random algorithm.

That being said, the heavier loads of the leaves are compensated by the lighter load of the root in this example.
The overall objective of Greedy is better than the baseline by Random.
Therefore, this section will focus on proving amortized bounds.

\subsubsection*{Potential-based Amortized Analysis}

Let $A(t)$ be a random variable that denotes Greedy's objective at time $t \in [0, 1]$. 
The next lemma summarizes the conditions needed by a potential-based amortized analysis.

\begin{lemma}
    \label{lem:potential-function-conditions}
    Consider any maximization (resp., minimization) problem with function $f : \Z_{\ge 0} \to \R_{\ge 0}$ and any bound $g : \R_{\ge 0} \to \R_{\ge 0}$.
    Suppose
    $\Phi: [0, 1] \to \R_{\ge 0}$ is a potential function such that:
    \begin{enumerate}
        \item $\Phi(0) \ge \sum_{v \in V} g(d_v)$ (resp., $\Phi(0) \le \sum_{v \in V} g(d_v)$);
        \hspace*{\fill} (initial condition)
        \item $\E \big[ \Phi(t) + A(t) \big]$ is non-decreasing (resp., non-increasing) over $t \in [0, 1]$;
        \hspace*{\fill} (monotonicity)
		\item $\Phi(1) = 0$.          
		\hspace*{\fill} (terminal condition)
    \end{enumerate}
    Then, the algorithm guarantees an amortized bound $g$.
\end{lemma}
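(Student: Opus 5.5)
The plan is a direct chaining of the three conditions; I treat the maximization case, and the minimization case follows by reversing every inequality. The first step is to pin down the endpoints of $A(t)$. At time $0$ no ball has arrived, so every load is $0$ and $A(0) = \sum_{v \in V} f(0)$. At time $1$ the loads are the final loads, so $A(1) = \sum_{v \in V} f(\load_v)$.

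Monotonicity then gives
\[
	\E \sum_{v \in V} f(\load_v) \;=\; \E\big[\Phi(1) + A(1)\big] \;\ge\; \E\big[\Phi(0) + A(0)\big] \;=\; \Phi(0) + \sum_{v \in V} f(0) ~.
\]
The first equality uses the terminal condition $\Phi(1) = 0$. Applying the initial condition yields $\E \sum_{v \in V} f(\load_v) \ge \sum_{v \in V} g(d_v) + |V| f(0)$. Since $f$ is nonnegative, this is at least $\sum_{v \in V} g(d_v)$, which is the amortized bound.

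The one point needing care is the convention for $A(0)$, and it is where the minimization case differs. If $f(0) > 0$, the extra term $|V| f(0)$ carries the wrong sign for minimization. I would resolve this by reading $A(t)$ as the objective accrued up to time $t$, i.e.\ $\sum_{v \in V} \big( f(\load_v(t)) - f(0) \big)$. Then $A(0) = 0$ and $A(1) = \sum_{v \in V} f(\load_v) - |V| f(0)$. With this convention, the chain above gives $\E \sum_{v \in V} f(\load_v) \le \Phi(0) + |V| f(0)$, so $\Phi(0)$ must be compared to $\sum_{v \in V} \big(g(d_v) - f(0)\big)$.

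For the examples in the paper ($\min\{\load, c\}$ and $\frac{\load^2 + \load}{2}$) we have $f(0) = 0$, so both conventions coincide. I would state that normalization explicitly, or equivalently assume WLOG $f(0) = 0$ by shifting $f$ and $g$ by the constant $f(0)$, under which the claimed bound is invariant. With that normalization the rest is immediate, and no other obstacle arises.
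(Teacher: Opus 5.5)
Your argument is the paper's proof: terminal condition, then monotonicity, then $A(0)$, then the initial condition. The paper simply writes $A(0)=0$ ``by definition,'' which tacitly presumes the normalization $f(0)=0$ that you make explicit, and you are right that the maximization case survives $f(0)>0$ because $|V|f(0)\ge 0$. One correction: shifting $f$ and $g$ by $f(0)$ is not ``WLOG.'' It leaves the conclusion invariant but strengthens the initial condition to $\Phi(0)\le\sum_{v\in V}\big(g(d_v)-f(0)\big)$, so for minimization $f(0)=0$ is a genuine extra hypothesis rather than a reduction (the stated lemma fails for $f\equiv 1$, $g\equiv 0$, $\Phi\equiv 0$).
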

\begin{proof}
	The lemma for a maximization problem follows by:
    \begin{align*}
        \E A(1) 
        &
        = \E \bigl[ \Phi(1) + A(1) \bigr] 
        \tag{terminal condition} \\[.75ex]
        &
        \ge \E \bigl[ \Phi(0) + A(0) \bigr]
        \tag{monotonicity} \\[.75ex]
        &
        = \Phi(0) 
        \tag{$A(0) = 0$ by definition}\\
        &
        \ge \sum_{v \in V} g(d_v)
        ~.
        \tag{starting condition}
    \end{align*}
    
    The counterpart for minimization problems is almost identical, except having the inequalities in the opposite direction.
\end{proof}

\subsection{Analysis Framework}
\label{sec:2-choice-framework}


We next present a framework for analyzing the two‑choice setting in the basic model.
It naturally generalizes to both multi-choice and mixed‑choice settings, which we defer to \Cref{sec:general-framework}.

\paragraph{Separable Potential Function.}
We will design a potential for each ball $e$, denoted as $\Phi_e : [0, 1] \to \R_{\ge 0}$.
Then, we will consider the overall potential obtained by summing over all balls:
\[
    \Phi(t) \,\defeq\, \sum_{e \in E} \lambda_e \cdot \Phi_e(t)
	~.
\]

Recall that $\load_v$ denotes the load of bin $v$, i.e., the number of balls in it.
Denote the load pair of each ball $e = \{u, v\}$ by an unordered pair $S_e = (\load_u, \load_v)$.
The potential of ball $e$ at time $t$ depends on its current load pair. 
That is, for functions $\Phi^{S} : [0, 1] \to \R_{\ge 0}$ to be determined, we let:
\[
	\Phi_e(t) = \Phi^{S_e}(t)
	~.
\]



\paragraph{Local Algorithms.}
We will consider algorithms that allocate each incoming ball $e$ based on the local information of its current load pair. 
For example, Greedy is local since it allocates each ball to the less loaded incident bin, breaking ties randomly.
In general, let $x^{S_e}_v$ denote the probability that a local algorithm allocates an arrived ball $e$ with load pair $S_e$ to bin $v$.


\begin{theorem}
	\label{thm:potential-based-general}
	Consider a maximization problem with function $f : \Z_{\ge 0} \to \R_{\ge 0}$, and a bound $g : \R_{\ge 0} \to \R_{\ge 0}$.
	The Greedy algorithm guarantees an amortized bound $g$ if there are potential functions $\Phi^{S}$ satisfying:
	\begin{enumerate}
		\item For bin $v$ and its degree $d_v$, $\Phi^{(0,0)}(0) \,\ge\, \dfrac{g(d_v)}{d_v}$.
		\hspace*{\fill} (starting condition)
        \vspace{-6pt}
		\item For any load pair $S$, $\Phi^S(1) = 0$.
		\hspace*{\fill} (terminal condition)
		\item For any bin $v \in V$ and its load $\load_v$, and any load pair $S = (\load_v, \load_u)$:
		\hspace*{\fill} (monotonicity)
		%
		%
        \[
			   f(\load_v + 1) - f(\load_v) + \frac{\dif}{\dif{t}} \Phi^{(\load_v, \load_u)}(t)
			\,\ge\, 
			 2d_v \, \Big( \Phi^{(\load_v, \load_u)}(t) - \Phi^{(\load_v+1, \load_u)}(t) \Big)
			~.
		\]
		\item For any bin $v \in V$ and its load $\load_v$, the change of the potential from increasing $\load_v$, i.e.:
        \[
            \Phi^{(\load_v, \load_u)}(t) - \Phi^{(\load_v+1, \load_u)}(t)
        \]
        is non-decreasing in $\load_u$.
		\hspace*{\fill} (Chebyshev's condition)
	\end{enumerate}
	The theorem also holds for minimization problems, by changing the directions of the inequalities in 1) and 3), and by changing the conclusion of 4) from non-decreasing to non-increasing.
\end{theorem}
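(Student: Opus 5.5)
We reduce the theorem to \Cref{lem:potential-function-conditions} with $\Phi(t) = \sum_e \lambda_e \Phi^{S_e}(t)$, checking its three conditions in turn.

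\textbf{Initial condition.} At time $0$ every load pair is $(0,0)$. Using $\sum_{e \sim v} \lambda_e = 2d_v$, we get
\[
\Phi(0) = \sum_e \lambda_e \Phi^{(0,0)}(0) = \frac12 \sum_v \sum_{e\sim v}\lambda_e \Phi^{(0,0)}(0) = \sum_v d_v \Phi^{(0,0)}(0).
\]
By the starting condition this is at least $\sum_v g(d_v)$. Bins with $d_v = 0$ contribute nothing and can be discarded.

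\textbf{Terminal condition.} Since $\Phi^S(1)=0$ for every load pair $S$, we have $\Phi(1)=0$ immediately.

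\textbf{Monotonicity.} We show that the conditional drift of $\Phi(t)+A(t)$, given the current loads, is nonnegative. On an arrival of ball $e=\{u,v\}$ with $\load_v<\load_u$, Greedy sends the ball to $v$. Three things change.
\begin{itemize}
\item[(a)] $A$ gains $f(\load_v+1)-f(\load_v)$.
\item[(b)] The potential of every ball $e'=\{v,w\}\sim v$, including $e$ itself, changes by $-\lambda_{e'}\bigl(\Phi^{(\load_v,\load_w)}(t)-\Phi^{(\load_v+1,\load_w)}(t)\bigr)$.
\item[(c)] Between arrivals, $\Phi$ drifts by $\sum_e \lambda_e \tfrac{\dif}{\dif t}\Phi^{S_e}(t)$.
\end{itemize}
Ties split the same terms evenly between $u$ and $v$.

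For ball $e$, assign its own time derivative to the receiving side. Assign it fully to $v$ if $\load_v<\load_u$, and half to each side on a tie. Writing $\Delta_v(w) \defeq \Phi^{(\load_v,\load_w)}-\Phi^{(\load_v+1,\load_w)}$, the total drift becomes
\[
\sum_v \sum_{e=\{v,u\}} \lambda_e\, x^{S_e}_v\Big[f(\load_v+1)-f(\load_v)+\tfrac{\dif}{\dif t}\Phi^{S_e}\Big] \;-\; \sum_v \Big(\sum_{e=\{v,u\}}\lambda_e x^{S_e}_v\Big)\Big(\sum_{e'=\{v,w\}}\lambda_{e'}\Delta_v(w)\Big).
\]
Here $x^{S_e}_v\in\{0,\tfrac12,1\}$ is Greedy's allocation probability.

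The plan is to compare this with the bin-wise monotonicity inequality (condition 3). Multiply condition 3 by $\lambda_e x^{S_e}_v$ and sum. This shows the first sum is at least
\[
\sum_v \sum_e \lambda_e x^{S_e}_v \cdot 2d_v\, \Delta_v(u).
\]
Since $2d_v=\sum_{e'\sim v}\lambda_{e'}$, it suffices to show for each bin $v$ that
\[
\Big(\sum_{e'}\lambda_{e'}\Big)\Big(\sum_e \lambda_e x^{S_e}_v \Delta_v(u_e)\Big) \;\ge\; \Big(\sum_e\lambda_e x^{S_e}_v\Big)\Big(\sum_{e'}\lambda_{e'}\Delta_v(w_{e'})\Big).
\]

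\textbf{The main obstacle: Chebyshev's sum inequality.} Establishing the last display is the hard step. Both $x^{S_e}_v$ and $\Delta_v(u_e)$ are non-decreasing functions of the neighbor's load $\load_u$.
\begin{itemize}
\item For $x^{S_e}_v$, Greedy gives $0$ if $\load_u<\load_v$, $\tfrac12$ if $\load_u=\load_v$, and $1$ if $\load_u>\load_v$.
\item For $\Delta_v(u_e)$, this is exactly condition 4.
\end{itemize}
Chebyshev's sum inequality for weights $\lambda_e$ then says that the weighted average of the product is at least the product of the weighted averages, which is precisely the display above. Thus the expected drift is nonnegative, and \Cref{lem:potential-function-conditions} applies.

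The delicate bookkeeping is in charging each ball's own derivative to the correct side. In particular, on ties the half-weights must match the half-contributions in (b), so that the pieces recombine into exactly the form required by condition 3.

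For minimization, every inequality flips. Condition 4 makes $\Delta$ non-increasing while $x$ stays non-decreasing, so Chebyshev's inequality reverses, as required.
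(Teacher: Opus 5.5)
Your proposal is correct and follows essentially the same route as the paper. You verify the three conditions of Lemma~\ref{lem:potential-function-conditions}, splitting the conditional drift into the objective gain, the allocation-induced potential drop, and the time derivative, with the latter charged to endpoints via $x^{S_e}_v$. You then apply condition 3 termwise and finish with weighted Chebyshev, using that Greedy's allocation probability and condition 4 are co-monotone in $\load_u$. The only difference is cosmetic: you work ball by ball with weights $\lambda_e$, whereas the paper first aggregates balls by load pair into $\Lambda^S_v$.
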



\begin{proof}
    We only prove the maximization version by verifying the conditions of \Cref{lem:potential-function-conditions}; the minimization version is almost verbatim.
	The starting and terminal conditions of \Cref{lem:potential-function-conditions} follow from the starting and terminal conditions of the theorem.
	
	It remains to verify the monotonicity condition of \Cref{lem:potential-function-conditions}, i.e.:
    \[
        \frac{\dif}{\dif{t}} \E \big[\Phi(t)+A(t)\big] \ge 0
        ~.
    \]
    We prove a stronger property that the inequality holds \emph{conditioned on what happened before time $t$}, in particular, the bins' current loads.
    
    We first introduce some notations. 
    Let $\loadeset_v$ be the set of load pairs of $v$'s incident balls, i.e.
    $\loadeset_v = \big\{ (\load_v, \load_u)~|~\{u, v\} \in E \big\}$.
    For any $S\in \loadeset_v$, let $\Lambda^{S}_v$ be the normalized total arrival rate of $v$'s incident balls with load pair $S$, i.e.:
    \[
    	\Lambda^{S}_v ~\defeq \frac{1}{2} \sum_{e \sim v \,:\, S_e = S} \lambda_e
    	~.
    \]

    By the definition of $\Lambda^{S}_v$ and the degree of bin $v$, we have:
    \begin{equation}
        \label{eqn:two-chioce-load-pair-rate}
        \sum_{S \in \loadeset_v} \Lambda_v^{S} = d_v
        ~.
    \end{equation}

	At time $t$, the algorithm allocates a ball to bin $v$ with rate $\sum_{S \in \loadeset_v} 2\Lambda^{S}_v x^{S}_v$,
	where $2\Lambda^{S}_v $ is the (unnormalized) arrival rate of balls $e \sim v$ with load sets $S_e = S$, and $x^{S}_v$ is the probability that Greedy allocates such a ball to bin $v$ conditioned on its arrival.
	
	Hence, the algorithm's objective changes by:
	\begin{equation}
		\label{eqn:2choice-algo}
		\frac{\dif{}}{\dif{t}} \E A(t) ~=~ \sum_{v \in V} 
        \sum_{S \in \loadeset_v} 2\Lambda^{S}_v x^{S}_v \, \Big( f(\load_v + 1) - f(\load_v) \Big)
		~.
	\end{equation}

	Further, when the algorithm allocates a ball to bin $v$, its potential decreases accordingly by $\sum_{S \in \loadeset_v} 2\Lambda^{S}_v  \big( \Phi^{S}(t) - \Phi^{S_+}(t) \big)$.
	Summing over $v \in V$, the decrease due to the allocation at time $t$ is:
	\begin{equation}
	    \label{eqn:2choice-pot-allocation}
	    \sum_{v\in V}\,
	    \biggl(\, 
	    	\sum_{S \in \loadeset_v} 2\Lambda^{S}_v x^{S}_v
		\,\biggr)
	    \biggl(\,
	    	\sum_{S \in \loadeset_v} 2\Lambda^{S}_v  \Big( \Phi^{S}(t) - \Phi^{S_+}(t) \Big)
    	\,\biggr)
	    ~.
	\end{equation}

	Finally, the potential also changes due to time elapsing, independent of the arrival and allocation at time $t$, by:
	\begin{equation}
	    \label{eqn:2choice-pot-time}
	    \sum_{e \in E} \lambda_e \cdot\frac{\dif{}}{\dif{t}} \Phi^{S_e}(t)
		\,=\,
		\sum_{ v \in V} \sum_{S \in \loadeset_v} 2\Lambda^{S}_v x^{S}_v \cdot \frac{\dif{}}{\dif{t}} \Phi^{S}(t)
	    ~.
	\end{equation}

	Combining \Cref{eqn:2choice-algo,eqn:2choice-pot-allocation,eqn:2choice-pot-time} and grouping the linear terms w.r.t.\ $\Lambda^S_v$, $\frac{\dif{}}{\dif{t}} \E \big[\Phi(t)+A(t)\big]$ is:
	\[
		\sum_{v \in V} 
		\Bigg( 
		    \sum_{S \in \loadeset_v} 2\Lambda^{S}_v x^{S}_v \,\Big( f(\load_v + 1) - f(\load_v) + \frac{\dif}{\dif{t}} \Phi^{S}(t) \Big)
			\,-\,
			\sum_{S \in \loadeset_v} 2\Lambda^{S}_v x^{S}_v
			\sum_{S \in \loadeset_v} 2\Lambda^{S}_v \Big( \Phi^{S}(t) - \Phi^{S_+}(t) \Big)
		\Bigg)
	    ~.
	\]

	It suffices to show non-negativity for every bin $v \in V$.
    Let $\alpha_S = \frac{\Lambda_v^S}{d_v}$, which sum to $1$ because of Eqn.~\eqref{eqn:two-chioce-load-pair-rate}.
	We can rewrite the desired inequality for bin $v$ as:
	\begin{align*}
		&
		\sum_{S \in \loadeset_v} \alpha_{S} \cdot 2d_v x^{S}_v \Big( f(\load_v + 1) - f(\load_v) + \frac{\dif}{\dif{t}} \Phi^{S}(t) \Big) \\
		& \qquad
		\ge\, 
		\bigg( \sum_{S \in \loadeset_v} \alpha_{S} \cdot 2d_v x^{S}_v \bigg)
		\bigg( \sum_{S \in \loadeset_v} \alpha_{S} \cdot 2d_v \Big( \Phi^{S}(t) - \Phi^{S_+}(t) \Big) \bigg)
		~.
	\end{align*}

    The probability of allocating to bin $v$, i.e., $x_v^{(\load_v, \load_u)}$ is non-decreasing in the other bin's load $\load_u$.
    In the meantime, $\Phi^{(\load_v, \load_u)}(t) - \Phi^{(\load_v+1, \load_u)}(t)$ is also non-decreasing in $\load_u$ by the last condition of the theorem.
	By Chebyshev's sum inequality, the right-hand-side is at most:
	\[
		\sum_{S \in \loadeset_v} \alpha_{S} 
		\cdot 2d_v x^{S}_v 
		\cdot
		2d_v \Big( \Phi^{S}(t) - \Phi^{S_+}(t) \Big)
		~.
	\]
	
	Hence, the inequality reduces to:
    \[
        \sum_{S \in \loadeset_v} \alpha_{S} \cdot 2d_v x^{S}_v \Big( f(\load_v + 1) - f(\load_v) + \frac{\dif}{\dif{t}} \Phi^{S}(t) \Big) 
        ~\ge~
        \sum_{S \in \loadeset_v} \alpha_{S} 
		\cdot 2d_v x^{S}_v 
		\cdot
		2d_v \Big( \Phi^{S}(t) - \Phi^{S_+}(t) \Big)
        ~,
    \]
    which is a weighted combination of the third condition of the theorem.
\end{proof}

\subsection{Applications}

Next, we consider several classes of functions $f$ to demonstrate the applications of \Cref{thm:potential-based-general}.
We focus on instances in which the degrees of bins are bounded by some positive number $d$, i.e., $d_v \le d$ for all bins $v \in V$.

\begin{shaded}
\vspace{-6pt}
\begin{example}[Concave Function]
    A function $f : \Z_{\ge 0} \to \R_{\ge 0}$ is \emph{concave} if for any $\load \in \Z_{\ge 0}$:
    \[
      f(\load+2) - f(\load+1) \le f(\load+1) - f(\load)
      ~.
    \]
    Observe that such a concave function with non-negative range must be \emph{non-decreasing}.
    Finally, we say that it is \emph{normalized} if $f(0) = 0$.
\end{example}
\vspace{-6pt}
\end{shaded}

\begin{corollary}
    \label{cor:concave-amortized}
    For any normalized concave function $f : \Z_{\ge 0} \to \R_{\ge 0}$, and any instance with degree upper bound $d$, 
    Greedy guarantees that the expected objective is at least:
    \begin{equation*}
        \E \sum_{v \in V}f(\load_v) ~\ge~
        \E_{j \sim \mathrm{Poisson}(2d)} \frac{f( \lfloor \nicefrac{j}{2} \rfloor ) + f( \lceil \nicefrac{j}{2} \rceil )}{2d}
        \cdot \sum_{v \in V} d_v ~.
    \end{equation*}
\end{corollary}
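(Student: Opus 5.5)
We plan to apply \Cref{thm:potential-based-general} with $g(d_v) = d_v \cdot \frac{1}{2d}\E_{j\sim\Poisson(2d)}\big[f(\lfloor j/2\rfloor)+f(\lceil j/2\rceil)\big]$. Following the intuition from the introduction, we define $\Phi^{(a,b)}(t)$ as the expected total increase of $f$ at the two bins if, over the remaining time $[t,1]$, only this single ball arrives, at rate $2d$, and each arrival is allocated greedily. Formally, let $\tau = 1-t$ and $N\sim\Poisson(2d\tau)$. From loads $(a,b)$, Greedy allocates $N$ balls so that the final pair is determined. We then set
\[
\Phi^{(a,b)}(t) \defeq \frac{1}{2d}\,\E_N\Big[ f(a') + f(b') - f(a) - f(b)\Big],
\]
where $(a',b')$ is the pair reached by greedy filling from $(a,b)$ with $N$ balls; random tie-breaking does not matter because $f(a')+f(b')$ is symmetric. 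The factor $\frac{1}{2d}$ is there because $\Phi=\sum_e\lambda_e\Phi_e$, and $\lambda_e$ corresponds to rate $2d$ when normalized.

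The terminal condition holds since $N=0$ at $t=1$. For the starting condition, from $(0,0)$ greedy filling with $j$ balls gives $(\lfloor j/2\rfloor,\lceil j/2\rceil)$. With $f(0)=0$ this gives $\Phi^{(0,0)}(0) = g(d)/d$. It therefore suffices to show that $g(d)/d \ge g(d_v)/d_v$, i.e.\ that $h(x)=\frac1x\E_{j\sim\Poisson(x)}[f(\lfloor j/2\rfloor)+f(\lceil j/2\rceil)]$ is non-increasing in $x$. This should follow because $F(j)=f(\lfloor j/2\rfloor)+f(\lceil j/2\rceil)$ is concave in $j$ with $F(0)=0$, and $\E F(\Poisson(x))/x$ is non-increasing for such $F$: writing $F(j)=\sum_{i<j}\Delta_i$ with $\Delta_i$ non-increasing, we get $\E F(\Poisson(x)) = \sum_i \Delta_i \Pr[\Poisson(x)>i]$, and $\Pr[\Poisson(x)>i]/x$ is non-increasing.

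For monotonicity (condition 3), the Kolmogorov backward equation for the single-ball process gives
\[
-\frac{d}{dt}\Phi^{(a,b)} = 2d\Big(\tfrac{1}{2d}\,\text{(gain of next step)} + \Phi^{\text{next}} - \Phi^{(a,b)}\Big).
\]
If $a\le b$, the next step goes to $a$, so $-\frac{d}{dt}\Phi^{(a,b)} = f(a+1)-f(a) + 2d\big(\Phi^{(a+1,b)}-\Phi^{(a,b)}\big)$ and condition 3 holds with equality, using $2d_v\le 2d$ together with the sign of $\Phi^{(a,b)}-\Phi^{(a+1,b)}$, which needs checking. If $a>b$, the next step goes to $b$, and the inequality needed becomes
\[
f(a+1)-f(a) - \big(f(b+1)-f(b)\big) + 2d\big(\Phi^{(a,b)}-\Phi^{(a,b+1)}\big) \ge 2d_v\big(\Phi^{(a,b)}-\Phi^{(a+1,b)}\big).
\]

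This requires comparing potentials from neighboring states, which we would do by coupling the same $N$. Greedy filling from $(a+1,b)$ versus $(a,b+1)$ when $a>b$ yields final pairs that are more balanced in the latter case. Concavity then gives $f$-sums that are ordered, and the difference telescopes against the marginal terms.

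Chebyshev's condition (condition 4) similarly asks that $\Phi^{(a,b)}-\Phi^{(a+1,b)}$ be non-decreasing in $b$. Intuitively, adding a ball at $a$ costs less future gain when $b$ is larger, because the future balls then go to $a$ less often.

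The main obstacle is these coupling and monotonicity comparisons for conditions 3 and 4. The approach we would try first is to express the final sum of $f$ under greedy filling via the sorted increments: filling from $(a,b)$ with $N$ balls collects the $N$ largest marginal gains from the merged non-increasing sequences $\{f(a+i+1)-f(a+i)\}_{i\ge0}$ and $\{f(b+i+1)-f(b+i)\}_{i\ge0}$. With this representation both comparisons should reduce to termwise inequalities between sorted sequences.
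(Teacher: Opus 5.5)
Your potential is the one the paper only sketches in the Remark after the proof of \Cref{cor:concave-amortized}, namely $\frac{1}{2d}\E\,\Delta f(j\mid S_e)$; for concave $f$, greedy filling is the optimal allocation of $j$ balls. The paper's actual proof takes a different route:
\begin{itemize}
\item it writes $f$ as a nonnegative combination of the functions $\min\{\load,c\}$ (\Cref{lem:concave-decompose});
\item it invokes \Cref{cor:c-matching}, whose potentials depend only on the total residual capacity, so conditions 3 and 4 become one-line identities;
\item it recombines via $\min\{\lfloor j/2\rfloor,c\}+\min\{\lceil j/2\rceil,c\}=\min\{j,2c\}$.
\end{itemize}
That reduction works because Greedy itself does not depend on $c$. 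Your direct route avoids the decomposition at the price of an order-statistics argument. It can be completed, but one step of your plan is false.

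\textbf{The case $a>b$ of condition 3 cannot be established by any coupling, because it is false.} As $t\to 1$ all potentials tend to $0$. Your displayed inequality then tends to $f(a+1)-f(a)\ge f(b+1)-f(b)$, which fails whenever $f(a+1)-f(a)<f(b+1)-f(b)$. Concretely, take $f=\min\{\load,1\}$ and $(a,b)=(1,0)$. The left-hand side is $-e^{-2d(1-t)}$, while the right-hand side is $0$ because $\Phi^{(1,0)}=\Phi^{(2,0)}$.

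The fix is to notice that the proof of \Cref{thm:potential-based-general} uses condition 3 only multiplied by $x^S_v$. After Chebyshev, what must be shown is
\[
\sum_S \alpha_S\, 2d_v\, x^S_v\Big[f(\load_v+1)-f(\load_v)+\tfrac{\mathrm{d}}{\mathrm{d}t}\Phi^S-2d_v\big(\Phi^S-\Phi^{S_+}\big)\Big]\ge 0,
\]
and Greedy has $x_v^{(a,b)}=0$ for $a>b$. So only $a\le b$ is needed. The paper's $c$-matching verification implicitly makes the same restriction, since its inequality also fails when $\load_v\ge c>\load_u$.

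For $a\le b$, your backward equation gives equality with $2d$ in place of $2d_v$. The remaining inequality $\Phi^{(a,b)}\ge\Phi^{(a+1,b)}$ holds as follows. Write $\delta_i=f(i+1)-f(i)$. The increment multiset $\{\delta_a,\delta_{a+1},\dots\}\cup\{\delta_b,\delta_{b+1},\dots\}$ contains the one for $(a+1,b)$, so its top-$n$ sum is at least as large for every $n$.

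Your plan for condition 4 does go through. Let $X$ be the increment multiset for $(a+1,b)$ and $x_{(n)}$ its $n$-th largest element. Then the greedy gains satisfy $G_{a,b}(n)-G_{a+1,b}(n)=(\delta_a-x_{(n)})^+$. Increasing $b$ deletes $\delta_b$ from $X$, which can only lower $x_{(n)}$.

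Finally, your starting-condition detour is unnecessary and points the wrong way. With your $g$, the ratio $g(d_v)/d_v$ does not depend on $d_v$ and equals $\Phi^{(0,0)}(0)$ exactly. Had you meant $g(d_v)/d_v=h(2d_v)$, you would need $h$ non-decreasing, whereas (as you correctly argue) it is non-increasing.
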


In fact, this amortized guarantee is optimal for $g(d_v = d)$, because no algorithm can achieve a better bound in the simple instance of two bins, and one ball with arrival rate $2d$ between them.

We remark that Greedy algorithm guarantees amortized optimality for maximizing general (not normalized) concave functions.
This can be seen by decomposing $f$ into the sum of a normalized concave function and a constant,
with the latter's contribution being independent of the algorithm's decision.

\begin{shaded}
\vspace{-6pt}
\begin{example}[Matching]
    Recall that the maximization problem with function $f(\load) = \min \{ \load, 1 \}$ corresponds to a matching between balls and bins.
    More generally, function $f (\load) = \min \{ \load, c \}$ for some positive integer capacity $c$ corresponds to $c$-matching.
\end{example}
\vspace{-6pt}
\end{shaded}

On the one hand, the $c$-matching functions are normalized and concave.
Hence, it is a special case of the previous example.
We prove the following result by constructing potential functions directly. See \Cref{app:c-matching}.

\begin{corollary}
    \label{cor:c-matching}
    For the $c$-matching problem and any instance with degree upper bound $d$,
    Greedy guarantees that the expected number of balls allocated to bins, each capped by $c$, is at least:
    \begin{equation*}
        \E_{j \sim \mathrm{Poisson}(2d)} \frac{\min \{ j, 2c \}}{2d}
        \cdot \sum_{v \in V} d_v~.
    \end{equation*}
\end{corollary}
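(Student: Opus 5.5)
We apply \Cref{thm:potential-based-general} with the bound $g(d_v) = d_v \cdot \E_{j\sim\Poisson(2d)} \min\{j,2c\}/(2d)$ and a potential that has the intuitive meaning described in the introduction. For a load pair $S=(a,b)$, $\Phi^{S}(t)$ should be the expected increase in $\min\{\load_u,c\}+\min\{\load_v,c\}$ when a single ball arrives at rate $2d$ during the remaining time $[t,1]$, is allocated greedily between two isolated bins with loads $a,b$, and no other balls arrive. Dividing by $2d$ puts it in the per-rate normalization used by the theorem.

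\textbf{Closed form.} Once the smaller bin reaches $c$, further arrivals add nothing. Greedy therefore gains exactly one unit per arrival until the pair reaches total capped load $2c$, provided both bins are at most $c$. When the loads are $(a,b)$ with $a\le c\le b$, the same holds with remaining capacity $c-a$. So define
\[
	\Phi^{(a,b)}(t) \defeq \frac{1}{2d}\,\E_{j\sim\Poisson(2d(1-t))} \min\bigl\{ j,\; r(a,b) \bigr\}
	~,\qquad r(a,b) = (c-a)^+ + (c-b)^+ .
\]
This covers every case, because greedy always fills the smaller bin first.

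\textbf{Checking the conditions.} The terminal condition is immediate. The starting condition holds because $\Phi^{(0,0)}(0) = \E\min\{j,2c\}/(2d) \ge g(d_v)/d_v$.

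For Chebyshev's condition, write $h(r,t)=\E_{j\sim\Poisson(2d(1-t))}\min\{j,r\}$. Then $h(r,t)-h(r-1,t)=\Pr[j\ge r]$. For $a<c$,
\[
	\Phi^{(a,b)}-\Phi^{(a+1,b)} = \frac{1}{2d}\Pr\bigl[j\ge r(a,b)\bigr],
\]
and this difference is $0$ when $a\ge c$. Increasing $b$ weakly decreases $r(a,b)$, so the probability weakly increases, which is the required non-decreasing dependence.

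For monotonicity, the Poisson Kolmogorov equation gives
\[
	\tfrac{\dif}{\dif t} h(r,t) = -2d\,\bigl(h(r,t)-h(r-1,t)\bigr) = -2d\Pr[j\ge r].
\]
Hence $\frac{\dif}{\dif t}\Phi^{(a,b)} = -\Pr[j\ge r]$. Suppose $a<c$. The left side of condition 3 is $1-\Pr[j\ge r]=\Pr[j<r]$, and the right side is $\frac{2d_v}{2d}\Pr[j\ge r]$. So we must show $\Pr[j<r] \ge \frac{d_v}{d}\Pr[j\ge r]$.

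\textbf{Main obstacle.} This inequality is false in general, for example when $r$ is small and $t$ is near $0$. The backward equation says the potential changes at the full rate $2d$, while the theorem's monotonicity condition was derived with the pairing between the time derivative and the allocation term. I therefore expect to revisit the derivation.

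The natural fix is to note that in the proof of \Cref{thm:potential-based-general}, the factor multiplying $\frac{\dif}{\dif t}\Phi^S$ is really $2\Lambda^S_v$, not $2\Lambda^S_v x^S_v$. After symmetrizing over the two endpoints, the correct per-bin condition should read
\[
	x^S_v\,\Delta f + \tfrac12\,\tfrac{\dif}{\dif t}\Phi^S \ge x^S_v\, 2d_v\,(\Phi^S-\Phi^{S_+}).
\]
With $x^S_v$ from greedy, one verifies this case by case on whether $a<b$, $a=b$ or $a>b$:

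\begin{itemize}
\item For $a=b$, $x=\tfrac12$ and the condition becomes $\tfrac12-\tfrac12\Pr[j\ge r]\ge \tfrac{d_v}{2d}\Pr[j\ge r]$. Using $d_v\le d$, this reduces to $\Pr[j\ge r]\le\tfrac12$, which is not automatic. Instead, I would pair the two endpoints' terms, which together give $1-\Pr\ge\Pr\cdot d_v/d$ only on average.
\item For $a<b$, $x=1$ and the other bin contributes nothing, giving $1-\tfrac12\Pr\ge \tfrac{d_v}{d}\Pr$. This is fine when the per-edge cross term is paired with the rate $2d\ge 2d_v$.
\end{itemize}

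Getting this bookkeeping exactly right, so that $d_v\le d$ absorbs the time-derivative term, is the step I expect to take the most care. It is where the choice of the rate $2d$ in the potential is essential.
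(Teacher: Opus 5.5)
You use the same potential as the paper, and your terminal, starting, and Chebyshev checks are fine. The monotonicity step, however, contains a computational error. That error produces a spurious obstacle, and the fix you propose for it is both unnecessary and itself wrong, so the proposal does not reach a proof.

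\textbf{The missing step.} The error is the time derivative. The backward equation you wrote omits the immediate reward. Each arrival at rate $2d$ gains $1$ and also moves the state from $r$ to $r-1$, so
\[
\frac{\partial}{\partial t}\, h(r,t) \,=\, -2d\Bigl(1 - \bigl(h(r,t)-h(r-1,t)\bigr)\Bigr) \,=\, -2d\,\Pr\bigl[\, j<r \,\bigr]
~.
\]
This is also what \Cref{lem:cap-expectation-change-rate} gives directly via the chain rule with $\lambda = 2d(1-t)$. As a sanity check, $h \approx 2d(1-t)$ as $t \to 1$ for $r \ge 1$, so the slope tends to $-2d$, not to $-2d\Pr[j\ge r] \to 0$.

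Hence $\frac{\dif}{\dif t}\Phi^{(a,b)} = -\Pr[j<r]$. For $a<c$, the left side of condition 3 is $1-\Pr[j<r] = \Pr[j\ge r]$, and the right side is $\frac{d_v}{d}\Pr[j\ge r]$. The inequality then follows immediately from $d_v \le d$, with equality at $d_v = d$. This is exactly the paper's argument via \Cref{lem:cap-expectation-change-cap,lem:cap-expectation-change-rate}.

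For $a \ge c$, the left side is $-\Pr[j<(c-b)^+] \le 0$. This does no harm: the proof of \Cref{thm:potential-based-general} uses condition 3 only weighted by $x^S_v$. Under Greedy, $x^S_v>0$ forces $b \ge a \ge c$, where $r=0$ and both sides vanish.

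\textbf{The proposed revision of the accounting.} The paper's identity $\sum_e \lambda_e \frac{\dif}{\dif t}\Phi^{S_e} = \sum_v\sum_S 2\Lambda^S_v x^S_v\frac{\dif}{\dif t}\Phi^S$ is correct, because $x^{S}_u + x^{S}_v = 1$ for every ball. Splitting each ball's time-derivative term in proportion to the allocation probabilities is deliberate: it charges the negative drift of a ball's potential to the bin that actually receives the ball.

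Your symmetric split, with $\frac12\frac{\dif}{\dif t}\Phi^S$ per endpoint, is also a valid identity, but the per-bin condition it produces is false. Take a load pair with $b<a<c$. Greedy has $x^S_v=0$, so your condition reads $-\frac12\Pr[j<r]\ge 0$ with $r\ge 1$, which fails for $t<1$. Any bin all of whose neighbors are less loaded would then contribute a strictly negative amount, so per-bin nonnegativity breaks down and you would need a global pairing argument that you do not supply.

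In short, the only missing ingredient is the correct derivative; no change to \Cref{thm:potential-based-general} is needed.
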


On the other hand, any normalized concave function $f$ can be decomposed into a linear combination of the $c$-matching functions $\min\{L, c\}$.
We state this as the next lemma and defer its proof to \Cref{app:concave-decompose}.

\begin{lemma}
    \label{lem:concave-decompose}
    A normalized concave function $f : \Z_{\ge 0} \to \R_{\ge 0}$ can be decomposed into:
    \begin{equation*}
        f(\load) ~=~ \sum_{c = 1}^\infty \Big( 2 f(c) - f(c-1) - f(c+1) \Big)  \cdot \min\big\{ \load, c \big\}
        ~.
    \end{equation*}
\end{lemma}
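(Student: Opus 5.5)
We prove the identity by Abel summation (discrete integration by parts). Write $\Delta(c) \defeq f(c) - f(c-1)$ for $c \ge 1$. Each coefficient in the statement is a second difference:
\[
2f(c) - f(c-1) - f(c+1) \,=\, \Delta(c) - \Delta(c+1) ~\ge~ 0~,
\]
where nonnegativity is exactly the concavity of $f$. Likewise, $\min\{\load, c\} - \min\{\load, c-1\} = \mathbf{1}[\load \ge c]$ for $c \ge 1$. Normalization gives $f(0) = 0$, so $f(\load) = \sum_{c=1}^{\load} \Delta(c) = \sum_{c \ge 1} \Delta(c)\, \mathbf{1}[\load \ge c]$.

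First we show that $\Delta(c) \ge 0$ and that $\Delta_\infty \defeq \lim_{c\to\infty} \Delta(c)$ exists and equals $0$. The sequence $\Delta(c)$ is non-increasing by concavity, so the limit exists in $[-\infty, \infty)$. If $\Delta(c_0) < 0$ for some $c_0$, then $\Delta(c) \le \Delta(c_0)$ for all $c \ge c_0$, so $f$ would eventually become negative, contradicting $f \ge 0$. Hence $\Delta(c) \ge 0$ for all $c$ and $\Delta_\infty \ge 0$.

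However, $\Delta_\infty$ need not be $0$: for $f(\load) = \load$, every coefficient vanishes, so the right-hand side is $0 \ne f$. The decomposition therefore holds only with the linear part accounted for. Expected form: $f(\load) = \sum_c (\Delta(c)-\Delta(c+1))\min\{\load,c\} + \Delta_\infty \load$. If the paper's statement is read literally, we must either assume $\Delta_\infty = 0$ (e.g.\ bounded $f$, as in $c$-matching), or interpret the term $\load$ as the limit $c \to \infty$ of $\min\{\load,c\}$. I would state and prove the version with the extra term $\Delta_\infty \load$ and note that it is harmless for \Cref{cor:concave-amortized}, since the objective for $f(\load) = \load$ is exactly $\sum_v d_v \cdot 2d/(2d)$ times $1$ and matches the bound.

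Next, fix $\load$ and compute the finite truncation directly. For $N \ge \load$ we have $\min\{\load, c\} = \load$ whenever $c \ge \load$, so the tail telescopes:
\[
\sum_{c=1}^{N} \big(\Delta(c) - \Delta(c+1)\big) \min\{\load, c\}
\,=\, \sum_{c=1}^{\load-1} \big(\Delta(c) - \Delta(c+1)\big)\, c \;+\; \load\big(\Delta(\load) - \Delta(N+1)\big)~.
\]
Summing the first part by parts yields $\sum_{c=1}^{\load} \Delta(c) - \load\,\Delta(\load)$. Adding the second part gives
\[
\sum_{c=1}^{N} \big(\Delta(c) - \Delta(c+1)\big) \min\{\load, c\} \,=\, f(\load) - \load\,\Delta(N+1)~.
\]
Letting $N \to \infty$ yields $f(\load) = \sum_{c} (\cdots)\min\{\load, c\} + \load\,\Delta_\infty$. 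This also justifies convergence: the partial sums are exact for each $N$, and the only limit taken is $\Delta(N+1) \to \Delta_\infty$.

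The main obstacle is this boundary term $\Delta_\infty$. The algebra is routine; the care is needed in handling the limit, where one must either show $\Delta_\infty = 0$ under the paper's conventions or carry the linear term along.
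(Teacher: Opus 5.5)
Your argument is correct. It uses the same summation-by-parts idea as the paper's proof, which regroups the series by the values $f(c)$ and observes that $2\min\{L,c\}-\min\{L,c+1\}-\min\{L,c-1\}$ is the indicator of $c=L$.

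Your truncated computation is the rigorous form of that regrouping. For $N\ge L$ the partial sum equals $f(L)-L\,\Delta(N+1)$ exactly. The paper instead rearranges a series that is not absolutely convergent, and in doing so silently drops this boundary term.

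Your counterexample $f(L)=L$ is valid: it is normalized and concave, yet every coefficient is zero. So the statement as written needs either the extra term $\Delta_\infty L$, or a hypothesis forcing $\Delta_\infty=0$. The latter is equivalent to $f(c)=o(c)$, which holds for instance for bounded $f$ such as the $c$-matching functions.

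You are also right that nothing downstream breaks. In the proof of \Cref{cor:concave-amortized}, the linear part contributes exactly $\Delta_\infty\sum_v d_v$ to both sides, because $\E\sum_v L_v=\sum_e\lambda_e=\sum_v d_v$ and $\E_{j\sim\Poisson(2d)}\big[\lfloor j/2\rfloor+\lceil j/2\rceil\big]/(2d)=1$. Your phrasing of this check is garbled, but the content is right.

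The issue is specific to the concave case. In \Cref{lem:convex-decomposition} the weights $(L-c)^+$ vanish for $c\ge L$, so that sum is finite and no boundary term arises.
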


In this sense, the $c$-matching functions form a basis of general concave (and as we will see shortly, convex) functions.
The bound for normalized concave functions (\Cref{cor:concave-amortized}) can be proved via this decomposition, and we defer the proof to \Cref{app:concave-amortized}.

%
%



\begin{shaded}
\vspace{-6pt}
\begin{example}[Convex Function]
    A function $f : \Z_{\ge 0} \to \R_{\ge 0}$ is \emph{concave} if for any $\load \in \Z_{\ge 0}$:
    \[
      f(\load+2) - f(\load+1) \ge f(\load+1) - f(\load)
      ~.
    \]
    Finally, we say that it is \emph{normalized} if $f(0) = f(1) = 0$.
\end{example}
\vspace{-6pt}
\end{shaded}

\begin{corollary}
    \label{cor:convex-amortized}
    For any normalized convex function $f : \Z_{\ge 0} \to \R_{\ge 0}$ and any instance with degree upper bound $d$, 
    Greedy guarantees that the expected objective is at most:
    \begin{equation*}
        \E \sum_{v \in V}f(\load_v) ~\le~
        \E_{j \sim \mathrm{Poisson}(2d)} \frac{f( \lfloor \nicefrac{j}{2} \rfloor ) + f( \lceil \nicefrac{j}{2} \rceil )}{2d}
        \cdot \sum_{v \in V} d_v ~.
    \end{equation*}
\end{corollary}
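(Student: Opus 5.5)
The plan mirrors the concave case: reduce to a basis of ``hinge'' functions, prove the bound for each basis function with a potential, and combine by linearity.

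\emph{Decomposition.} A normalized convex $f$ (with $f(0)=f(1)=0$) can be written as
\[
f(\load) \,=\, \sum_{c\ge 1} \big(f(c+1)+f(c-1)-2f(c)\big)\,(\load-c)^+ .
\]
The coefficients are non-negative by convexity. The identity follows the same telescoping argument as \Cref{lem:concave-decompose}: both sides vanish at $\load=0,1$, and they have the same second differences. Note that $(\load-c)^+ = \load - \min\{\load,c\}$, and that $\sum_v \load_v$ equals the total number of arrivals, whose expectation $\sum_e \lambda_e = 2\sum_v d_v$ does not depend on the algorithm. The target bound is also linear in $f$; for $f(\load)=\load$ it evaluates to $\E_{j\sim\Poisson(2d)}\frac{j}{2d}\sum_v d_v = \sum_v d_v$ per unit. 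Combining these observations,
\[
\E\sum_v (\load_v-c)^+ \;=\; 2\sum_v d_v\cdot\tfrac12\cdot 2 \;-\; \E\sum_v\min\{\load_v,c\}
\]
up to the bookkeeping $\E\sum_v \load_v = 2\sum_v d_v$. So an \emph{upper} bound for $(\load-c)^+$ is equivalent to a \emph{lower} bound for $\min\{\load,c\}$. \Cref{cor:c-matching} supplies exactly this lower bound. It remains to check that the resulting expression matches the target: the identity $(\lfloor j/2\rfloor - c)^+ + (\lceil j/2\rceil -c)^+ = j - \min\{j,2c\}$, together with $\min\{\lfloor j/2\rfloor,c\}+\min\{\lceil j/2\rceil,c\} = \min\{j,2c\}$, gives precisely
\[
\E_j\frac{(\lfloor j/2\rfloor-c)^+ + (\lceil j/2\rceil-c)^+}{2d}\sum_v d_v .
\]

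\emph{Infinite sums.} Summing over $c$ with the non-negative weights then gives the corollary. The sums may be infinite, but every term is non-negative, so Tonelli justifies exchanging sum and expectation on both sides. The assumption that $f(\Poisson(\cdot))$ is well defined ensures the right-hand side is meaningful; if it is infinite, the claim is trivial.

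\emph{Main obstacle.} The delicate step is the bookkeeping in the complement argument. I must verify that $\E\sum_v\load_v$ is exactly $\sum_e\lambda_e=2\sum_v d_v$, and that the per-$c$ bound from \Cref{cor:c-matching}, stated in terms of degree upper bound $d$, subtracts cleanly. The subtraction leaves a deficit term $\big(1-\E\frac{\min\{j,2c\}}{2d}\big)\sum_v 2d_v$, while the target needs $\E_j\frac{j-\min\{j,2c\}}{2d}\sum_v d_v$. Using $\E j = 2d$, the former equals $\sum_v 2d_v - \E\frac{\min\{j,2c\}}{2d}\sum_v d_v$, so I have to recheck whether the normalization in \Cref{cor:c-matching} counts $d_v$ or $2d_v$. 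If this factor does not align, I will instead prove the bound for $(\load-c)^+$ directly through \Cref{thm:potential-based-general} (minimization version). In that case I will take $\Phi^{(a,b)}(t)$ to be the expected increase of $(\cdot-c)^+$ over the two bins when a single ball of rate $2d$ arrives in $[t,1]$ and is allocated greedily. The steps would then be:
\begin{enumerate}
\item Check the terminal condition, which holds trivially.
\item Check the starting condition, which is the two-bin computation above.
\item Check monotonicity, which holds with equality at $d_v=d$ by the Kolmogorov equation of the two-bin chain and extends to $d_v<d$ via the sign of $\Phi^{S}-\Phi^{S_+}$.
\item Check Chebyshev's condition, which requires showing that the marginal potential drop from raising $\load_v$ is non-increasing in $\load_u$.
\end{enumerate}
This last monotonicity check is where I expect the real work to lie.
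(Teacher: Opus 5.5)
\textbf{A wrong identity breaks your subtraction step.} Your main route is the paper's proof: decompose via $(\load-c)^+=\load-\min\{\load,c\}$, apply \Cref{cor:c-matching} for each $c$, and match terms with $\min\{\lfloor j/2\rfloor,c\}+\min\{\lceil j/2\rceil,c\}=\min\{j,2c\}$. As written, however, it relies on $\sum_e\lambda_e = 2\sum_v d_v$, which is false.

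Since $d_v=\frac12\sum_{e\sim v}\lambda_e$ and every ball is incident to exactly two bins, $\sum_v d_v=\frac12\cdot 2\sum_e\lambda_e=\sum_e\lambda_e$. Hence $\E\sum_v\load_v=\sum_v d_v$. With your extra factor of $2$, the deficit would exceed the target by $\sum_v d_v$ for every $c$, and the corollary would not follow. With the correct identity the factors align, so there is nothing to recheck. The normalization in \Cref{cor:c-matching} (counting $d_v$) is the right one:
\begin{align*}
\E\sum_{v}(\load_v-c)^+
&= \sum_v d_v-\E\sum_v\min\{\load_v,c\} \\
&\le \sum_v d_v\Big(1-\E_{j\sim\Poisson(2d)}\frac{\min\{j,2c\}}{2d}\Big)
= \sum_v d_v\,\E_{j\sim\Poisson(2d)}\frac{j-\min\{j,2c\}}{2d},
\end{align*}
using $\E j=2d$. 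Your floor/ceiling identity then finishes the proof. Your own observation that the target evaluates to $\sum_v d_v$ for $f(\load)=\load$ was already pointing to this.

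\textbf{The fallback is not needed.} This matters, because as sketched it leaves the Chebyshev condition unverified, and you yourself identify that as the real work. The paper mentions the direct-potential route only in a remark.

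\textbf{One point in your favor.} You apply the per-$c$ bound to the non-negative quantities $(\load_v-c)^+$ and only then sum over $c$ with non-negative weights (Tonelli). That is more careful than the paper's write-up, which splits the sum into $\sum_c(\cdot)\,\E\sum_v\load_v-\sum_c(\cdot)\,\E\sum_v\min\{\load_v,c\}$. That split gives $\infty-\infty$ whenever $\sum_c\big(f(c-1)+f(c+1)-2f(c)\big)$ diverges, e.g.\ for $f(\load)=\frac{\load^2-\load}{2}$.
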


Similar to the case of concave functions, this amortized guarantee is optimal for $g(d_v = d)$, because no algorithm can achieve a better bound in the simple instance of two bins, and one ball with arrival rate $2d$ between them.

We also remark that Greedy guarantees amortized optimality for minimizing general convex functions, because
any convex function can be decomposed into the sum of a normalized convex function and a linear function,
with the latter's contribution being independent of the algorithm's decision.

The analysis for normalized convex functions also reduces to the $c$-matching case, as these functions can be decomposed into the difference between a linear function and $c$-matching functions.
We state this as the next lemma.

\begin{lemma}
	\label{lem:convex-decomposition}
    A normalized convex function $f : \Z_{\ge 0} \to \R_{\ge 0}$ can be decomposed into:
    \begin{equation*}
        f(\load) ~=~ \sum_{c=1}^\infty \Big( f(c-1) + f(c+1) - 2 f(c) \Big) \cdot \big( L - c \big)^+
        ~,
    \end{equation*}
    where note that $\big(\load - c\big)^+ = \load - \min \big\{ \load, c \big\}~.$
\end{lemma}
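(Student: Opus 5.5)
The plan is a discrete Taylor expansion of $f$ around $1$, using the second differences $\Delta^2 f(c) \defeq f(c-1)+f(c+1)-2f(c)$, which are non-negative by convexity. First, write the first differences as telescoping sums of second differences: for every integer $m\ge 1$,
\[
	f(m) - f(m-1) ~=~ \big(f(1)-f(0)\big) + \sum_{c=1}^{m-1} \Delta^2 f(c) ~=~ \sum_{c=1}^{m-1} \Delta^2 f(c)~,
\]
where the first difference $f(1)-f(0)$ vanishes by normalization. Second, sum these increments for $m = 1,\dots,\load$ and use $f(0)=0$:
\[
	f(\load) ~=~ \sum_{m=1}^{\load} \sum_{c=1}^{m-1} \Delta^2 f(c)~.
\]

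Third, swap the order of summation, which is harmless since all terms are non-negative and each sum is finite for fixed $\load$. For a fixed $c \ge 1$, the term $\Delta^2 f(c)$ appears once for each $m$ with $c+1 \le m \le \load$. That is exactly $(\load-c)^+$ times. This yields
\[
	f(\load) ~=~ \sum_{c=1}^{\infty} \Delta^2 f(c)\,(\load-c)^+~,
\]
where the sum is effectively finite because $(\load-c)^+=0$ for $c\ge \load$. The identity $(\load-c)^+ = \load - \min\{\load,c\}$ is immediate by checking the two cases $\load\le c$ and $\load>c$.

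A quick sanity check at $\load=0$ and $\load=1$ confirms that both sides vanish. The only step requiring care is the interchange of summations and the index bookkeeping; the rest is routine. This mirrors the proof of \Cref{lem:concave-decompose}, with $f(1)-f(0)=0$ replacing the linear term.
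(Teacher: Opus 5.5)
Your proof is correct and is essentially the paper's argument: write the first differences as telescoping sums of second differences, sum them into a double sum, and swap the order to count each $f(c-1)+f(c+1)-2f(c)$ exactly $(L-c)^+$ times. The only difference is bookkeeping. The paper pads with $f(-1)=0$, indexes the second differences by $j=c+1$ starting from $j=1$, and then drops the $c=0$ term because it equals $f(1)-2f(0)=0$; you instead anchor the telescoping directly at $f(1)-f(0)=0$.
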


Finally, we examine two special classes of convex functions.

\begin{shaded}
\vspace{-6pt}
\begin{example}[Load Balancing]
    Recall that load balancing corresponds to function $f(\load) = e^{\eta \load}$ through the softmax function $\frac{1}{\eta} \ln \sum_{v \in V} e^{\eta \load_v}$.
    We consider $\eta = 1$ for ease of demonstration.
\end{example}
\vspace{-6pt}
\end{shaded}


\begin{corollary}
\label{cor:load-balancing}
    For the load balancing problem with $f(\load) = e^{\load}$ and any instance with degree upper bound $d$, 
    Greedy guarantees that the expected objective is at most:
    \begin{equation*}
        \E \sum_{v \in V}f(\load_v) ~\le~
        \E_{j \sim \mathrm{Poisson}(2d)} \frac{e^{ \lfloor \nicefrac{j}{2} \rfloor } + e^{ \lceil \nicefrac{j}{2} \rceil } - (e-1)(j-2d) - 2}{2d}
        \cdot \sum_{v \in V} d_v + |V|~.
    \end{equation*}
\end{corollary}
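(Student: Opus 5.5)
Reduce \Cref{cor:load-balancing} to \Cref{cor:convex-amortized}. The function $e^{\load}$ is convex but not normalized, since $f(0)=1$ and $f(1)=e$. Subtract its linear interpolation through $0$ and $1$:
\[
\tilde f(\load) \defeq e^{\load} - 1 - (e-1)\load .
\]
Then $\tilde f(0)=\tilde f(1)=0$. Its second differences equal those of $e^{\load}$, so $\tilde f$ is convex. Nonnegativity holds because $\tilde f$ is convex with $\tilde f(0)=\tilde f(1)=0$ and $\tilde f(2)=e^2-2e+1=(e-1)^2>0$. Hence $\tilde f$ is a normalized convex function $\Z_{\ge0}\to\R_{\ge0}$, and $\tilde f(\Poisson(d))$ is finite, so \Cref{cor:convex-amortized} applies:
\[
\E\sum_v \tilde f(\load_v) \le \E_{j\sim\Poisson(2d)} \frac{\tilde f(\lfloor j/2\rfloor)+\tilde f(\lceil j/2\rceil)}{2d}\sum_v d_v .
\]

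Next, expand the right-hand side. Since $\lfloor j/2\rfloor+\lceil j/2\rceil=j$, the numerator becomes $e^{\lfloor j/2\rfloor}+e^{\lceil j/2\rceil}-2-(e-1)j$. Write $(e-1)j=(e-1)(j-2d)+(e-1)2d$. The right-hand side is then the bracketed term in the statement, namely $\E[e^{\lfloor j/2\rfloor}+e^{\lceil j/2\rceil}-(e-1)(j-2d)-2]/(2d)$ times $\sum_v d_v$, minus $(e-1)\sum_v d_v$. The $(j-2d)$ term has mean zero; the statement keeps it only for its form.

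Finally, add back the linear part on the left. We have $\E\sum_v f(\load_v)=\E\sum_v\tilde f(\load_v)+|V|+(e-1)\E\sum_v \load_v$. Every ball arriving in $[0,1]$ is allocated to exactly one bin, so
\[
\E\sum_v \load_v=\sum_e\lambda_e=\sum_v d_v ,
\]
where the last equality holds because $d_v$ is half the incident rate and each ball is incident to two bins. The term $(e-1)\sum_v d_v$ therefore cancels the $-(e-1)\sum_v d_v$ from the previous step, and the remaining $|V|$ is the additive term in the statement.

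None of these steps is hard once \Cref{cor:convex-amortized} is available. The points needing care are checking nonnegativity of $\tilde f$, which that corollary's hypothesis requires, and keeping the Poisson-mean identity and the cancellation straight.
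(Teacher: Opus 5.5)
Your proposal is correct and is essentially the paper's proof. The paper, via its intermediate corollary for non-normalized convex functions, splits $e^{\load} = f_*(\load) + r(\load)$ with $f_*(\load) = e^{\load} - (e-1)\load - 1$ and $r(\load) = (e-1)\load + 1$, applies the normalized convex bound to $f_*$, and adds back $\sum_{v} r(d_v)$ using that the expected total load equals $\sum_v d_v$ — exactly your decomposition, nonnegativity/convexity check, and cancellation of the $(e-1)\sum_v d_v$ terms.
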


\begin{shaded}
\vspace{-6pt}
\begin{example}[Completion Time Minimization]
    Recall that completion time minimization corresponds to function $f(\load) = \frac{\load^2 + \load}{2}$.
\end{example}
\vspace{-6pt}
\end{shaded}

\begin{corollary}
\label{cor:completion-time-min}
    For the completion time minimization problem and any instance with degree upper bound $d$,
    Greedy guarantees that the expected total completion time is at most:
    \begin{equation*}
        \E \sum_{v \in V}f(\load_v) ~\le~
        \E_{j \sim \mathrm{Poisson}(2d)} \frac{{\lfloor \nicefrac{j}{2} \rfloor}^2 + {\lceil \nicefrac{j}{2} \rceil}^2 - j + 4d}{4d}
        \cdot \sum_{v \in V} d_v ~.
    \end{equation*}
\end{corollary}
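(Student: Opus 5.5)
We derive \Cref{cor:completion-time-min} from \Cref{cor:convex-amortized} by splitting off a linear part; no new potential functions are needed. Write
\[
	\frac{\load^2+\load}{2} ~=~ h(\load) + \load, \qquad h(\load) \defeq \frac{\load^2 - \load}{2}
	~.
\]
The function $h$ is convex with second differences $h(\load+2)-2h(\load+1)+h(\load) = 1$. It is non-negative on $\Z_{\ge 0}$ and satisfies $h(0)=h(1)=0$, so it is a normalized convex function. The linear part contributes $\E \sum_v \load_v$, which is algorithm-independent: every ball arriving in $[0,1]$ is allocated to exactly one bin. Hence
\[
	\E \sum_v \load_v = \sum_e \lambda_e = \sum_v d_v,
\]
where the last equality uses that each ball is counted in two halved degrees.

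Applying \Cref{cor:convex-amortized} to $h$ gives
\[
	\E \sum_v h(\load_v) \le \E_{j \sim \Poisson(2d)} \frac{h(\lfloor j/2\rfloor) + h(\lceil j/2\rceil)}{2d} \cdot \sum_v d_v
	~.
\]
Since $\lfloor j/2\rfloor + \lceil j/2\rceil = j$, the numerator simplifies:
\[
	h(\lfloor j/2\rfloor) + h(\lceil j/2\rceil) = \frac{\lfloor j/2\rfloor^2 + \lceil j/2\rceil^2 - j}{2}
	~.
\]
Adding the linear contribution $\sum_v d_v = \E_j \frac{4d}{4d}\cdot\sum_v d_v$ and putting everything over the common denominator $4d$ yields
\[
	\E_{j\sim\Poisson(2d)} \frac{\lfloor j/2\rfloor^2+\lceil j/2\rceil^2 - j + 4d}{4d}\cdot \sum_v d_v,
\]
which is exactly the claimed bound.

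There is no real obstacle here; the only points to check are the normalization of $h$ (so that \Cref{cor:convex-amortized} applies) and the identity $\sum_e\lambda_e = \sum_v d_v$. As a sanity check, $\E_j (\lfloor j/2\rfloor + \lceil j/2\rceil)/(2d) = 1$, which is consistent with the linear part being allocation-invariant.
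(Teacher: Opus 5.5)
Your proof is correct and is essentially the paper's argument. The paper likewise writes $\frac{\load^2+\load}{2}$ as the normalized convex part $\frac{\load^2-\load}{2}$ plus the allocation-independent linear part $\load$ (via \Cref{cor:general-convex-amortized}), applies \Cref{cor:convex-amortized} to the former, and bounds the latter by $\E\sum_v \load_v = \sum_v d_v$.
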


\section{Mark-and-Swap Algorithm}
\label{sec:mark-swap}

This section presents another algorithm called Mark-and-Swap for the Generalized Balls into Bins problem.
Compared to the amortized bounds by Greedy, the bounds by Mark-and-Swap are not as tight but still better than the baseline by Random, and importantly, hold \emph{without amortization}.
Moreover, the analysis of Mark-and-Swap can be extended to handle combinatorial functions related to edge-weighted matching and completion minimization.
\Cref{sec:completion-time} will demonstrate the case of completion time minimization.

\subsection[1-Regular Graphs]{$1$-Regular Graphs}

We first present and analyze the main component of the Mark-and-Swap algorithm, by considering the special case of $1$-regular instances, i.e., when $d_v = 1$ for all bins $v \in V$.

\begin{algorithm}{Mark-and-Swap Algorithm for $1$-Regular Instances}
    When a ball $\{u, v\}$ arrives:
    \begin{enumerate}
        \item Sample bin $w \in \{u, v\}$ uniformly at random, which we will refer to as the \emph{first choice}.
        \item Allocate the ball to bin $w$ if it is not marked.
        \item Otherwise, allocate it to the other bin, which we will refer to as the \emph{second choice}.
        \item Mark bin $w$.
    \end{enumerate}
\end{algorithm}

We now characterize the stochastic process by which Mark-and-Swap allocates balls into bins.

\begin{definition}[Directed Balls]
	If a ball $\{u, v\}$ arrives and samples $w = u$ in the first step, we say it is a directed ball $(u, v)$. 
	Likewise, if it samples $w = v$, we say it is a directed ball $(v, u)$.
\end{definition}


Since $w$ distributes uniformly over $\{u, v\}$ with fresh randomness for each ball, the directed balls arrive by a Poisson process.

\begin{lemma}
	\label{lem:directed-ball-rate}
	For any $u \ne v \in V$, directed balls $(u,v)$ arrive by a Poisson process with rate $\frac{\lambda_{\{u, v\}}}{2}$.
\end{lemma}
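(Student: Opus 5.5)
This lemma is the standard thinning (or coloring) property of Poisson processes, and the plan is to verify it directly for the pair $\{u,v\}$. The only structural fact needed is that the first choice $w$ is sampled with fresh randomness for each ball. It is independent of the arrival process and of all previous samples and allocations. So the arrivals of ball $\{u,v\}$ form a Poisson process with rate $\lambda_{\{u,v\}}$, and each arrival is independently marked ``directed $(u,v)$'' with probability $\frac12$, or ``directed $(v,u)$'' otherwise.

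The key step is the marking theorem: independently splitting a Poisson process of rate $\lambda$ into two types, with probabilities $p$ and $1-p$, yields two \emph{independent} Poisson processes of rates $p\lambda$ and $(1-p)\lambda$. I would verify this through finite-dimensional distributions. Take disjoint intervals $I_1,\dots,I_m \subseteq [0,1]$. The counts of $\{u,v\}$ arrivals in these intervals are independent, with $N_i \sim \Poisson(\lambda |I_i|)$. Conditioned on $N_i = n$, the number of $(u,v)$-directed arrivals in $I_i$ is $\mathrm{Binomial}(n, \frac12)$, independently across intervals. A direct computation of the joint pmf factorizes:
\[
e^{-\lambda|I_i|}\frac{(\lambda|I_i|)^{a+b}}{(a+b)!}\binom{a+b}{a}2^{-(a+b)} \;=\; \Pr\Big[\Poisson\big(\tfrac{\lambda}{2}|I_i|\big)=a\Big]\cdot\Pr\Big[\Poisson\big(\tfrac{\lambda}{2}|I_i|\big)=b\Big].
\]
This shows that the counts of directed balls $(u,v)$ have independent increments with $\Poisson(\frac{\lambda_{\{u,v\}}}{2}|I|)$ marginals. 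That is exactly a Poisson process with rate $\frac{\lambda_{\{u,v\}}}{2}$.

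Finally, I would note that distinct unordered pairs arrive by independent Poisson processes, and the coin flips are independent across balls. So the full family of directed-ball processes, over all ordered pairs, is mutually independent. This is not needed for the statement itself, but it is what later analysis will use.

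The only point needing care is the independence of $w$ from the algorithm's state. The mark statuses depend on the past, but the sampling of $w$ in step 1 does not, so the thinning argument applies.
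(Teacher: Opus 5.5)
Your proposal is correct and follows the paper's reasoning. The paper justifies the lemma in one line: $w$ is sampled uniformly with fresh randomness for each ball, so directed balls $(u,v)$ form an independent $\frac{1}{2}$-thinning of the Poisson process of $\{u,v\}$-arrivals, and your finite-dimensional pmf computation just writes out the standard thinning theorem the paper invokes implicitly.
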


By definition, the algorithm marks a bin $v$ whenever a directed ball arrives with $v$ as the first choice.
The total arrival rate of these sub-balls is $\sum_{u \ne v} \frac{\lambda_{\{u, v\}}}{2} = d_v = 1$.
As a result, we have:

\begin{lemma}
	\label{lem:regular-marking}
	For any bin $v \in V$, the events of marking bin $v$ follow a Poisson process with rate $1$.
\end{lemma}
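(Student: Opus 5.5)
The plan is to treat the marking events for bin $v$ as a thinning and superposition of independent Poisson processes, relying on \Cref{lem:directed-ball-rate}.

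First, I will describe how the directed balls are generated. Each undirected ball $\{u,v\}$ arrives according to an independent Poisson process with rate $\lambda_{\{u,v\}}$. On each arrival, the algorithm flips a fresh fair coin to pick the first choice $w$. Poisson thinning (coloring) says that independently labeling each point of a Poisson process of rate $\lambda$ with one of two labels, each with probability $\frac12$, yields two \emph{independent} Poisson processes of rate $\frac{\lambda}{2}$. This is exactly \Cref{lem:directed-ball-rate}, and it additionally gives independence between the directed processes $(u,v)$ and $(v,u)$. Because the undirected ball processes are independent across pairs and the coins are independent, the whole family $\{(u,v)\}_{u\neq v}$ of directed-ball processes is mutually independent.

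Second, I will observe that bin $v$ is marked at a given time exactly when a directed ball $(v,u)$ arrives for some $u\neq v$. Step 4 marks $w$ whatever happened in steps 2 and 3, and re-marking an already marked bin still counts as a marking event. So the marking events of $v$ form the superposition of the independent Poisson processes $(v,u)$, $u\ne v$. The point to check here is that the marking times are determined solely by the arrival times and the coin flips. They do not depend on the allocation decisions or on the loads, so no feedback from the algorithm's state can distort the process.

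Third, the superposition of independent Poisson processes is Poisson with rate equal to the sum of the rates, namely $\sum_{u\ne v}\frac{\lambda_{\{u,v\}}}{2}=d_v=1$ in the $1$-regular case. I do not expect any step to be a real obstacle. The only subtlety is making sure the independence used in the superposition is justified, and that comes from the thinning theorem together with the fact that the arrival processes of different undirected balls are independent.
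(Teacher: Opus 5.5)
Your proposal is correct and matches the paper's argument: the marking events of $v$ are exactly the arrivals of directed balls $(v,u)$, $u \ne v$, whose rates $\frac{\lambda_{\{u,v\}}}{2}$ sum to $d_v = 1$. Your explicit justification of independence via the coloring theorem, and your remark that marking times do not depend on the algorithm's state, make precise what the paper leaves implicit.
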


This lemma yields two further lemmas as corollaries.
First, let random variable $F_v$ be the number of balls bin $v$ receives as the first choice.
By the definition of Mark-and-Swap, bin $v$ receives at most one such ball, after which it is marked and can only receive balls as their second choice.
Further, bin $v$ receives such a ball if and only if it is marked at least once.
Hence, we get the distribution of $F_v$ as a corollary of \Cref{lem:regular-marking}.

\begin{lemma}
	\label{lem:regular-first-choice}
	For any bin $v \in V$, $F_v \sim \Bernoulli(1-\frac{1}{e})$.
\end{lemma}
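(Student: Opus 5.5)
The plan is to show that $F_v = 1$ exactly when bin $v$ is marked at least once during $[0,1]$. Once that is established, the distribution follows from \Cref{lem:regular-marking}.

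First, I will establish the deterministic claim $F_v = \mathbf{1}[\text{$v$ is marked at least once in } [0,1]]$, arguing directly from the algorithm's steps.

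\begin{itemize}
\item \emph{$F_v \le 1$.} A ball is counted in $F_v$ only if it arrives as a directed ball $(v,u)$ and is allocated to $v$. By step 2, this requires $v$ to be unmarked at its arrival. Step 4 then marks $v$ immediately, and marks are never removed. So every later directed ball $(v,u')$ finds $v$ marked and is sent to $u'$ by step 3.
\item \emph{If $v$ is marked at least once, then $F_v = 1$.} Marking happens only in step 4, triggered by a directed ball $(v,u)$. Consider the first such ball. At its arrival $v$ has not yet been marked, so step 2 allocates it to $v$, giving $F_v \ge 1$.
\item \emph{If $v$ is never marked, then $F_v = 0$.} Any directed ball $(v,u)$ would have marked $v$, so none arrived and $v$ receives no first-choice ball.
\end{itemize}

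Second, I will compute the probability. By \Cref{lem:regular-marking}, the marking events of $v$ form a Poisson process with rate $d_v = 1$; this rate comes from \Cref{lem:directed-ball-rate} summed over $u \ne v$. The number of marking events in $[0,1]$ is therefore $\Poisson(1)$, so
\[
\Pr[F_v = 0] = \Pr[\Poisson(1) = 0] = e^{-1}.
\]
Hence $F_v \sim \Bernoulli(1 - \frac{1}{e})$.

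The only point requiring care is that a mark attempted on an already marked bin still counts as a marking event in \Cref{lem:regular-marking}. This is consistent with how that lemma counts events: every directed ball with first choice $v$ triggers step 4. The key observation, then, is that only the first such event matters for $F_v$, and no real obstacle remains.
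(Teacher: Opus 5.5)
Your proposal is correct and follows the same argument as the paper. In both, $F_v$ is the indicator that $v$ is marked at least once: only the first directed ball with first choice $v$ can land in $v$, because it marks $v$ permanently. The marking events form a rate-$1$ Poisson process by \Cref{lem:regular-marking}, so $\Pr[F_v = 0] = e^{-1}$.
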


Further, let random variable $\MarkTime_v$ denote the first time when bin $v$ is marked. 
Define $\MarkTime_v = 1$ if bin $v$ is never marked. 
The distribution of $\MarkTime_v$ also follows as a corollary of \Cref{lem:regular-marking}.

\begin{lemma}
	\label{lem:regular-first-mark}
	For any bin $v \in V$,  $\MarkTime_v$ follows the distribution defined by:
	\[
		\Pr \big[ \MarkTime_v \ge t \big] =
		\begin{cases}
			e^{-t} & \mbox{if $0 \le t \le 1$;} \\
			0 & \mbox{otherwise.}
		\end{cases}
	\]
	That is, $\MarkTime_v$ follows the exponential distribution with unit rate, except that its value is capped by $1$.
\end{lemma}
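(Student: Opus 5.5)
The statement to prove is \Cref{lem:regular-first-mark}: $\MarkTime_v$ is a unit-rate exponential capped at $1$. It follows directly from \Cref{lem:regular-marking}, which says that the marking events of bin $v$ form a Poisson process of rate $1$ on $[0,1]$. The plan is therefore to identify $\MarkTime_v$ with the first arrival time of this process, capped at the horizon.

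First, I will note that the event $\{\MarkTime_v \ge t\}$ for $0 \le t \le 1$ is the same as the event that no marking of $v$ occurs in $[0,t)$. This holds because $\MarkTime_v$ is defined as the first marking time, or $1$ if no marking ever happens. The case $t = 1$ needs a check: $\MarkTime_v \ge 1$ holds exactly when there is no marking in $[0,1)$, which covers both the never-marked case, where $\MarkTime_v = 1$ by definition, and a marking exactly at time $1$. The latter is a probability-zero event, so it does not matter.

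Second, by \Cref{lem:regular-marking}, the number of markings in $[0,t)$ follows $\Poisson(t)$. The probability that it is zero is therefore $e^{-t}$, which gives $\Pr[\MarkTime_v \ge t] = e^{-t}$ for $0 \le t \le 1$.

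Third, for $t > 1$ the probability is $0$, because $\MarkTime_v \le 1$ always holds by definition. For $t < 0$, the displayed formula's ``otherwise'' branch should be read as applying only when $t > 1$, since trivially $\Pr[\MarkTime_v \ge t] = 1$ for $t \le 0$. This is consistent with $e^{0} = 1$ at $t = 0$.

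None of these steps is a real obstacle. The only point needing care is the justification of \Cref{lem:regular-marking} itself, which we may assume. It rests on the fact that directed balls with first choice $v$ are a superposition of independent Poisson processes (\Cref{lem:directed-ball-rate}) with total rate $d_v = 1$. It also rests on the fact that the marking decisions do not depend on the algorithm's state, since marking depends only on the first-choice sampling. This independence from the allocation history is what makes the first-arrival argument valid.
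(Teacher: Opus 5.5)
Your proof is correct and matches the paper's, which states the lemma as an immediate corollary of \Cref{lem:regular-marking}: $\MarkTime_v$ is the first arrival of the rate-$1$ marking process, capped at the horizon, so $\Pr[\MarkTime_v \ge t] = e^{-t}$ for $0 \le t \le 1$. Your remark that the ``otherwise'' branch should only cover $t > 1$ is also right, since $\Pr[\MarkTime_v \ge t] = 1$ for $t < 0$; this corrects a minor imprecision in the statement.
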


Next, we analyze the number of balls allocated to bin $v \in V$ with $v$ as the second choice.
For any $u \ne v$, let $S_{(u,v)}$ be the number of directed balls $(u, v)$ allocated to their second choice bin $v$.
By \Cref{lem:directed-ball-rate}, these balls arrive with rate $\frac{\lambda_{\{u, v\}}}{2}$.
Further, Mark-and-Swap allocates these balls to bin $v$ after bin $u$ is marked at time $\MarkTime_u$, which follows the distribution defined in \Cref{lem:regular-first-mark}.

Let us introduce a notation for such distributions, because we will repeatedly refer to them in the rest of the section.

\begin{definition}[Distribution of Second-choice Allocations]
	\label{def:swap}
	For any $\lambda \ge 0$, let $\Swap(\lambda)$ be the distribution of a random variable $S$ realized as follows:
	\begin{enumerate}
	    \item Sample $T \sim \Exponential(1)$.
	    \item Sample $S \sim \Poisson\bigl( \lambda (1-T)^+ \bigr)$.
	\end{enumerate}
\end{definition}

We can then summarize the above discussion as the following lemma.

\begin{lemma}
	\label{lem:regular-second-choice}
	For any $u \ne v \in V$, $S_{(u, v)} \sim \Swap\big( \frac{\lambda_{\{u,v\}}}{2} \big)$.	
\end{lemma}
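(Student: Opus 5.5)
The plan is to condition on the marking time $\MarkTime_u$ and reduce the claim to a thinning argument for the Poisson process of directed balls $(u,v)$. By \Cref{lem:directed-ball-rate}, directed balls $(u,v)$ arrive by a Poisson process $N_{(u,v)}$ of rate $\frac{\lambda_{\{u,v\}}}{2}$ on $[0,1]$. By the definition of Mark-and-Swap, a directed ball $(u,v)$ arriving at time $s$ is allocated to its second choice $v$ exactly when $u$ is already marked at time $s$, i.e., when $s > \MarkTime_u$ (or $s \ge \MarkTime_u$; the distinction has probability zero). Moreover, the first directed ball $(u,v)$ itself marks $u$, so it can never be a second-choice allocation unless $u$ was marked earlier.

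Hence $S_{(u,v)}$ equals the number of arrivals of $N_{(u,v)}$ in the interval $(\MarkTime_u, 1]$. The first step is to write the marking process of $u$ as the superposition of $N_{(u,v)}$ and the independent process $M$ formed by directed balls $(u,w)$ with $w \ne v$. By \Cref{lem:directed-ball-rate}, the processes of distinct directed balls are independent Poisson processes, being thinnings and splittings of the independent arrival processes. Their total rate is $1$ by \Cref{lem:regular-marking}.

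The main obstacle is that $\MarkTime_u$ is not independent of $N_{(u,v)}$: the first arrival of $N_{(u,v)}$ may itself be what defines $\MarkTime_u$. To handle this, I would use the strong Markov (memoryless) property of Poisson processes. The time $\MarkTime_u$ is a stopping time for the joint filtration of all directed-ball processes. Conditioned on $\MarkTime_u = \tau < 1$, the restriction of $N_{(u,v)}$ to $(\tau,1]$ is again a Poisson process of rate $\frac{\lambda_{\{u,v\}}}{2}$, independent of the history up to $\tau$. This holds regardless of which process triggered the mark, since an arrival exactly at $\tau$ is excluded from the count. Therefore, conditioned on $\MarkTime_u = \tau$, we have
\[
S_{(u,v)} \sim \Poisson\Bigl(\tfrac{\lambda_{\{u,v\}}}{2}(1-\tau)\Bigr),
\]
and if $u$ is never marked then $S_{(u,v)} = 0$.

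Finally, I would match this to \Cref{def:swap}. By \Cref{lem:regular-first-mark}, $\MarkTime_u$ is distributed as $\min\{T, 1\}$ with $T \sim \Exponential(1)$. On the event $T \ge 1$ we get $\Poisson(0)$, which agrees with $\Poisson\bigl(\lambda(1-T)^+\bigr)$ there. Mixing over $T$ gives exactly $\Swap\bigl(\frac{\lambda_{\{u,v\}}}{2}\bigr)$, which completes the proof.
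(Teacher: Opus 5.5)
Your proposal is correct and follows the paper's reasoning: count the rate-$\frac{\lambda_{\{u,v\}}}{2}$ arrivals of directed balls $(u,v)$ after the capped-exponential marking time $\MarkTime_u$. It goes further by invoking the strong Markov property at the stopping time $\MarkTime_u$ to rigorously handle the dependence between $\MarkTime_u$ and $N_{(u,v)}$, which the paper glosses over. One slip: your parenthetical claim that the choice between $s > \MarkTime_u$ and $s \ge \MarkTime_u$ "has probability zero" is false, because with positive probability a $(u,v)$ ball is exactly the one that marks $u$; your later restriction to the interval $(\MarkTime_u, 1]$ handles this correctly, so nothing breaks.
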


%



\Cref{lem:regular-first-choice,lem:regular-second-choice} together fully characterize the distribution of bin $v$'s load $\load_v$.
Recall that $D_1 + D_2$ denotes the distribution of $X_1 + X_2$ where $X_1 \sim D_1$ and $X_2 \sim D_2$ independently.

\begin{lemma}
    \label{lem:regular-load}
    For any bin $v \in V$, we have $\load_v \sim \Bernoulli(1 - \frac{1}{e}) + \sum_{u \ne v} \Swap(\frac{\lambda_{\{u, v\}}}{2})$.
\end{lemma}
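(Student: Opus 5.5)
The plan is to write the load as a sum of pieces, each a deterministic function of a disjoint family of directed-ball arrivals, and then use independence of Poisson processes. By \Cref{lem:directed-ball-rate}, the directed-ball processes $(u,w)$ for all ordered pairs $u \ne w$ are mutually independent Poisson processes with rates $\frac{\lambda_{\{u,w\}}}{2}$, because they come from splitting independent Poisson processes with fresh coins. For each bin $u$, I group them into the family $\mathcal{F}_u$ of directed balls whose first choice is $u$. The families $\mathcal{F}_u$ for distinct $u$ are therefore independent.

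\textbf{Step 1: the allocation depends only on $\mathcal{F}_u$.} Bin $u$ is marked exactly when a ball of $\mathcal{F}_u$ arrives. Hence $\MarkTime_u$, and whether $u$ is marked at any given time, are functions of $\mathcal{F}_u$ alone. A directed ball $(u,w)$ goes to $u$ iff $u$ is unmarked at its arrival, so its destination is also determined by $\mathcal{F}_u$. This gives the decomposition
\[
\load_v = F_v + \sum_{u \ne v} S_{(u,v)}~,
\]
where $F_v$ counts balls in $\mathcal{F}_v$ that go to $v$, and $S_{(u,v)}$ counts balls $(u,v) \in \mathcal{F}_u$ that go to $v$. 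Here $F_v$ is a function of $\mathcal{F}_v$, and each $S_{(u,v)}$ is a function of $\mathcal{F}_u$. Since these are distinct families, the summands are mutually independent.

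\textbf{Step 2: the marginals.} By \Cref{lem:regular-first-choice}, $F_v \sim \Bernoulli(1-\frac1e)$. It remains to verify \Cref{lem:regular-second-choice}, and this is the one subtle point. $\MarkTime_u$ is itself triggered possibly by a $(u,v)$ ball, so $S_{(u,v)}$ and $\MarkTime_u$ are not functions of disjoint processes. I handle this by viewing $\mathcal{F}_u$ as a single merged Poisson process of rate $d_u = 1$, in which each arrival is independently labeled $(u,w)$ with probability $\frac{\lambda_{\{u,w\}}}{2}$. The first arrival occurs at $\MarkTime_u$, distributed as in \Cref{lem:regular-first-mark}, and that ball goes to $u$. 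By the strong Markov property, the arrivals strictly after $\MarkTime_u$ form a fresh rate-$1$ Poisson process on $(\MarkTime_u,1]$, independent of $\MarkTime_u$, with independent labels. Thinning to label $(u,v)$ then shows that, conditioned on $\MarkTime_u = T$, the count $S_{(u,v)}$ is $\Poisson\big(\frac{\lambda_{\{u,v\}}}{2}(1-T)^+\big)$. This is exactly $\Swap(\frac{\lambda_{\{u,v\}}}{2})$; capping $\MarkTime_u$ at $1$ makes no difference because of the $(\cdot)^+$.

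Combining independence from Step 1 with these marginals gives the stated distribution of $\load_v$.
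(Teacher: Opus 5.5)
Your proposal is correct and follows the paper's route: decompose $\load_v = F_v + \sum_{u \ne v} S_{(u,v)}$, take the marginals from \Cref{lem:regular-first-choice,lem:regular-second-choice}, and combine them by independence. You are more careful than the paper in two places. First, you justify independence by observing that $F_v$ and each $S_{(u,v)}$ are functions of the disjoint families $\mathcal{F}_v$ and $\mathcal{F}_u$; the paper states this only later, in the proof of \Cref{thm:mark-swap-worst-case}. Second, you use the strong Markov property to handle the correlation between $\MarkTime_u$ and the $(u,v)$ arrivals, which the paper's informal derivation of \Cref{lem:regular-second-choice} glosses over.
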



To better understand the distribution of loads, we next show several properties of distribution $\Swap(\lambda)$, which will give the worst-case distribution of load $\load_v$ and be useful for analyzing the Mark-and-Swap algorithm in the general case.
Recall that $D_1 \succeq D_2$ means $D_1$ (second-order) stochastically dominates $D_2$.

\begin{lemma}[``Concavity'' of $\Swap(\lambda)$] 
    \label{lem:swap-concave}
    For any $0 \le \varepsilon \le \alpha \le \beta$, we have:
    \[
    	\Swap(\alpha) + \Swap(\beta) \,\succeq\, \Swap(\alpha - \varepsilon) + \Swap(\beta + \varepsilon)
    	~.
    \]
\end{lemma}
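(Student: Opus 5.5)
The plan is to reduce the comparison to a statement about mixtures of Poisson distributions, after conditioning on the two exponential marking times. Neither side sums two different Poisson rates in a way that matters once we condition, so everything reduces to comparing two pairs of Poisson rates with the same sum.

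\textbf{Step 1 (coupling of the times).} Realize both sides using the same two independent samples $T_1, T_2 \sim \Exponential(1)$. On the left-hand side, $\Swap(\alpha)$ uses $T_1$ and $\Swap(\beta)$ uses $T_2$. On the right-hand side, $\Swap(\alpha-\varepsilon)$ uses $T_1$ and $\Swap(\beta+\varepsilon)$ uses $T_2$. Write $a = (1-T_1)^+$ and $b = (1-T_2)^+$. Conditioned on $(a,b)$, the left-hand side is $\Poisson(\alpha a + \beta b)$, because the two independent Poissons merge. Likewise the right-hand side is $\Poisson\big((\alpha-\varepsilon)a + (\beta+\varepsilon)b\big)$.

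\textbf{Step 2 (symmetrize).} Since $T_1,T_2$ are i.i.d., the pair $(a,b)$ is exchangeable. We therefore condition only on the unordered pair $\{a,b\}$ and assume without loss of generality that $a \ge b$. Each ordering then occurs with probability $\frac12$.
\begin{itemize}
\item The left-hand side is an equal mixture of $\Poisson(r_1)$ and $\Poisson(r_2)$, where $r_1 = \alpha a + \beta b$ and $r_2 = \alpha b + \beta a$.
\item The right-hand side is an equal mixture of $\Poisson\big(r_1 - \varepsilon(a-b)\big)$ and $\Poisson\big(r_2 + \varepsilon(a-b)\big)$.
\end{itemize}
Both pairs have the same sum. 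Moreover $r_2 - r_1 = (\beta-\alpha)(a-b) \ge 0$, so the right-hand pair of rates is obtained from $r_1 \le r_2$ by moving them further apart. The constraint $\varepsilon \le \alpha$ only guarantees that all rates are non-negative.

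\textbf{Step 3 (concavity in the rate).} Fix any non-decreasing concave $f$ and let $h(\lambda) = \E f\big(\Poisson(\lambda)\big)$. The standard Poisson derivative identities give
\[
h'(\lambda) = \E\big[f(N+1)-f(N)\big], \qquad h''(\lambda) = \E\big[f(N+2)-2f(N+1)+f(N)\big] \le 0, \qquad N \sim \Poisson(\lambda),
\]
so $h$ is concave on $\R_{\ge 0}$. Hence, for two pairs with equal sum where one is a spread of the other,
\[
\tfrac12 h(r_1) + \tfrac12 h(r_2) \ \ge\ \tfrac12 h\big(r_1 - \varepsilon(a-b)\big) + \tfrac12 h\big(r_2 + \varepsilon(a-b)\big).
\]
Integrating over $\{a,b\}$ then yields $\E f(\text{LHS}) \ge \E f(\text{RHS})$, which is the claimed dominance. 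The means agree anyway, since $\E\,\Swap(\lambda)$ is linear in $\lambda$.

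\textbf{Main obstacle.} The only delicate point is justifying the concavity of $h$, that is, differentiating under the expectation. For concave $f \ge 0$, the function $f$ grows at most linearly, so all Poisson moments involved are finite and the termwise differentiation of $\sum_k e^{-\lambda}\lambda^k f(k)/k!$ is valid. Alternatively, one can avoid derivatives entirely by writing $\Poisson(\lambda+\delta) = \Poisson(\lambda) + \Poisson(\delta)$ and checking concavity of $h$ through increments.
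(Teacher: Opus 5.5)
Your proposal is correct and follows essentially the same route as the paper. Both arguments couple the two exponential marking times, symmetrize by swapping them, check that the four rates are ordered with equal pairwise sums, and reduce to concavity of $x \mapsto \E f\big(\Poisson(x)\big)$. The only difference is in that last step: you prove concavity via the second-derivative identity $h''(\lambda)=\E[f(N+2)-2f(N+1)+f(N)]$, while the paper uses the midpoint coupling $\Poisson(y)+\Poisson(\frac{x-y}{2})+\Poisson(\frac{x-y}{2})$, which is essentially your increments-based alternative.
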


We defer the proof to \Cref{app:swap-concave}.
Note that distribution $\Swap(0)$ is a point mass at $0$. 
By letting $\varepsilon = \alpha$, we have:

\begin{corollary}
	\label{cor:swap-merge}
    For any $\alpha, \beta \ge 0$:
    \[
    	\Swap(\alpha) + \Swap(\beta) 
    	\,\succeq\,
    	\Swap(\alpha + \beta)
    	~.
    \]
\end{corollary}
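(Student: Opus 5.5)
This corollary follows directly from \Cref{lem:swap-concave} by choosing $\varepsilon$ so that one of the two distributions collapses to a point mass at $0$. By symmetry of the sum, I assume without loss of generality that $\alpha \le \beta$; otherwise I swap the roles of $\alpha$ and $\beta$, using that $D_1 + D_2$ and $D_2 + D_1$ are the same distribution.

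Setting $\varepsilon = \alpha$, the hypothesis $0 \le \varepsilon \le \alpha \le \beta$ of \Cref{lem:swap-concave} holds, and the lemma gives
\[
	\Swap(\alpha) + \Swap(\beta) \,\succeq\, \Swap(0) + \Swap(\alpha + \beta)
	~.
\]

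It remains to check that $\Swap(0)$ is a point mass at $0$. By \Cref{def:swap}, a sample from $\Swap(0)$ is drawn from $\Poisson\bigl(0 \cdot (1-T)^+\bigr) = \Poisson(0)$, which equals $0$ almost surely. Adding an independent zero does not change a distribution, so $\Swap(0) + \Swap(\alpha+\beta)$ is exactly $\Swap(\alpha+\beta)$, which gives the claim.

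I expect no real obstacle: all the content sits in \Cref{lem:swap-concave}. The only step needing care is the ordering assumption $\alpha \le \beta$ that the lemma requires, and the symmetric relabeling handles it. The degenerate case $\alpha = 0$ is also covered, since it is an instance of the same argument.
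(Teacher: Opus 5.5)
Your proof is correct and matches the paper's, which obtains the corollary by setting $\varepsilon = \alpha$ in \Cref{lem:swap-concave} and noting that $\Swap(0)$ is a point mass at $0$. Your explicit relabeling to ensure $\alpha \le \beta$ is a small detail the paper leaves implicit.
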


Combining \Cref{cor:swap-merge} with \Cref{lem:regular-load} and recalling that $\sum_{u \ne v} \frac{\lambda_{\{u, v\}}}{2} = d_v = 1$, we get the worst-case distribution of the loads.

\begin{theorem}
    \label{thm:mark-swap-regular-worst-case}
    For any $1$-regular instance and any bin $v \in V$, Mark-and-Swap ensures that $\load_v$ stochastically dominates $\Bernoulli(1 - \frac{1}{e}) + \Swap(1)$.
\end{theorem}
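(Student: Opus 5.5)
The plan is to start from the exact characterization in \Cref{lem:regular-load}: $\load_v \sim \Bernoulli(1-\frac{1}{e}) + \sum_{u \ne v} \Swap(\frac{\lambda_{\{u,v\}}}{2})$, with all summands independent. We then merge the $\Swap$ terms one at a time using \Cref{cor:swap-merge}. Since $d_v = \sum_{u\ne v}\frac{\lambda_{\{u,v\}}}{2} = 1$, the fully merged result is $\Swap(1)$, which gives the theorem.

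The first ingredient is that second-order stochastic dominance is preserved under adding an independent random variable. Suppose $X \succeq Y$ and $W$ is independent of both. For any non-decreasing concave $f$ and any fixed $w$, the function $x \mapsto f(x+w)$ is again non-decreasing and concave. Hence $\E f(X+w) \ge \E f(Y+w)$, and integrating over the law of $W$ gives $X + W \succeq Y + W$.

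The second ingredient is transitivity of $\succeq$, which is immediate from the definition.

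The argument then runs by induction on the number of neighbors $u$ with $\lambda_{\{u,v\}} > 0$. Neighbors with zero rate contribute $\Swap(0)$, a point mass at $0$, and can be dropped. Enumerate the remaining neighbors $u_1,\dots,u_k$ and set $\mu_i = \frac{\lambda_{\{u_i,v\}}}{2}$. At step $i$, apply \Cref{cor:swap-merge} to get $\Swap(\mu_1+\dots+\mu_{i-1}) + \Swap(\mu_i) \succeq \Swap(\mu_1+\dots+\mu_i)$. Then add the independent remainder $\Bernoulli(1-\frac1e) + \sum_{j>i}\Swap(\mu_j)$ to both sides using the first ingredient, and chain the steps by transitivity. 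After $k$ steps the result is $\load_v \succeq \Bernoulli(1-\frac1e) + \Swap(1)$.

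The main subtlety is technical: the graph may be infinite, so $v$ could have countably many neighbors. For finitely many neighbors the induction above suffices. For infinitely many, apply the finite argument to the first $k$ neighbors, which yields dominance over $\Bernoulli(1-\frac1e)+\Swap(\sum_{i\le k}\mu_i)+\sum_{i>k}\Swap(\mu_i)$. Then pass to the limit $k\to\infty$: the tail sum tends to $0$ in probability, and $\Swap(\lambda)$ is continuous in $\lambda$. Monotone or dominated convergence applies for non-decreasing concave $f$, since such $f$ grows at most linearly and the loads have finite mean. In the intended setting with finitely many bins this step is vacuous, so I would treat it as a brief remark.
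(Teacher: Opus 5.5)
Your proposal is correct and matches the paper's argument: the paper proves the theorem in one line by combining \Cref{lem:regular-load} with \Cref{cor:swap-merge} and using $\sum_{u \ne v} \frac{\lambda_{\{u,v\}}}{2} = d_v = 1$. Your extra steps (dominance is preserved when an independent variable is added, transitivity, and induction over neighbors) are the routine details the paper leaves implicit, and your infinite-neighbor remark is unnecessary in the paper's finite setting.
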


We make two remarks about the above worst-case distribution of loads by Mark-and-Swap.
First, this distribution can be achieved in the simple example with two bins and a ball between them with an arrival rate of $2$.
Second, it stochastically dominates the baseline distribution $\Poisson(1)$ given by the Random algorithm.
For the latter claim, we will prove a more general lemma that will be useful for our analysis of the general case in the next subsection.

\begin{lemma}
	\label{lem:markswap-dominate-poisson}
	For any $0 \le \lambda \le 1$:
	\[
		\Bernoulli \Big(\big(1 - \frac{1}{e}\big) \lambda \Big) + \Swap(\lambda)
		\,\succeq\,
		\Poisson(\lambda)
		~.
	\]
\end{lemma}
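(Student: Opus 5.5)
Both sides have mean $\lambda$. Indeed, $\E \Swap(\lambda) = \lambda \, \E (1-T)^+ = \lambda\bigl(1 - (1 - \frac1e)\bigr) = \frac{\lambda}{e}$, and the Bernoulli term contributes $(1-\frac1e)\lambda$. So I will prove the equivalent convex-order statement: $\Poisson(\lambda)$ is a mean-preserving spread of $X \defeq B + S$, where $B \sim \Bernoulli\bigl((1-\frac1e)\lambda\bigr)$ and $S \sim \Swap(\lambda)$ are independent. Since all variables are integer valued with equal means, the standard stop-loss criterion (Rothschild--Stiglitz) reduces this to
\[
	\E \bigl(X - k\bigr)^+ ~\le~ \E \bigl(\Poisson(\lambda) - k\bigr)^+ \qquad \text{for every integer } k \ge 1 ~.
\]
The case $k = 0$ is equality of means. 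The functions $(x-k)^+$ generate all convex functions on $\Z_{\ge 0}$ modulo affine ones, in the same spirit as \Cref{lem:convex-decomposition}.

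To make the comparison tractable, I write $S$ as a Poisson mixture. Let $U = (1-T)^+$. Then $U = 0$ with probability $\frac1e$, and $U$ has density $e^{-(1-u)}$ on $(0,1)$. Conditioned on $U = u$, $S \sim \Poisson(\lambda u)$. Next define
\[
	\Delta_k(\lambda) \defeq \E\bigl(\Poisson(\lambda)-k\bigr)^+ - p(\lambda)\,\E\bigl(S+1-k\bigr)^+ - \bigl(1-p(\lambda)\bigr)\,\E\bigl(S-k\bigr)^+ ~,
\]
where $p(\lambda) = (1-\frac1e)\lambda$. Note that $\Delta_k(0) = 0$. I will use the identity $\frac{\dif}{\dif\mu}\E(\Poisson(\mu)-k)^+ = \Pr[\Poisson(\mu) \ge k]$, together with its mixture analogue $\frac{\dif}{\dif\lambda}\E(S-k)^+ = \E\bigl[U \cdot \Pr[\Poisson(\lambda U) \ge k]\bigr]$. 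With these, I will differentiate $\Delta_k$ in $\lambda$ and aim to show $\Delta_k'(\lambda) \ge 0$ on $[0,1]$. After grouping, this becomes an inequality between the Poisson tail $\Pr[\Poisson(\lambda)\ge k]$ and a weighted average of tails $\Pr[\Poisson(\lambda u) \ge k-1]$ and $\Pr[\Poisson(\lambda u)\ge k]$ over $u$, plus the term $(1-\frac1e)\,\E\bigl[\,(S+1-k)^+ - (S-k)^+\,\bigr] = (1-\frac1e)\Pr[S \ge k-1]$.

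As a sanity check and possible alternative route, I will also use a coupling heuristic. Realize $\Poisson(\lambda)$ as a rate-$\lambda$ process on $[0,1]$, with an independent $T \sim \Exponential(1)$. The arrivals after time $\min(T,1)$ form exactly $\Swap(\lambda)$, and the earlier arrivals are $\Poisson(\lambda \min(T,1))$. Conditionally on $T$, replacing the latter by $\Bernoulli(\lambda\min(T,1))$ only decreases concave expectations. This uses $\lambda \min(T,1) \le 1$, which is where the hypothesis $\lambda \le 1$ enters. The justification is that a concave function on $\Z_{\ge0}$ lies below its chord through $0$ and $1$, so a Bernoulli variable dominates any integer variable of the same mean in $[0,1]$. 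This reduces the statement to comparing $B' + S$ with $B + S$, where $B'$ is a Bernoulli with the same marginal as $B$ but coupled to $S$ through $T$. Unfortunately this coupling is negatively correlated, which goes the wrong way. So I expect to fall back on the derivative computation.

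The main obstacle is the inequality $\Delta_k'(\lambda) \ge 0$ uniformly over all $k \ge 1$ and $\lambda \in [0,1]$. I would first treat $k = 1$ and $k = 2$ explicitly as closed-form integrals in $\lambda$, to see the structure. For general $k$, I would then try to bound $\Pr[\Poisson(\lambda u) \ge k-1]$ from above using monotonicity in $u$ together with the explicit weight $e^{-(1-u)}$. If needed, I would apply a second differentiation to reduce everything to a single-term comparison of Poisson probabilities, valid because $\lambda \le 1$.
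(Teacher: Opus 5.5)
Your reduction is sound as far as it goes. The means agree, and for integer-valued variables with equal means it suffices to compare $\E(X-k)^+$ with $\E(\Poisson(\lambda)-k)^+$ at integers $k\ge 1$. But the proof stops where the real work begins. The inequality $\Delta_k'(\lambda)\ge 0$ for all $k\ge1$ and $\lambda\in[0,1]$ is only stated as a goal. Working out $k=1,2$ and ``trying'' bounds for general $k$ is a plan, not an argument. Monotonicity of $\Delta_k$ is also stronger than the $\Delta_k(\lambda)\ge 0$ you actually need, and you offer nothing that controls it uniformly in $k$. There is an off-by-one in your bookkeeping as well: $(S+1-k)^+-(S-k)^+=\mathbf{1}[S\ge k]$, not $\mathbf{1}[S\ge k-1]$. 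As written, the lemma is not proved.

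The missing idea is a better version of the coupling you discarded. Splitting a rate-$\lambda$ process at an \emph{independent} exponential time has two effects: the early part is Poisson, and the resulting Bernoulli is negatively correlated with $S$. (Also, replacing that early Poisson by a Bernoulli of equal mean \emph{increases} concave expectations, not decreases them.)

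Instead, realize $\Poisson(\lambda)$ as the $\lambda$-thinning of a rate-$1$ process on $[0,1]$, and split at the first rate-$1$ arrival $T\sim\Exponential(1)$.

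\begin{itemize}
\item Let $F'$ indicate that this arrival is kept. When $T<1$ it is a $\Bernoulli(\lambda)$ coin, which is where $\lambda\le 1$ is used; when $T\ge1$, set $F'=0$.
\item Let $S$ count the kept arrivals after $T$.
\end{itemize}

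By the strong Markov property, $S\sim\Swap(\lambda)$ and $F'+S\sim\Poisson(\lambda)$, with $\Pr[F'=1]=(1-\frac1e)\lambda$.

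Now the correlation is \emph{positive}. Given $T<1$, the coin is independent of $S$. So the law of $S$ given $F'=0$ is a mixture of the law of $S$ given $F'=1$ and a point mass at $0$, the latter coming from $T\ge 1$. Hence $S_1\sim(S\mid F'=1)$ first-order dominates $S_0\sim(S\mid F'=0)$.

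Let $p=\Pr[F'=1]$ and take $F\sim\Bernoulli(p)$ independent of $S$. Expanding directly gives, for every concave $f$,
\[
\E f(S+F)-\E f(S+F') \;=\; p(1-p)\Bigl(\E\bigl[f(S_0+1)-f(S_0)\bigr]-\E\bigl[f(S_1+1)-f(S_1)\bigr]\Bigr)\;\ge\;0,
\]
because the increments of $f$ are non-increasing. This decorrelation step is exactly the paper's proof, and it needs no analysis of each $k$ separately.
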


We defer the proof to \Cref{app:markswap-dominate-poisson}.
Finally, we give the mean and variance of distribution $\Swap(\lambda)$.
They follow by the definition of the distribution and basic calculus, which we omit.
	
	


\begin{lemma}
	\label{lem:swap-expectation}
	For any $\lambda \ge 0$, $\E \Swap(\lambda) = \frac{\lambda}{e}$.	
\end{lemma}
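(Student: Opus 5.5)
We compute the mean of $\Swap(\lambda)$ by conditioning on the exponential time $T$ in \Cref{def:swap}, using the tower rule. Given $T$, the variable $S$ is Poisson with rate $\lambda(1-T)^+$, so
\[
	\E \big[ S \mid T \big] \,=\, \lambda (1-T)^+
	~,
\]
and therefore $\E \Swap(\lambda) = \lambda \cdot \E (1-T)^+$ with $T \sim \Exponential(1)$.

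The only remaining step is the elementary integral
\[
	\E (1-T)^+ \,=\, \int_0^1 (1-t)\, e^{-t} \,\dif{t}
	~.
\]
We evaluate it by integration by parts. Take $u = 1-t$ and $\dif{v} = e^{-t}\dif{t}$. The boundary term is $\big[-(1-t)e^{-t}\big]_0^1 = 1$, and the remaining integral is $-\int_0^1 e^{-t}\,\dif{t} = -(1 - e^{-1})$. Together these give $1 - (1 - e^{-1}) = \frac{1}{e}$.

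Multiplying by $\lambda$ yields $\E \Swap(\lambda) = \frac{\lambda}{e}$, as claimed. The case $\lambda = 0$ is consistent, since $\Swap(0)$ is a point mass at $0$.

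As a sanity check, recall from \Cref{lem:regular-first-mark} that $\MarkTime_v$ is a capped $\Exponential(1)$. Then the expected length of time after the first mark is $\E(1-T)^+ = 1/e$, and the second-choice balls arrive at rate $\lambda$ during exactly this window. Combining this with \Cref{lem:regular-first-choice}, a bin in a $1$-regular instance has total expected load $(1 - \frac{1}{e}) + \frac{1}{e} = 1 = d_v$, matching conservation of balls. No step here presents a real obstacle; the only care needed is the truncation at $t = 1$ in the integral.
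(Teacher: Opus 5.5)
Your proof is correct and is the ``definition plus basic calculus'' argument the paper has in mind but omits: conditioning on $T$ gives $\E \Swap(\lambda) = \lambda \, \E(1-T)^+$, and $\int_0^1 (1-t)e^{-t}\,\dif{t} = \frac{1}{e}$. One small imprecision, in your sanity check only: in $\Swap(\frac{\lambda_{\{u,v\}}}{2})$, bin $v$ receives balls as second choice after the \emph{other} endpoint $u$ is marked (time $\MarkTime_u$), not after $\MarkTime_v$; this does not change the conclusion, because all mark times have the same capped-exponential law.
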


With $\lambda = 1$, it implies that the expected load of a bin is $\E \big[ \Bernoulli(1-\frac{1}{e}) + \Swap(1) \big] = 1$, which equals the degree of the bin.
In this sense, Mark-and-Swap is mean-preserving.
This will also be true in the general case, which we will show in the next subsection.


\begin{lemma}
    \label{lem:swap-variance}
    For any $\lambda \ge 0$, $\Var \Swap(\lambda) = e ^ {-1} \lambda + (1 - 2e ^ {-1} - e ^ {-2}) \lambda ^ 2$.
\end{lemma}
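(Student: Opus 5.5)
We will compute the variance with the law of total variance, conditioning on the exponential time $T$ of \Cref{def:swap}. Write $\mu(T) = \lambda (1-T)^+$, so that conditioned on $T$ we have $S \sim \Poisson(\mu(T))$. Then $\E[S \mid T] = \Var[S \mid T] = \mu(T)$, and
\[
	\Var S \,=\, \E\big[\Var[S\mid T]\big] + \Var\big[\E[S\mid T]\big] \,=\, \E \mu(T) + \E \mu(T)^2 - \big(\E \mu(T)\big)^2
	~.
\]
So it suffices to compute the first two moments of $(1-T)^+$ for $T \sim \Exponential(1)$.

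For the first moment, $\E (1-T)^+ = \int_0^1 (1-t) e^{-t}\,\dif{t}$. Substituting $s = 1-t$ gives $e^{-1}\int_0^1 s e^{s}\,\dif{s} = e^{-1}\big[(s-1)e^s\big]_0^1 = e^{-1}$. This is consistent with \Cref{lem:swap-expectation} and gives $\E\mu(T) = \lambda/e$.

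For the second moment, the same substitution gives $\E \big((1-T)^+\big)^2 = e^{-1}\int_0^1 s^2 e^s \,\dif{s} = e^{-1}\big[(s^2-2s+2)e^s\big]_0^1 = e^{-1}(e-2) = 1 - 2e^{-1}$. Hence $\E \mu(T)^2 = (1-2e^{-1})\lambda^2$.

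Substituting into the decomposition yields $\Var S = e^{-1}\lambda + (1-2e^{-1})\lambda^2 - e^{-2}\lambda^2$, which is exactly the claimed formula. There is no real obstacle: the only care needed is in the integrations by parts and in noting that the cap $(\cdot)^+$ restricts the integrals to $[0,1]$. The case $\lambda = 0$ is trivial, since $\Swap(0)$ is a point mass at $0$.
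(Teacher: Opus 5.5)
Your proof is correct: the law of total variance with $\E[S \mid T] = \Var[S \mid T] = \lambda(1-T)^+$, combined with the moments $\E(1-T)^+ = e^{-1}$ and $\E\big((1-T)^+\big)^2 = 1 - 2e^{-1}$, gives exactly the stated formula. The paper omits this proof, saying only that it follows from the definition of $\Swap(\lambda)$ and basic calculus, and your argument is that routine calculation.
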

%


We conclude the subsection by applying \Cref{thm:mark-swap-regular-worst-case} to the examples of matching and completion time minimization.


\begin{shaded}
\vspace{-6pt}
\begin{example}[Matching]
	Consider function $f(L) = \min \{L, 1\}$.
	We have:
	\[
		\E_{L \sim \Bernoulli(1 - \frac{1}{e}) + \Swap(1)} \min \{L, 1\} = 1 - \frac{2}{e ^ 2}
		~,
	\]	
	since $\Pr_{F \sim \Bernoulli(1-\frac{1}{e})} [F = 0] = \frac{1}{e}$, and $\Pr_{S \sim \Swap(1)}[S = 0] = \frac{2}{e}$.
	This matches the optimal amortized bound by Greedy.
	By contrast, the baseline distribution only gives:
	\[
		\E_{L \sim \Poisson(1)} \min \{L, 1\} = 1 - \frac{1}{e}
		~.
	\]	
\end{example}

\begin{example}[Completion Time Minimization]
	Consider function $f(L) = \frac{L^2 + L}{2}$.
	We have:
    \[
        \E_{L \sim \Bernoulli(1-\frac{1}{e}) + \Swap(1)} \frac{L^2 + L}{2} = \frac{3}{2} - \frac{1}{e ^ {2}} \approx 1.36
        ~,
    \]
    by \Cref{lem:swap-expectation,lem:swap-variance} and properties of the Bernoulli distribution.
    By contrast, the optimal amortized bound by Greedy is:
    \[
    	\E_{L \sim \Poisson(2)} \frac{1}{2} \left( \frac{\lfloor \frac{L}{2}  \rfloor ^ 2 + \lfloor \frac{L}{2}  \rfloor}{2} + \frac{\lceil \frac{L}{2}  \rceil ^ 2 + \lceil \frac{L}{2}  \rceil}{2} \right)= \frac{21}{16} - \frac{1}{16e^4} \approx 1.31
        ~,
    \]
    by \Cref{cor:convex-amortized},
    and the baseline distribution only gives:
	\[
		\E_{L \sim \Poisson(1)} \frac{L^2 + L}{2} = \frac{3}{2}
		~.
	\]	
\end{example}
\vspace{-6pt}
\end{shaded} 

%

\subsection{General Case}

To define Mark-and-Swap for general instances, we reduce the problem to the $1$-regular case by splitting the balls and bins into sub-balls and sub-bins such that every sub-bin's degree effectively equals $1$.
Let us first define sub-balls and sub-bins formally.

\begin{definition}[Sub-bins]
	For each bin $v \in V$, let there be $\lceil d_v \rceil$ unit-size \emph{sub-bins}, denoted as $v(i)$ for integers $1 \le i \le \lceil d_v \rceil$. 
\end{definition}

\begin{definition}[Sub-balls]
	For any ball $\{u, v\}$ and any integers $i, j \ge 1$, a sub-ball $\{ u(i), v(j) \}$ is incident to sub-bins $u(i)$ and $v(j)$.
\end{definition}



Next, we present the sampling procedure that defines a randomized mapping from balls $\{u, v\}$ to sub-balls $\{ u(i), v(j) \}$.
See \Cref{fig:mapping} for an illustration.

\begin{algorithm}{Mark-and-Swap Algorithm: Sampling Sub-balls}
    \begin{itemize}

        \item Let $\succ_v$ be an arbitrary total order over the edges incident to $v$.
        \item For each arrived ball $\{ u, v \}$:
        \begin{itemize}
            \item Sample $\theta \in (0, \lambda_{uv}]$ uniformly at random.
            \item Let $i = \lceil \sum_{e \sim u \,:\, e \,\succ_u (u, v)} \lambda_e + \theta \rceil$.
            \item Let $j = \lceil \sum_{e \sim v \,:\, e \,\succ_v (u, v)} \lambda_e + \theta \rceil$.
            \item Let it be a sub-ball $\{ u(i), v(j) \}$.
        \end{itemize}
    \end{itemize}
\end{algorithm}

\begin{figure}[h]
	\centering
	\begin{tikzpicture}	
		\draw[draw=black, thick] (-8,2) -- (-8,-.4) -- (-7,-.4) -- (-7,2);
		\node at (-7.5,-.8) {bin $u$};
		\draw[draw=black, thick] (-4.5,2) -- (-4.5,-.4) -- (-3.5,-.4) -- (-3.5, 2);
		\node at (-4,-.8) {bin $v$};	
		\draw[draw=black, fill=gray!50, thick] (-5.75,4) circle (.3);
		\draw[draw=black, thick, dashed, ->] (-6,3.7) -- (-7.4,2.25);
		\draw[draw=black, thick, dashed, ->] (-5.5,3.7) -- (-4.1,2.25);	
		\node at (-5.75,4.8) {rate $\lambda_{\{u,v\}} = 3$};
		\node[
			single arrow, 
			draw=black!70, 
			fill=gray!50, 
			thick,
      		single arrow head extend=8pt,
      		minimum height=10mm
      	] at (-1.75,2) {};			
		\draw[draw=none, thick, dashed, name path=A] plot[smooth] coordinates { (1,3.2) (2, 3.3) (4,3.9) (6,4) };
		\draw[draw=none, thick, dashed, name path=B] plot[smooth] coordinates { (1,3) (2, 3.1) (4,3.7) (6,3.8) };
		\draw[draw=none, thick, dashed, name path=C] plot[smooth] coordinates { (1,2.2) (2, 2.3) (4,2.9) (6,3) };
		\draw[draw=none, thick, dashed, name path=D] plot[smooth] coordinates { (1,2) (2, 2.1) (4,2.7) (6,2.8) };
		\draw[draw=none, thick, dashed, name path=E] plot[smooth] coordinates { (1,1.7) (2, 1.8) (4,2.4) (6,2.5) };		
		\tikzfillbetween[of=A and B, on layer=bg]{hkublue!50};
		\tikzfillbetween[of=B and C, on layer=bg]{hkugreen!50};
		\tikzfillbetween[of=C and D, on layer=bg]{hkuyellow!50};
		\tikzfillbetween[of=D and E, on layer=bg]{hkured!50};
		\draw[draw=black!70, thick, dashed] plot[smooth] coordinates { (1,3.2) (2, 3.3) (4,3.9) (6,4) };
		\draw[draw=black!70, thick, dashed] plot[smooth] coordinates { (1,3) (2, 3.1) (4,3.7) (6,3.8) };
		\draw[draw=black!70, thick, dashed] plot[smooth] coordinates { (1,2.2) (2, 2.3) (4,2.9) (6,3) };
		\draw[draw=black!70, thick, dashed] plot[smooth] coordinates { (1,2) (2, 2.1) (4,2.7) (6,2.8) };
		\draw[draw=black!70, thick, dashed] plot[smooth] coordinates { (1,1.7) (2, 1.8) (4,2.4) (6,2.5) };
		\draw[draw=black, thick] (0,-1) rectangle (1,1);
		\draw[draw=black, thick] (0,1) rectangle (1,2);
		\draw[draw=black, thick] (0,2) rectangle (1,3);
		\draw[draw=black, thick] (0,3) rectangle (1,4);
		\draw[draw=black, thick] (0,4) -- (0,5);
		\draw[draw=black, thick] (1,4) -- (1,5);
		\node at (0.52,0) {$\dots$};
		\node at (0.5,1.5) {$u(3)$};
		\node at (0.5,2.5) {$u(4)$};
		\node at (0.5,3.5) {$u(5)$};				
		\node at (0.52,4.5) {$\dots$};
		\draw[draw=black, thick] (6,-1) rectangle (7,1);
		\draw[draw=black, thick] (6,1) rectangle (7,2);
		\draw[draw=black, thick] (6,2) rectangle (7,3);
		\draw[draw=black, thick] (6,3) rectangle (7,4);
		\draw[draw=black, thick] (6,4) -- (6,5);
		\draw[draw=black, thick] (7,4) -- (7,5);
		\node at (6.52,0) {$\dots$};
		\node at (6.5,1.5) {$v(3)$};
		\node at (6.5,2.5) {$v(4)$};
		\node at (6.5,3.5) {$v(5)$};		
		\node at (6.52,4.5) {$\dots$};		
		\draw[draw=black!70, thick, <->] (1.3,-1) -- (1.3,1.7);
		\node[fill=white] at (2.2,.3) {$\sum_{e \,\succ_u (u, v)} \lambda_e$};
		\draw[draw=black!70, thick, <->] (5.7,-1) -- (5.7,2.5);
		\node[fill=white] at (4.8,.7) {$\sum_{e \,\succ_v (u, v)} \lambda_e$};			
	\end{tikzpicture}	
	\caption{%
		An illustration of the mapping from balls to sub-balls. 
		In the example, the balls between bins $u$ and $v$ arrive at a rate of $3$. 
		They are mapped to $4$ sub-balls \textcolor{hkublue}{$(u(5), v(5))$}, \textcolor{hkugreen}{$(u(4), v(5))$}, \textcolor{hkuyellow}{$(u(4), v(4))$}, and \textcolor{hkured}{$(u(3), v(4))$}, represented by the curved stripes in $4$ colors.
		The height of each stripe is the normalized (divided by $2$) arrival rate of the sub-ball.
	}
	\label{fig:mapping}
\end{figure}

By definition, we have the following facts.

\begin{lemma}
	\label{lem:sub-ball-poisson}
	The sub-balls arrive independently by a Poisson process.	
\end{lemma}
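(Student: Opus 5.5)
The plan is to invoke the Poisson thinning/marking (coloring) theorem. Fix a ball $\{u,v\}$; its arrivals form a Poisson process of rate $\lambda_{\{u,v\}}$ on $[0,1]$. Each arrival independently draws a fresh $\theta$ uniform on $(0,\lambda_{\{u,v\}}]$, and $\theta$ alone determines the sub-ball $\{u(i),v(j)\}$. The label is a deterministic function of $\theta$, given the fixed orders $\succ_u,\succ_v$ and the rates. So each arrival carries an i.i.d.\ mark, and the marks are independent of the arrival times.

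First, partition $(0,\lambda_{\{u,v\}}]$ into finitely many intervals. The breakpoints are the values of $\theta$ at which either $\sum_{e\succ_u (u,v)}\lambda_e+\theta$ or $\sum_{e\succ_v (u,v)}\lambda_e+\theta$ crosses an integer. On each interval the pair $(i,j)$ is constant, so each sub-ball $\{u(i),v(j)\}$ corresponds to a union of such intervals. Let its total length be $\ell_{ij}$.

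By the marking theorem, the arrivals of $\{u,v\}$ that become sub-ball $\{u(i),v(j)\}$ form independent Poisson processes, one for each $(i,j)$, with rates $\lambda_{\{u,v\}}\cdot \ell_{ij}/\lambda_{\{u,v\}}=\ell_{ij}$. The original balls are independent Poisson processes with independent $\theta$'s. Since distinct balls produce disjoint families of sub-balls, all the resulting sub-ball processes across all balls are mutually independent Poisson processes. This proves the lemma.

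A side remark is worth recording for later use: sub-bin $v(k)$ receives rate $\frac12\sum \ell$ over the portions of the interval $(0, 2d_v]$ lying in $(k-1,k]$. This equals $\frac12$ times the length of $(k-1,k]\cap(0,2d_v]$, which is at most $\frac12$. The normalization may need adjusting depending on the paper's conventions, but this is not required for the present statement.

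There is no real obstacle here. The only care needed is to note that the mark $\theta$ is drawn independently of the arrival history and of the algorithm's state, so that the marking theorem applies.
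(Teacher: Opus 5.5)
Your proof is correct and follows the same reasoning the paper relies on. The paper gives no separate proof and just says the lemma holds ``by definition'': each arrival of ball $\{u,v\}$ gets a fresh, history-independent $\theta$, the same independent sub-sampling used in \Cref{lem:directed-ball-rate,lem:job-ball-poisson}. That is exactly your marking-theorem argument, which makes each sub-ball an independent Poisson process with rate equal to the length of its $\theta$-preimage. Your side remark is also right: as literally written, the sampling rule indexes sub-bins up to $\lceil 2d_v\rceil$, which is off by a factor of $2$ from \Cref{lem:sub-bin-degree} and the figure. This does not affect the present lemma.
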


Let $\lambda_{\{ u(i), v(j) \}}$ be the arrival rate of sub-ball $\{ u(i), v(j) \}$.
We only need the following relation between the arrival rates of the sub-balls and those of the original balls.
The precise dependence is unimportant.

\begin{lemma}
	\label{lem:sub-ball-total-rate}
    For any balls $\{u, v\} \in E$, $\lambda_{\{ u, v \}} = \sum_{i = 1} ^ {\lceil d_u \rceil } \sum_{j = 1} ^ { \lceil d_v \rceil} \lambda_{\{ u(i), v(j) \}}$.
\end{lemma}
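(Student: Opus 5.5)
The plan is to read off the rates directly from the sampling procedure. A ball $\{u,v\}$ arrives at rate $\lambda_{\{u,v\}}$, and each arrival is turned into exactly one sub-ball $\{u(i),v(j)\}$ by the deterministic map $\theta \mapsto (i(\theta), j(\theta))$, where $\theta$ is uniform on $(0,\lambda_{\{u,v\}}]$. By Poisson thinning (the marking theorem), the arrivals of a given sub-ball $\{u(i),v(j)\}$ generated from $\{u,v\}$ form a Poisson process. Its rate is
\[
\lambda_{\{u(i),v(j)\}} \;=\; \lambda_{\{u,v\}} \cdot \Pr_\theta\big[\, i(\theta)=i,\ j(\theta)=j \,\big] \;=\; \big|\{\theta \in (0,\lambda_{\{u,v\}}] : i(\theta)=i,\ j(\theta)=j\}\big| .
\]
Sub-balls arising from different original balls must share both endpoint bins $u$ and $v$. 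So each sub-ball rate comes from the single ball $\{u,v\}$, and distinct sub-balls are thinnings of independent processes or disjoint markings of the same one. This is consistent with \Cref{lem:sub-ball-poisson}.

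Summing over all pairs $(i,j)$ then gives the total measure of the interval, provided every $\theta$ is mapped to some valid pair with $1 \le i \le \lceil d_u\rceil$ and $1 \le j \le \lceil d_v\rceil$. The sum of the probabilities is $1$, so the total rate is $\lambda_{\{u,v\}}$.

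The only real check is this range condition. The index is $i = \lceil \sum_{e \succ_u (u,v)} \lambda_e + \theta\rceil$. The argument of the ceiling is positive, so $i \ge 1$. It is also at most $\sum_{e\sim u}\lambda_e$, which is the full prefix plus $\lambda_{\{u,v\}}$.

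Here a normalization subtlety appears. That quantity equals $2d_u$, not $d_u$, so the upper bound $i \le \lceil d_u\rceil$ would fail unless the prefix sums are understood with the normalized rates $\lambda_e/2$. This matches the caption of \Cref{fig:mapping}, where stripe heights are normalized arrival rates. I will interpret the cumulative positions as $\frac12\sum_{e \succ_u (u,v)}\lambda_e + \theta/2$, or equivalently take $\theta$ uniform on $(0,\lambda_{\{u,v\}}/2]$. Under this reading the largest position is $d_u$, so $i \le \lceil d_u\rceil$, and similarly $j \le \lceil d_v\rceil$. The uniform distribution of $\theta$ still assigns total probability one, so the rate identity is unaffected by the scaling.

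Having fixed this, the identity is immediate. The main obstacle is only this bookkeeping about normalization. Beyond it, the statement is just additivity of Poisson thinning over a partition of $(0,\lambda_{\{u,v\}}]$ into the preimages of the pairs $(i,j)$.
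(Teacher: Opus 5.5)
Your proof is correct and matches the paper's argument, which treats the lemma as immediate from the definition: every arrival of $\{u,v\}$ becomes exactly one sub-ball $\{u(i),v(j)\}$, so the sub-ball rates partition $\lambda_{\{u,v\}}$. Your normalization remark is right and worth keeping: as literally printed, the index $i$ can reach $\lceil 2d_u \rceil$ (a single ball of rate $2$ at $u$ gives $i\in\{1,2\}$ while $\lceil d_u\rceil = 1$), so the prefix sums and $\theta$ must be halved, which is the reading consistent with \Cref{fig:mapping} and \Cref{lem:sub-bin-degree}.
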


\begin{lemma}
	\label{lem:sub-bin-degree}
    For any ball $v \in V$:
    \[
        d_{v(i)} = 
        \begin{cases}
        	~~1 & 1 \le i \le \lfloor d_v \rfloor ; \\
        	\{ d_v \} & i = \lceil d_v \rceil > \lfloor d_v \rfloor \mbox{ (if applicable)} .
        \end{cases}
    \]
\end{lemma}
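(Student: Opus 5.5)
The plan is to compute the degree of a sub-bin $v(i)$ directly from the sampling procedure. The degree is half the total arrival rate of sub-balls incident to $v(i)$. So the task reduces to computing the rate at which arriving balls incident to $v$ are mapped to sub-bin index $i$ on $v$'s side.

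First, fix bin $v$ and lay out its incident balls along a line $[0, 2d_v] = [0, \sum_{e \sim v} \lambda_e]$ according to the order $\succ_v$. Ball $e = \{u, v\}$ occupies the interval $\big(\sum_{e' \succ_v e} \lambda_{e'},\, \sum_{e' \succ_v e} \lambda_{e'} + \lambda_e\big]$. Its sampled point $\sum_{e' \succ_v e}\lambda_{e'} + \theta$ is uniform on this interval, independently across arrivals.

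Second, use Poisson thinning. Arrivals of $e$ at rate $\lambda_e$, each placing a uniform point in an interval of length $\lambda_e$, form a Poisson process with intensity $1$ per unit length on that interval. The intervals of distinct balls partition $(0, 2d_v]$, so the union over all $e \sim v$ gives a unit-intensity Poisson process of points on $(0, 2d_v]$. Here I rely on independence across balls, which also underlies Lemma~\ref{lem:sub-ball-poisson}.

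Third, note that $v$'s side of a sub-ball is $v(j)$ with $j = \lceil \text{point} \rceil$. Hence the total rate of sub-balls incident to $v(j)$ equals the length of $(j-1, j] \cap (0, 2d_v]$. The degree $d_{v(j)}$ is half of this rate, consistent with the normalization $d_v = \frac12\sum_{e\sim v}\lambda_e$.

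There is a mismatch to settle. With points ranging up to $2d_v$ and indices defined as $\lceil\cdot\rceil$ of the raw cumulative rate, we would get $\lceil 2d_v \rceil$ sub-bins of degree $\frac12$ each, not $\lceil d_v\rceil$ sub-bins of degree $1$. So I would read the cumulative sums in the procedure as normalized rates, i.e., $\frac{1}{2}\lambda_e$ with $\theta$ uniform on $(0, \lambda_{uv}/2]$. This matches the caption of Figure~\ref{fig:mapping}, which describes the stripe heights as normalized rates. Under this reading the points lie in $(0, d_v]$ and arrive with intensity $2$ per unit length. The rate of sub-balls at $v(j)$ is then $2\cdot|(j-1,j]\cap(0,d_v]|$, and halving gives degree $1$ for $j \le \lfloor d_v\rfloor$ and $\{d_v\}$ for $j = \lceil d_v\rceil > \lfloor d_v \rfloor$, which is the claim.

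The main obstacle is this bookkeeping of the factor $2$: making sure the intensity and the normalization are consistent, so that each sub-ball's rate contributes half to each endpoint's degree. The rest is routine thinning and superposition of Poisson processes.
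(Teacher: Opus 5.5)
Your proposal is correct. It is the explicit version of what the paper dismisses as ``by definition'': the total rate of sub-balls at $v(j)$ is the rate at which arrivals incident to $v$ place their point in $(j-1,j]$, which by linearity of rates is the length of that window under the layout of $v$'s incident balls (the Poisson thinning you invoke is more than needed, but harmless). You are also right that the sampling procedure must be read with normalized rates, i.e., cumulative sums of $\frac{1}{2}\lambda_e$ and $\theta$ uniform on $(0,\frac{1}{2}\lambda_{\{u,v\}}]$. This is what the figure caption and the later use of $0 \le \mu_{u(j)} \le 1$ with $\sum \mu_{u(j)} = d_v$ require.
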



If all sub-bins have degrees $1$, which would be the case if the bins have integer degrees, then we can treat the sub-balls and sub-bins as balls and bins of a $1$-regular instance and apply the algorithm from the last subsection.
Note that we need to sum up the loads of sub-bins $v(i)$, $i \ge 1$, for every bin $v$, before applying the function $f$.
Hence, the analysis is non-trivial even in this case.

If some bin $v$ has a non-integer degree, then sub-bin $v(\lceil d_v \rceil)$ would have degree $\{ d_v \} < 1$.
In this case, we artificially introduce another Poisson process with rate $1-\{d_v\}$, upon whose arrival we also mark sub-bin $v(\lceil d_v \rceil)$.
This may be seen as introducing a pseudo-sub-ball whose sole purpose is to increase the rate of marking the sub-bin to $1$, i.e., to restore \Cref{lem:regular-marking}.

\begin{algorithm}{Mark-and-Swap Algorithm: Online Allocation}
    \begin{itemize}
        \item When a sub-ball $\{ u(i), v(j) \}$ arrives:
        \begin{enumerate}
            \item Sample $w \in \{u(i), v(j)\}$ uniformly at random.
            \item Assign the sub-ball to sub-bin $w$ if it is not marked.
            \item Otherwise, assign it to the other sub-bin.
            \item Mark sub-bin $w$.
        \end{enumerate}
        \item For every bin $v$, further mark sub-bin $v(\lceil d_v \rceil)$ by a Poisson process with rate $1 - \{ d_v \}$.
    \end{itemize}
\end{algorithm}


%
%




Recall the main result of the $1$-regular case (\Cref{thm:mark-swap-regular-worst-case}) that the load of a $1$-regular bin stochastically dominates distribution $\Bernoulli(1-\frac{1}{e}) + \Swap(1)$.
Now that the last sub-bin of every bin may have a degree less than $1$, we need to consider distributions with more flexible parameters.

\begin{definition}
    \label{def:markswap}
    For any $\lambda \in [0, 1]$, define the distribution $\MarkSwap(\lambda)$ as:
    \[
        \MarkSwap(\lambda) \,\defeq\, \Bernoulli \Big( \big(1 - \frac{1}{e} \big) \lambda \Big) + \Swap(\lambda)
        ~.
    \]    
  	Further, extend the definition to $\lambda > 1$ by letting:
  	\[
  		\MarkSwap(\lambda) \defeq \lfloor \lambda \rfloor \cdot \MarkSwap(1) + \MarkSwap(\{ d_v \})
  		~.
  	\]
\end{definition}

As a corollary of the above definition and \Cref{lem:markswap-dominate-poisson}, we have:
\begin{lemma}
	\label{lem:mark-swap-worst-case}
	For any $\lambda \ge 0$, $\MarkSwap(\lambda) \succeq \Poisson(\lambda)$.
\end{lemma}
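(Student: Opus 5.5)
The plan is to reduce to the case $\lambda \in [0,1]$, which is exactly \Cref{lem:markswap-dominate-poisson}, and then use two facts about second-order stochastic dominance. First, dominance is preserved under independent sums. Second, the Poisson family is additive.

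Write $\lambda = \lfloor \lambda \rfloor + \{\lambda\}$. By \Cref{def:markswap} (reading the fractional parameter as $\{\lambda\}$), $\MarkSwap(\lambda)$ is the law of $X_1 + \dots + X_{\lfloor \lambda \rfloor} + X_0$. Here the $X_i$ are independent, $X_i \sim \MarkSwap(1)$ for $i \ge 1$, and $X_0 \sim \MarkSwap(\{\lambda\})$. Applying \Cref{lem:markswap-dominate-poisson} with parameter $1$ gives $X_i \succeq \Poisson(1)$. Applying it with parameter $\{\lambda\} \in [0,1)$ gives $X_0 \succeq \Poisson(\{\lambda\})$.

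Next I prove that if $X \succeq Y$ and $X' \succeq Y'$, with $X$ independent of $X'$ and $Y$ independent of $Y'$, then $X + X' \succeq Y + Y'$. Fix a non-decreasing concave $f$. Condition on $X' = x'$: the function $z \mapsto f(z + x')$ is non-decreasing and concave, so $\E f(X + x') \ge \E f(Y + x')$. Averaging over $x'$ gives $\E f(X+X') \ge \E f(Y + X')$. Now condition on $Y = y$ and apply the same argument to $z \mapsto f(y + z)$, which gives $\E f(Y + X') \ge \E f(Y + Y')$. Chaining the two yields the claim. The only care needed is integrability of $f$ on these distributions. Concave non-decreasing $f$ on $\Z_{\ge 0}$ grows at most linearly, and all the distributions involved have finite means, so this is not a problem.

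Applying the closure lemma inductively $\lfloor \lambda \rfloor$ times gives $\MarkSwap(\lambda) \succeq \lfloor \lambda \rfloor \cdot \Poisson(1) + \Poisson(\{\lambda\})$. By additivity of independent Poissons, the right-hand side is $\Poisson(\lambda)$. This completes the proof.

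Nothing here is a real obstacle. The substantive content sits in \Cref{lem:markswap-dominate-poisson}, which we may assume; the only step needing any verification is the closure of $\succeq$ under independent convolution.
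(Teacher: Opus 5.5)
Your proposal is correct and follows the paper's route. The paper treats this lemma as an immediate corollary of \Cref{def:markswap} and \Cref{lem:markswap-dominate-poisson}: split $\MarkSwap(\lambda)$ into $\lfloor \lambda \rfloor$ copies of $\MarkSwap(1)$ plus one $\MarkSwap(\{\lambda\})$, compare each part with the corresponding Poisson, and add the Poisson rates. You additionally prove the closure of $\succeq$ under independent sums, including the integrability check, which the paper uses without comment; your argument for it is correct.
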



\begin{theorem}
    \label{thm:mark-swap-worst-case}
    For any instance and any bin $v \in V$, the Mark-and-Swap algorithm guarantees that $\load_v \succeq \MarkSwap(d_v)$.
\end{theorem}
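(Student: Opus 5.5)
We reduce to the $1$-regular analysis sub-bin by sub-bin, and then combine the sub-bins using closure properties of second-order stochastic dominance. Fix a bin $v$ and write $\load_v = \sum_{i=1}^{\lceil d_v \rceil} \load_{v(i)}$.

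\textbf{Step 1: the marking processes.} By \Cref{lem:sub-ball-poisson}, the sub-balls arrive by independent Poisson processes. Together with the artificial marking process of rate $1-\{d_v\}$ on the last sub-bin, the analogue of \Cref{lem:regular-marking} holds for every sub-bin: $v(i)$ is marked by a Poisson process of rate exactly $1$. This uses \Cref{lem:sub-bin-degree}, which gives rate $d_{v(i)}$ from directed sub-balls plus $1-d_{v(i)}$ from the pseudo-process.

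The directed sub-balls with different first choices, and the pseudo-processes, are all independent thinnings of independent Poisson processes. Hence the marking times $\MarkTime_w$ of distinct sub-bins are independent, each with the capped exponential law of \Cref{lem:regular-first-mark}.

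\textbf{Step 2: decomposing the load of each sub-bin.} For a sub-bin $v(i)$, the load is $F_{v(i)} + \sum_{w} S_{(w, v(i))}$, where $w$ ranges over sub-bins of other bins.
\begin{itemize}
\item $F_{v(i)} = 1$ iff $v(i)$ receives a real directed sub-ball as first choice before its first mark. The first marking event is a real directed sub-ball with probability $d_{v(i)}$, independently of its time. Hence $F_{v(i)} \sim \Bernoulli\big((1-\frac1e) d_{v(i)}\big)$.
\item As in \Cref{lem:regular-second-choice}, $S_{(w,v(i))} \sim \Swap\big(\lambda_{\{w,v(i)\}}/2\big)$.
\end{itemize}

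\textbf{Step 3: independence, which is the main obstacle.} These summands are mutually independent, both within a sub-bin and across the sub-bins $v(1),\dots,v(\lceil d_v\rceil)$. The ingredients are:
\begin{itemize}
\item The term $S_{(w,v(i))}$ depends only on $\MarkTime_w$ and on the directed process $(w,v(i))$.
\item The term $F_{v(i)}$ depends only on the processes with first choice $v(i)$. These are disjoint from the directed processes $(w, v(i))$, because a directed process with first choice $v(i)$ has $v(i)$ as its first coordinate.
\item Each $\MarkTime_w$ depends on processes with first choice $w$. These are disjoint from those with first choice $v(i)$, and disjoint across distinct $w$.
\end{itemize}
The subtle point is that $\MarkTime_w$ also depends on the directed process $(w, v(i))$ itself. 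So the pair $(\MarkTime_w, \text{arrivals of }(w,v(i)))$ is one joint object, but it is used only for the single term $S_{(w,v(i))}$. Different pairs $(w,v(i))$ and $(w,v(i'))$ share the same $\MarkTime_w$, which creates correlation when two sub-bins of $v$ are both neighbors of the same $w$.

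I would handle this correlation by noting that the sub-ball sampling procedure makes each sub-ball's endpoint at $w$ determined by contiguous intervals. However, a single $w$ can still neighbor two sub-bins of $v$, for example $u(4)$ adjacent to $v(4)$ and $v(5)$ in the figure. The fallback plan is to condition on all marking times $\MarkTime_w$. Given these, the second-choice counts into $v(i)$ from all $w$ form a Poisson variable with mean $\sum_w \frac{\lambda_{\{w,v(i)\}}}{2}(1-\MarkTime_w)^+$. Then $S_{(w,v(1))}+S_{(w,v(2))}$, sharing $\MarkTime_w$, equals $\Swap(\alpha+\beta)$ in law, and \Cref{cor:swap-merge} shows that the independent version dominates it. So merging shared-$w$ terms only moves toward the worst case. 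I expect this bookkeeping to be the delicate step.

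\textbf{Step 4: combining.} We group all second-choice terms of bin $v$ by source sub-bin $w$ into merged $\Swap$ variables, which are now independent across $w$. Dominance is closed under independent sums, so we may redistribute and merge them using \Cref{lem:swap-concave} and \Cref{cor:swap-merge}.

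The total $\Swap$ parameter is $\sum_i d_{v(i)} = d_v$. We repartition it into $\lfloor d_v\rfloor$ chunks of $1$ and one chunk of $\{d_v\}$. Since merging only decreases in $\succeq$, $\sum \Swap(\cdot) \succeq \lfloor d_v\rfloor\cdot\Swap(1) + \Swap(\{d_v\})$.

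Finally, we add the independent Bernoulli terms $F_{v(i)}$, with parameters $(1-\frac1e)$ for the full sub-bins and $(1-\frac1e)\{d_v\}$ for the last sub-bin. This gives $\load_v \succeq \MarkSwap(d_v)$ by \Cref{def:markswap}.
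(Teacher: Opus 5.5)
Your proposal is correct and, once Step 4 groups the second-choice contributions by source sub-bin $w$ into $S_w \sim \Swap(\mu_w)$ with $\mu_w = \frac{1}{2}\sum_i \lambda_{\{w,v(i)\}}$, it is exactly the paper's proof: $\load_v$ is the independent sum of the first-choice Bernoullis $F_{v(i)}$ and the $S_w$'s, which are then consolidated via \Cref{lem:swap-concave} and \Cref{cor:swap-merge}. Drop the false opening claim of Step 3 that the individual terms $S_{(w,v(i))}$ are mutually independent, and state explicitly that $\mu_w \le d_w \le 1$, which the repartition into unit chunks needs, since splitting a $\Swap$ term larger than $1$ would go the wrong way for $\succeq$.
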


Since $\MarkSwap(\lambda) \succeq \Poisson(\lambda)$ for $0 \le \lambda \le 1$ (\Cref{lem:markswap-dominate-poisson}) and the sum of Poisson distributions is still a Poisson distribution with the rates summed up, the worst-case distribution of loads in \Cref{thm:mark-swap-worst-case} stochastically dominates the baseline distribution $\Poisson(d_v)$.

It is tempting to think that the theorem follows directly by having $\lfloor d_v \rfloor$ sub-bins whose worst-case load distribution is $\MarkSwap(1)$, and one last sub-bin whose worst-case load distribution is $\MarkSwap(\{d_v\})$.
However, this argument is flawed because the loads of different sub-bins may be correlated.
For example, if there were sub-balls between sub-bin $u(1)$ and both $v(1)$ and $v(2)$.
Then, marking $u(1)$ early would lead to greater loads for both $v(1)$ and $v(2)$.

The correct proof below accounts for the first-choice contribution to the total load of a bin $v$ by each sub-bin $v(i)$, and the second-choice contribution to bin $v$ by the first-choices $u(j)$ of the contribution.
This restores the independence of the random variables involved in the proof.

\begin{proof}[Proof of \Cref{thm:mark-swap-worst-case}]
	Let $F_{v(i)}$ be the number of balls that sub-bin $v(i)$ receives as the first choice:
 	\[
 		F_{v(i)} \sim 
 		\begin{cases}
 			\Bernoulli \big( 1-\frac{1}{e} \big) & 1 \le i \le \lfloor d_v \rfloor; \\[2ex]
 			\Bernoulli \big( ( 1-\frac{1}{e} ) \{ d_v \} \big) & i = \lceil d_v \rceil > \lfloor d_v \rfloor \mbox{ (if applicable)}.
 		\end{cases}
 	\]
 	
 	The first case follows by \Cref{lem:regular-first-choice}, while the second case holds because the event of marking sub-bin $v(\lceil d_v \rceil)$ for the first time is an actual allocation with probability $\{d_v\}$.
 	
 	Further, for every sub-bin $u(j)$ for $u \ne v$ and $1 \le j \le \lceil d_u \rceil$, let $S_{(u(j),v(i))}$ be the number of directed balls $(u(j), v(i))$ allocated to their second choice bin $v(i)$.
 	Let the total number of such second-choice allocations from $u(j)$ be:
 	\[
 		S_{u(j)} \,\defeq\, \sum_{i=1}^{\lceil d_v \rceil} S_{(u(j),v(i))}
 		~.
 	\]
 	
 	Correspondingly, let the normalized total rate of sub-balls between $u(j)$ and sub-bins $v(i)$ be:
 	\[
 		\mu_{u(j)} \,\defeq\, \frac{1}{2} \sum_{i=1}^{\lceil d_v \rceil} \lambda_{\{u(j),v(i)\}}
 	\]
 	
 	We have $S_{u(j)} \sim \Swap ( \mu_{u(j)} )$.
 	Combining with the independence of $F_{v(i)}$'s and $S_{u(j)}$'s, we conclude that bin $v$'s load $\load_v = \sum_{i=1}^{\lceil d_v \rceil} F_{v(i)} + \sum_{u \ne v} \sum_{j=1}^{\lceil d_u \rceil} S_{u(j)}$ follows distribution:
 	\[
        \lfloor d_v \rfloor \cdot \Bernoulli\big( 1 - \frac{1}{e} \big) + \Bernoulli \Big( \big(1 - \frac{1}{e}\big) \{ d_v \} \Big) + \sum_{u \ne v} \sum_{j=1}^{\lceil d_u \rceil} \Swap(\mu_{u(j)})
        ~.
    \]
    
    Finally, repeatedly applying \Cref{lem:swap-concave} and \Cref{cor:swap-merge} shows that:
    \[
    	\sum_{u \ne v} \sum_{j=1}^{\lceil d_u \rceil} \Swap(\mu_{u(j)})\succeq \lfloor d_v \rfloor \cdot \Swap (1) + \Swap \big \{d_v\} \big)
    	~,
    \]
    where we use the fact that $0 \le \mu_{u(j)} \le 1$ and $\sum_u \sum_j \mu_{u(j)} = d_v$.
	The theorem then follows.
\end{proof}


Next, we present the mean and variance of distribution $\MarkSwap(\lambda)$. 
They follow by the definition of distribution, \Cref{lem:swap-expectation}, \Cref{lem:swap-variance}, and properties of Bernoulli distribution.
\begin{lemma}
    \label{lem:markswap-expectation}
    For any $\lambda \ge 0$, $\E \MarkSwap(\lambda) = \lambda$.
\end{lemma}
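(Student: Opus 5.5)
The plan is to reduce the statement to the case $\lambda \in [0,1]$ via \Cref{def:markswap}, and then use linearity of expectation together with \Cref{lem:swap-expectation}.

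\emph{Case $0 \le \lambda \le 1$.} By definition, $\MarkSwap(\lambda)$ is the distribution of $X + S$, where $X \sim \Bernoulli\big((1-\frac{1}{e})\lambda\big)$ and $S \sim \Swap(\lambda)$ are independent. Independence is not needed for the mean, so by linearity
\[
  \E \MarkSwap(\lambda) \,=\, \Big(1 - \frac{1}{e}\Big)\lambda + \E \Swap(\lambda) \,=\, \Big(1-\frac{1}{e}\Big)\lambda + \frac{\lambda}{e} \,=\, \lambda ~,
\]
where \Cref{lem:swap-expectation} supplies $\E\Swap(\lambda) = \frac{\lambda}{e}$.

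\emph{Case $\lambda > 1$.} Here \Cref{def:markswap} expresses $\MarkSwap(\lambda)$ as a sum of $\lfloor \lambda \rfloor$ independent copies of $\MarkSwap(1)$ and one copy of $\MarkSwap(\{\lambda\})$. (The definition writes $\{d_v\}$, which is clearly intended to be $\{\lambda\}$.) Both $1$ and $\{\lambda\}$ lie in $[0,1]$, so the previous case applies to each summand. Linearity then gives
\[
  \E \MarkSwap(\lambda) \,=\, \lfloor \lambda \rfloor \cdot 1 + \{\lambda\} \,=\, \lambda ~.
\]

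If one wants to verify \Cref{lem:swap-expectation} itself, condition on $T$ in \Cref{def:swap}. This gives
\[
  \E \Swap(\lambda) \,=\, \lambda \, \E (1-T)^+ \,=\, \lambda \int_0^1 (1-t) e^{-t} \,\dif{t} \,=\, \frac{\lambda}{e} ~,
\]
since the integral equals $1 - \big(1 - \frac{2}{e}\big) - \frac{1}{e} = \frac{1}{e}$.

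There is no real obstacle. The only point needing care is reading the extended definition with $\{\lambda\}$ in place of $\{d_v\}$.
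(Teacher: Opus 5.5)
Your proposal is correct and follows the paper's own argument: linearity of expectation on the definition of $\MarkSwap(\lambda)$, using $\E \Swap(\lambda) = \frac{\lambda}{e}$ from \Cref{lem:swap-expectation} for $\lambda \le 1$ and summing over the components for $\lambda > 1$. You are also right that $\{d_v\}$ in \Cref{def:markswap} should read $\{\lambda\}$.
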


Combining \Cref{thm:mark-swap-worst-case} and \Cref{lem:markswap-expectation} shows that the expected load of a bin $v \in V$ equals its degree $d_v$, as promised in a discussion in the last subsection.
Moreover, the variance is strictly smaller than the counterpart by the baseline distribution $\Poisson(\lambda)$. 

\begin{lemma}
    \label{lem:markswap-variance}
    For any $0 \le \lambda \le 1$, $\Var \MarkSwap(\lambda) = \lambda - \frac{2}{e^2} \lambda^2$.
\end{lemma}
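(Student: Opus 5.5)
The plan is to use independence and write $\Var \MarkSwap(\lambda)$, for $0 \le \lambda \le 1$, as the sum of the variances of its two components. By \Cref{def:markswap}, $\MarkSwap(\lambda)$ is the sum of an independent Bernoulli and a $\Swap(\lambda)$ variable, so
\[
	\Var \MarkSwap(\lambda) \,=\, \Var \Bernoulli\Big(\big(1-\tfrac{1}{e}\big)\lambda\Big) + \Var \Swap(\lambda)
	~.
\]
The Swap term is given directly by \Cref{lem:swap-variance}.

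For the Bernoulli term, set $p = (1-e^{-1})\lambda$. Then
\[
	p(1-p) = (1-e^{-1})\lambda - (1-e^{-1})^2\lambda^2 = (1-e^{-1})\lambda - (1 - 2e^{-1} + e^{-2})\lambda^2
	~.
\]
Adding $e^{-1}\lambda + (1-2e^{-1}-e^{-2})\lambda^2$ from \Cref{lem:swap-variance}, the linear terms sum to $\lambda$. For the quadratic terms, the coefficient is $-(1-2e^{-1}+e^{-2}) + (1-2e^{-1}-e^{-2})$, which equals $-2e^{-2}$. This gives $\lambda - \frac{2}{e^2}\lambda^2$, as claimed.

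The only step needing care is confirming that \Cref{lem:swap-variance} is correct, since its proof was omitted. I would verify it with the law of total variance, conditioning on $T \sim \Exponential(1)$ and writing $X = \lambda(1-T)^+$. Then $\Var \Swap(\lambda) = \E X + \Var X$.

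For the first term, $\E X = \lambda \int_0^1 (1-t)e^{-t}\,\dif t = \lambda e^{-1}$, which agrees with \Cref{lem:swap-expectation}. For the second term, I compute
\[
	\E X^2 = \lambda^2 \int_0^1 (1-t)^2 e^{-t}\,\dif t
	~.
\]
Integrating by parts twice gives $\int_0^1 (1-t)^2 e^{-t}\,\dif t = 1 - 2e^{-1}$. Hence $\Var X = \lambda^2(1-2e^{-1}-e^{-2})$, which matches the lemma.
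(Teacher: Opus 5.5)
Your proposal is correct and matches the paper's (omitted) argument. Since $\MarkSwap(\lambda)$ is by definition an independent sum, its variance is the Bernoulli variance $p(1-p)$ with $p = (1-\frac{1}{e})\lambda$ plus $\Var \Swap(\lambda)$ from \Cref{lem:swap-variance}, and your algebra combining the two is right. Your extra check of \Cref{lem:swap-variance} via the law of total variance, using $\int_0^1 (1-t)e^{-t}\,\dif t = e^{-1}$ and $\int_0^1 (1-t)^2 e^{-t}\,\dif t = 1 - 2e^{-1}$, is also correct and fills in the step the paper leaves to ``basic calculus.''
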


\begin{shaded}
\vspace{-6pt}
\begin{example}[Completion Time Minimization]
	Consider function $f(L) = \frac{L^2 + L}{2}$.
	We have:
    \[
        \E_{L \sim \MarkSwap(d_v)} \frac{L^2 + L}{2} ~=~ \frac{1}{2}d_v ^ 2 + d_v - e ^ {-2} \lfloor d_v \rfloor - e ^ {-2} \{ d_v \} ^ 2
        ~,
    \]
    by \Cref{lem:markswap-expectation,lem:markswap-variance}.
    By contrast, the baseline distribution only gives:
	\[
		\E_{L \sim \Poisson(d_v)} \frac{L^2 + L}{2} ~=~ \frac{1}{2} d_v ^ 2 + d_v
		~.
	\]	
\end{example}
\vspace{-6pt}
\end{shaded}

\section{Applications in Completion Time Minimization}
\label{sec:completion-time}

\paragraph{Online Stochastic Completion Time Minimization.}
Consider a set of machines $M$ and a set of (types of) jobs $J$.
We will refer to a job of type $j$ simply as a job $j$ for brevity.
For any $j \in J$, jobs $j$ arrive by a Poisson process with rate $\lambda_j$ in time horizon $[0, 1]$.
Processing a job $j$ on machine $i$ takes time $s_{ij}$, which we will refer to as the \emph{size} of job $j$ on machine $i$.

When a job arrives, the algorithm observes its type and immediately allocates it to a machine.
After all jobs are allocated, each machine processes the allocated jobs by Smith's Rule, i.e., in ascending order of their sizes. 
A job's \emph{completion time} is the sum of the sizes of all jobs processed before it and its own size.
We want to minimize the expected total completion time of the jobs.

\bigskip

Next, we design an online algorithm for this problem as an application of the Balls into Bins algorithm from \Cref{sec:mark-swap}.


\begin{theorem}
	\label{thm:completion-time}
	There is a $1.435$-competitive online algorithm for Online Stochastic Completion Time Minimization.
\end{theorem}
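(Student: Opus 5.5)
We follow the relax-and-round template: write an LP (or convex) relaxation whose value lower bounds the expected optimum, decompose its solution into two-choice balls arriving by a Poisson process, round each arriving ball with Mark-and-Swap, and bound the cost of every machine via \Cref{thm:mark-swap-worst-case} and the size-level formula \eqref{eqn:completion-time-by-size}.

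\textbf{Relaxation.} Let $x_{ij} \ge 0$ be the rate at which type-$j$ jobs are sent to machine $i$, with $\sum_i x_{ij} = \lambda_j$. For machine $i$ and size level $s$, let $y_i(s) = \sum_{j : s_{ij} \ge s} x_{ij}$; this is the expected load $\E \load_i(s)$. In the spirit of Skutella's quadratic program, but exploiting Poisson arrivals as in Online Stochastic Matching, we take the objective
\[
	\sum_i \int_0^\infty \Big( \frac{y_i(s)^2}{2} + y_i(s) - h\big(y_i(s)\big) \Big) \dif{s}
\]
for a suitable correction $h$. We prove it is a lower bound on $\mathrm{OPT}$ by showing that, for any level $s$ and machine $i$, the optimal policy's $\E\frac{\load_i(s)^2+\load_i(s)}{2}$ is at least this quantity with $y_i(s) = \E\load_i(s)$. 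The baseline choice $h=0$ already follows from Jensen's inequality. The point is that a load with mean $y$ produced by Poisson arrivals cannot be too concentrated: its variance is bounded below, for instance by $\{y\}(1-\{y\})$, and more strongly using that arrivals are Poisson. Since the objective is convex in $x$, it can be solved.

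\textbf{Rounding.} We split each job type's rate $\lambda_j$ into two-choice balls $\{i,i'\}$ so that each machine receives normalized rate $x_{ij}$, following the decompositions of \citet{ChenHS:FOCS:2024}. The balls are handled size-level-consistently: we sort each machine's incident balls by decreasing size in the orders $\succ_v$ of the sub-ball sampling, so that for every level $s$ the balls of size at least $s$ occupy a prefix of sub-bins. The Mark-and-Swap proof of \Cref{thm:mark-swap-worst-case} then applies to the restricted process at each level $s$, giving $\load_i(s) \succeq \MarkSwap(y_i(s))$. With the convex $f$ and \Cref{lem:markswap-expectation,lem:markswap-variance}, the expected cost at level $s$ is at most $\frac{1}{2}y^2 + y - e^{-2}\lfloor y\rfloor - e^{-2}\{y\}^2$. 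Integrating over $s$ and summing over $i$ bounds the algorithm's cost.

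\textbf{Ratio and main obstacle.} It remains to bound
\[
	\max_{y \ge 0} \frac{\frac{1}{2}y^2 + y - e^{-2}\lfloor y\rfloor - e^{-2}\{y\}^2}{\frac{1}{2}y^2 + y - h(y)},
\]
a one-variable numerical check whose answer should be $1.435$. We expect the main obstacle to be twofold. First, we need a relaxation strong enough, that is, a large enough valid $h$, for this ratio to drop below $1.5$. Second, we must show that Mark-and-Swap's guarantee holds simultaneously for all size levels under one coupled randomness, since restricting the sub-ball structure to a level must not break the rate-$1$ marking property of \Cref{lem:regular-marking}. We would first try a nested-prefix construction with marks shared across levels and verify that the worst case is attained near $y \approx 1$.
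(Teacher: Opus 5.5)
There is a genuine gap. Both halves of your plan fail on the single-machine instance: one machine, one unit-size job type of rate $1$. There every algorithm, and OPT, pays $\E\frac{\load^2+\load}{2}=\frac32$ with $\load\sim\Poisson(1)$.

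\textbf{The relaxation is too weak.} Your per-machine, per-level bound is a function $\phi(y)=\frac12y^2+y-h(y)$ of $y=\E\load_i(s)$ alone, and no such bound beats integrality at $y=1$. Take $n$ identical machines and one unit-size type of rate $n$. The symmetrized offline optimum gives each machine expected load $1$ and expected cost $1+o(1)$, since almost every machine receives exactly one job. Your own per-machine validity argument therefore forces $\phi(1)\le 1$. In the single-machine instance your relaxation is then at most $1$ against OPT $=\frac32$, so no ratio below $\frac32$ can be proved against it.

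Separately, $h=0$ does not follow from Jensen. Jensen gives $\frac{y^2+y}{2}$, i.e.\ $h(y)=\frac y2$. The choice $h=0$ is the Poisson cost, which OPT beats: about $1.31$ per machine with two machines and rate $2$.

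The Poisson structure you want cannot be seen through $y$. What matters is whether a machine's load comes from job types with $x_{ij}$ close to $\lambda_j$. That is exactly what the paper's per-subset constraints $\sum_{j\in S}\sum_{k\le\ell}y_{ijk}(s)\le\E_{n\sim\Poisson(\lambda_S)}\min\{n,\ell\}$ encode, with $y_{ijk}(s)$ tracking the type of the $k$-th largest job.

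\textbf{The rounding is not always available.} A two-choice ball gives each endpoint at most half its rate. So your decomposition exists only if $x_{ij}\le\lambda_j/2$ for all $i$ (cf.\ \Cref{lem:job-ball-pairing-exist}). In the single-machine instance it does not exist, and your level bound $\frac12y^2+y-e^{-2}\lfloor y\rfloor-e^{-2}\{y\}^2=\frac32-e^{-2}$ is simply false there. You need one-choice balls of rate $2x_{ij}-\lambda_j$, which only gives $\load_i(s)\succeq\Poisson(d_1)+\MarkSwap(d_2)$. The algorithm's cost is then not a function of $y$ either.

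A sign that the two claims are inconsistent: if both held, the best valid $\phi$, namely the integrality bound $\frac{y^2+y+\{y\}(1-\{y\})}{2}$, would make your one-variable ratio $\frac32-e^{-2}\approx1.365$, not $1.435$.

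\textbf{What is missing.} You need a step linking the one-choice mass to the LP value. Split $z_{i1}(s)=a+b$, with $a\le d_2$ charged to two-choice balls and $b$ to one-choice balls, and set $c=d-a-b$. The $\ell=1$ Poisson constraints, via converse Jensen (\Cref{lem:converse-jensen}), give $\sum_j(2y_{ij1}(s)-\lambda_j)^+\le1-\ln2$, hence $b\le c+1-\ln2$. Then optimize over $a,b,c$ subject to $a+b\le1$ and $b\le c+1-\ln 2$. The binding case is $c=0$, $b=1-\ln2$, $a=\ln2$, which gives $\frac32-\frac{(\ln2)^2}{e^2}\approx1.435$.

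Your second worry, by contrast, is not a real obstacle. The cost is an integral over $s$ of per-level expectations, so only the marginal bound at each $(i,s)$ is needed. With $\succ_i$ in descending size order, each level is a prefix of sub-bins, and marks still arrive from all sub-balls at rate $1$.
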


\subsection{Overview of Algorithm and Analysis}

We first explain the relax-and-round approach for online stochastic resource allocation problems and the existing ingredients in the literature.
At the beginning of the time horizon, before any job arrives, we solve a mathematical program relaxation of the problem to obtain fractional allocations $\vec{x}_j = (x_{ij})_{i \in M}$ of jobs $j \in J$.
Then, the stochastic environment realizes the jobs one by one by the Poisson process.
For each incoming job $j$, we use an online rounding algorithm to choose a machine $i$ based on the job's fractional allocation $\vec{x}_j$ and the rounding algorithm's internal state.
See \Cref{fig:relax-and-round} below for an illustration.

\begin{figure}[h]
\centering
\begin{tikzpicture}
	\draw[thick,fill=hkured!40,rounded corners] (-1.0,0.0) rectangle +(2.5,4.0) node[midway,align=center] {stochastic\\ environment};
	\draw[thick,fill=hkugreen!40,rounded corners] (9.0,0.0) rectangle +(2.5,4.0) node[midway,align=center] {mathematical\\ program};
	\draw[thick,fill=hkuyellow!40,rounded corners] (4.2,0.0) rectangle +(2.1,1.5) node[midway,align=center] {rounding};
	\draw[thick,->] (1.7,3.25) -- (8.8,3.25);
	\draw[thick,->] (8.8,0.75) -- (6.5,0.75);
	\draw[thick,->] (4.0,0.75) -- (1.7,0.75);
	\draw[thick,fill=white] (4.3,3.4) rectangle +(.4,.4) node[midway] {$4$};
	\draw[thick,fill=white] (4.3,4.0) rectangle +(.4,.4) node[midway] {$3$};
	\draw[thick,fill=white] (4.3,4.6) rectangle +(.4,.4) node[midway] {$2$};
	\draw[thick,fill=white] (4.3,5.2) rectangle +(.4,.4) node[midway] {$1$};
	\draw[thick,fill=white] (6.0,4.5) circle (6pt);
	\draw[thick,dotted] (4.7,3.6) -- (5.8,4.4);
	\draw[thick,dotted] (4.7,4.2) -- (5.8,4.47);
	\draw[thick,dotted] (4.7,4.8) -- (5.8,4.52);
	\draw[thick,dotted] (4.7,5.4) -- (5.8,4.6);
	\draw[thick,fill=white] (6.7,0.9) rectangle +(.4,.4) node[midway] {$4$};
	\draw[thick,fill=white] (6.7,1.5) rectangle +(.4,.4) node[midway] {$3$};
	\draw[thick,fill=white] (6.7,2.1) rectangle +(.4,.4) node[midway] {$2$};
	\draw[thick,fill=white] (6.7,2.7) rectangle +(.4,.4) node[midway] {$1$};
	\draw[thick,fill=white] (8.4,2.0) circle (6pt);
	\draw[thick,dotted] (7.1,1.1) -- (8.2,1.9) node[pos=0.45,sloped,fill=white,below=-1pt] {\footnotesize $0.2$};
	\draw[thick,dotted] (7.1,1.7) -- (8.2,1.97) node[pos=0.45,sloped,fill=white,below=-5pt] {\footnotesize $0.4$};
	\draw[thick,dotted] (7.1,2.3) -- (8.2,2.02) node[pos=0.45,sloped,fill=white,above=-5pt] {\footnotesize $0.1$};
	\draw[thick,dotted] (7.1,2.9) -- (8.2,2.1) node[pos=0.45,sloped,fill=white,above=-1pt] {\footnotesize $0.3$};
	\draw[thick,fill=white] (2.7,1.0) rectangle +(.4,.4) node[midway] {$1$};
\end{tikzpicture}
\caption{Illustration of the relax-and-round approach. The squared boxes represent the machines, and the circles represent the jobs.}
\label{fig:relax-and-round}
\end{figure}

For example, one may adopt Skutella's quadratic program~\cite{Skutella:JACM:2001} into the stochastic model and apply independent rounding to allocate each incoming job $j$ to a machine $i$ with probability proportional to $x_{ij}$.
We rephrase the quadratic program below to be self-contained.
\begin{align*}
	\operatorname{minimize} \quad & 
	\int_0^\infty \sum_{i \in M} \frac{\load_i(s)^2 + \load_i(s)}{2} \,\dif{s} \\
	\operatorname{subject\ to} \quad &
	\load_i(s) \ge \sum_{j \in J : s_{ij} \ge s} x_{ij} 
	&& \forall i \in M, \forall s > 0 \\
	&
	\sum_{i \in M} x_{ij} \ge \lambda_j
	&& \forall j \in J \\[.5ex]
	&
	x_{ij}, \load_i(s) \ge 0 
	&& \forall i \in M, \forall j \in J, \forall s > 0
\end{align*}

Unfortunately, we must replace both the quadratic program and the independent rounding algorithm to get a competitive ratio better than $\frac{3}{2}$.
More precisely, the optimal objective of the quadratic program can be $\frac{3}{2}$ times larger than the expected optimal offline objective, e.g., for an instance with one machine and one unit-size job arriving at unit rate.
In a nutshell, the quadratic program's objective equals the completion time when there is exactly one job all the time, while in reality, the number of jobs follows the exponential distribution with unit rate.

The online matching literature addressed a similar issue by introducing additional linear constraints, e.g., the so-called natural Poisson constraints~\cite{HuangS:STOC:2021,HuangSY:STOC:2022,ChenHS:FOCS:2024,Yan:SODA:2024,TorricoAT:MathProg:2018}:
\[
	\forall S \subseteq J ~, \qquad \sum_{j \in S} x_{ij} \le 1 - \exp\Big(-\sum_{j \in S} \lambda_j \Big)
	~,
\]
which rely on the fact that each bin/offline vertex can be matched at most once. 
This is no longer true for scheduling problems, as each machine can take an arbitrary number of jobs. 
\Cref{sec:completion-time-lp} presents how we formulate an LP that adopts the idea of Poisson constraints into the context of scheduling problems.

Regarding the rounding algorithm, we use a reduction of general rounding to two-way rounding by~\citet{ChenHS:FOCS:2024}.
The idea is to sub-sample one or two machines with integral or half-integral allocation to them, such that the expected amount of the job allocated to each machine agrees with the fractional allocation given by the LP solution.
If two machines were sub-sampled, we apply the Mark-and-Swap algorithm as a two-way online rounding to select one of them.
Otherwise, we trivially allocate the job to the only sub-sampled machine.
\Cref{sec:completion-time-reduction} formally defines this reduction.
See \Cref{fig:decomposition} for an illustration.

\begin{figure}[h]
\centering
\begin{tikzpicture}
	\draw[thick,fill=gray!10,rounded corners] (0.0,-2.0) rectangle +(2.1,1.5) node[midway,align=center] {(trivial)\\ one-way\\ rounding};	
	\draw[thick,fill=hkuyellow!40,rounded corners] (0.0,0.0) rectangle +(2.1,1.5) node[midway,align=center] {two-way\\ rounding};
	\draw[thick,fill=hkublue!40,rounded corners] (5.0,0.0) rectangle +(2.1,1.5) node[midway,align=center] {decom-\\position};
	\draw[thick,->] (-0.2,0.75) -- (-2.7,0.75);
	\draw[thick,->] (4.8,0.75) -- (2.3,0.75);
	\draw[thick,->] (9.8,0.75) -- (7.3,0.75);
	\draw[thick,->] (4.8,0.65) -- (2.3,-1.25);	
	\draw[thick,->] (-0.2,-1.25) -- (-2.7,-1.25);		
	\draw[thick,fill=white] (7.7,0.9) rectangle +(.4,.4) node[midway] {$4$};
	\draw[thick,fill=white] (7.7,1.5) rectangle +(.4,.4) node[midway] {$3$};
	\draw[thick,fill=white] (7.7,2.1) rectangle +(.4,.4) node[midway] {$2$};
	\draw[thick,fill=white] (7.7,2.7) rectangle +(.4,.4) node[midway] {$1$};
	\draw[thick,fill=white] (9.4,2.0) circle (6pt);
	\draw[thick,dotted] (8.1,1.1) -- (9.2,1.9) node[pos=0.45,sloped,fill=white,below=-1pt] {\footnotesize $0.2$};
	\draw[thick,dotted] (8.1,1.7) -- (9.2,1.97) node[pos=0.45,sloped,fill=white,below=-5pt] {\footnotesize $0.4$};
	\draw[thick,dotted] (8.1,2.3) -- (9.2,2.02) node[pos=0.45,sloped,fill=white,above=-5pt] {\footnotesize $0.1$};
	\draw[thick,dotted] (8.1,2.9) -- (9.2,2.1) node[pos=0.45,sloped,fill=white,above=-1pt] {\footnotesize $0.3$};
	\draw[thick,fill=white] (2.7,1.0) rectangle +(.4,.4) node[midway] {$3$};
	\draw[thick,fill=white] (2.7,2.2) rectangle +(.4,.4) node[midway] {$1$};
	\draw[thick,fill=white] (4.4,1.8) circle (6pt);
	\draw[thick,dotted] (3.1,1.2) -- (4.2,1.77) node[pos=0.45,sloped,fill=white] {\footnotesize $0.5$};
	\draw[thick,dotted] (3.1,2.4) -- (4.2,1.83) node[pos=0.45,sloped,fill=white] {\footnotesize $0.5$};
	\draw[thick,fill=white] (-1.65,1.0) rectangle +(.4,.4) node[midway] {$1$};
	\draw[thick,fill=white] (2.7,-1.6) rectangle +(.4,.4) node[midway] {$1$};
	\draw[thick,fill=white] (4.4,-1.4) circle (6pt);
	\draw[thick,dotted] (3.1,-1.4) -- (4.2,-1.4) node[pos=0.5,sloped,fill=white] {\footnotesize $1.0$};
	%
	\draw[thick,fill=white] (-1.65,-1.0) rectangle +(.4,.4) node[midway] {$1$};
\end{tikzpicture}
\caption{Illustration of the reduction from general online rounding to two-way rounding.}
\label{fig:decomposition}
\end{figure}



\subsection{Linear Program Relaxation}
\label{sec:completion-time-lp}

This subsection introduces our LP relaxation for the offline optimal solution, i.e., when the jobs are realized by the same Poisson process, but an (offline) algorithm can observe the realization of all jobs and then find the optimal allocation that minimizes the total completion time.
Its solution will serve as a reference for online decision-making.
Further, its optimal objective value will serve as a lower bound of the expected minimum total completion time.
Our competitive analysis will compare the online algorithm's objective to the optimal LP objective value.

Let $x_{ij}$ be the expected number of jobs of type $j$ that are allocated to machine $i$.
Further, let $y_{ijk}(s)$ be the probability that, among jobs of sizes at least $s$ allocated to machine $i$, the $k$-th largest job is of type $j$.
Finally, let $z_{ik}(s)$ denote the probability that machine $i$ is assigned with at least $k$ jobs of sizes at least $s$.

For any subset of jobs $S \subseteq J$, we write $\lambda_S \defeq \sum_{j \in S} \lambda_j$ for brevity.
Consider the following LP:
\begin{align*}
	\mathrm{minimize} \qquad & \sum_{i \in M} \int_0^\infty \sum_{k=1}^\infty k \cdot z_{ik}(s)\,\dif{s} \\
	\mathrm{subject\ to} \qquad & \sum_{i \in M} x_{ij} = \lambda_j && \forall j \in J \\
	& \sum_{k=1}^\infty y_{ijk}(s) = x_{ij} && \forall i \in M, \forall j \in J, \forall s \in (0, s_{ij}] \\
	& \sum_{j \in S} \sum_{k = 1}^\ell y_{ijk}(s) \le \E_{n \sim \Poisson(\lambda_S)} \min \{ n, \ell \} && \forall i \in M, \forall S \subseteq J, \forall \ell \ge 1, \forall s > 0\\[1ex]
	& z_{ik}(s) = \sum_{j \in J} y_{ijk}(s) && \forall i \in M, \forall k \ge 1, \forall s > 0
\end{align*}

\begin{lemma}
    \label{lem:completion-time-lp}
	The optimal LP objective is no greater than the expected minimum total completion time.
\end{lemma}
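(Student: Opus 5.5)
The plan is to take the offline optimal policy (which sees the realized jobs and schedules optimally), write down the variables its random outcome induces, and check two things: this assignment is feasible for the LP, and its LP objective equals the expected minimum total completion time. Then the LP optimum is at most this value. Fix any deterministic tie-breaking rule in the offline optimum, so the allocation is a measurable function of the realization. Define $x_{ij}$ as the expected number of type-$j$ jobs the optimum puts on machine $i$. Define $y_{ijk}(s)$ as the probability that, among jobs on $i$ of size at least $s$ (sorted by size, ties broken by a fixed rule), the $k$-th largest has type $j$. Define $z_{ik}(s)$ as the probability that $i$ receives at least $k$ jobs of size at least $s$.

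\textbf{Equality constraints and objective.} The number of type-$j$ jobs is $\Poisson(\lambda_j)$, and every job is allocated, so summing over machines gives expected count $\lambda_j$. For $s \le s_{ij}$, every type-$j$ job on $i$ has size at least $s$ and so occupies exactly one rank $k$. Hence $\sum_k y_{ijk}(s)$ is the expected number of such jobs, which is $x_{ij}$. The identity $z_{ik}(s) = \sum_j y_{ijk}(s)$ holds because the $k$-th largest exists if and only if there are at least $k$ jobs. For the objective, \eqref{eqn:completion-time-by-size} with Smith's rule gives the completion time on machine $i$ as $\int_0^\infty \frac{\load_i(s)^2+\load_i(s)}{2}\,\dif{s}$. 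Pointwise we have $\frac{L^2+L}{2} = \sum_{k\ge1} k\,\mathbf{1}[L\ge k]$. Taking expectations and applying Tonelli shows the LP objective equals the expected optimum.

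\textbf{Poisson constraint (the main step).} Fix $i$, $S$, $\ell$ and $s$. The left-hand side $\sum_{j\in S}\sum_{k\le\ell} y_{ijk}(s)$ is the expected number of type-$S$ jobs that occupy one of the top $\ell$ ranks among size-$\ge s$ jobs on machine $i$. Pointwise this count is at most $\min\{N_S, \ell\}$. Here $N_S$ is the total number of arrived jobs with types in $S$, because at most $\ell$ ranks exist and each such job is a distinct type-$S$ arrival. Since $N_S \sim \Poisson(\lambda_S)$, taking expectations gives the constraint.

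The only subtlety I expect is this last step, in particular the fact that the bound uses all arrivals of types in $S$ rather than only those on $i$. Because the pointwise domination is trivial, the argument is immediate. I would also check measurability and finiteness: the objective is finite because the Poisson second moments are finite, which licenses the interchange of sum, integral and expectation.
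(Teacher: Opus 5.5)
Your proposal is correct and follows essentially the same route as the paper's proof. You take $x,y,z$ as the expectations of the counts and indicators induced by the realization-wise offline optimum, verify the equality constraints and the objective pointwise (via $\frac{L^2+L}{2}=\sum_{k\ge 1}k\,\mathbf{1}[L\ge k]$), and get the Poisson constraint from the pointwise bound $\min\{N_S,\ell\}$ with $N_S\sim\Poisson(\lambda_S)$.
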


\begin{lemma}
    \label{lem:lp-poly-solvable}
	The LP can be solved in polynomial time.	
\end{lemma}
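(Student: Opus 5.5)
The plan is to reduce the LP to an equivalent finite LP of polynomial size, except for the exponentially many subset constraints, and to handle those with the ellipsoid method and a polynomial-time separation oracle. The statement is read as ``up to arbitrary $1+\varepsilon$ accuracy in time polynomial in the input and $\log(1/\varepsilon)$''. Exact solvability is hopeless anyway, since the right-hand sides $\E_{n \sim \Poisson(\lambda_S)} \min\{n,\ell\}$ involve exponentials.

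\textbf{Step 1: discretize sizes.} Sort the distinct values $\{s_{ij}\}$ on each machine $i$ into breakpoints $0 = \sigma_0 < \sigma_1 < \dots < \sigma_m$. For $s$ in an interval $(\sigma_{r-1}, \sigma_r]$, the set of jobs with $s_{ij} \ge s$ is the same set $J_{i,r}$. The constraints then depend on $s$ only through $r$; jobs with $s_{ij}<s$ carry no $y$-mass, which we make explicit by setting $y_{ijk}(s) = 0$ for them. The LP is therefore unchanged if we restrict $y$ and $z$ to be constant on each interval. The objective becomes $\sum_i \sum_r (\sigma_r - \sigma_{r-1}) \sum_k k\, z_{ikr}$, with $O(|M||J|)$ intervals in total.

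\textbf{Step 2: truncate $k$ and $\ell$.} Let $K$ exceed the mean $\lambda_J$ by $O(\sqrt{\lambda_J \log(1/\varepsilon)} + \log(1/\varepsilon))$, so that $\Pr[\Poisson(\lambda_J) > K]$ is tiny.
\begin{itemize}
\item For $\ell \ge K$, the constraints for $\ell$ and $K$ differ by at most $\E(n-K)^+$, which is negligible.
\item Feasibility with $S$ equal to all relevant jobs and $\ell$ large forces the mass $\sum_{k > K} \sum_j y_{ijk}$ to be controlled by the Poisson tail. Its contribution $\sum_{k>K} k\, z_{ik}$ is likewise bounded, by roughly $\E[n \cdot \mathbb{1}(n>K)]$.
\end{itemize}
So we replace all of $k > K$ by a single aggregated ``overflow'' variable per $(i,j,r)$ and keep only the constraints with $\ell \le K$. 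I expect this truncation to be the main obstacle. We must check two directions. First, restricting an optimal infinite solution to the truncated LP loses objective at most $\varepsilon$. Second, a truncated solution extends back to a feasible solution of the original LP, by pushing overflow mass to levels $k > K$ consistent with the Poisson right-hand sides, at objective cost $O(\varepsilon)$. If the extension is messy, I would instead argue directly that the truncated LP is still a valid relaxation (the analogue of \Cref{lem:completion-time-lp} with truncated tail), since that is all the competitive analysis needs.

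\textbf{Step 3: separation over subsets $S$.} Fix $i$, $r$, $\ell \le K$ and a candidate solution, and let $w_j = \sum_{k \le \ell} y_{ijkr}$. We must decide whether some $S$ has $w(S) > h_\ell(\lambda_S)$, where $h_\ell(\lambda) = \E_{n \sim \Poisson(\lambda)} \min\{n,\ell\} = \sum_{m=1}^{\ell} \Pr[\Poisson(\lambda) \ge m]$.
\begin{itemize}
\item Each $\Pr[\Poisson(\lambda) \ge m]$ is non-decreasing in $\lambda$, with derivative $\Pr[\Poisson(\lambda) = m-1]$.
\item Summing over $m$, $h_\ell'(\lambda) = \Pr[\Poisson(\lambda) \le \ell-1]$, which is non-increasing. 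Hence $h_\ell$ is concave and increasing.
\item A concave function of a modular function, $S \mapsto h_\ell(\lambda_S)$, is submodular, so $S \mapsto h_\ell(\lambda_S) - w(S)$ is submodular.
\end{itemize}
Minimizing this function with a polynomial-time submodular minimization algorithm, using evaluations of $h_\ell$ to polynomially many bits, either certifies feasibility or returns a violated constraint.

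\textbf{Step 4: combine.} The ellipsoid method now applies to the finite LP produced by Steps 1 and 2, with the polynomially many explicit constraints checked directly and the subset constraints checked by the oracle of Step 3. Its variables have polynomial bit size after rounding the Poisson quantities to the required precision. This yields a $(1+\varepsilon)$-optimal solution in polynomial time.
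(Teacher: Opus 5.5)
Your Steps 1, 3 and 4 follow the paper's proof. You restrict $y,z$ to be constant between the $O(|M||J|)$ size breakpoints, note that $S \mapsto \E_{n\sim\Poisson(\lambda_S)}\min\{n,\ell\}$ is submodular so each family of subset constraints is a polymatroid with a polynomial separation oracle, and run the ellipsoid method. Your Step 2 goes beyond the paper. Its proof asserts that after discretizing sizes the number of variables is polynomial, although $k$ and $\ell$ still range over all positive integers. It also never addresses that the right-hand sides are transcendental, so your $\varepsilon$-accuracy reading is the right one.

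The justification in Step 2 needs repair, because feasibility does not control the tail. Write $X=\sum_j x_{ij}$, $Z_\ell=\sum_{k\le \ell}\sum_j y_{ijk}$ and $h_\ell(\lambda)=\E_{n\sim\Poisson(\lambda)}\min\{n,\ell\}$. The constraints only cap the prefix sums $Z_\ell$. The constraint with all relevant jobs and $\ell=K$ gives a \emph{lower} bound $X-h_K(\lambda)$ on the overflow $X-Z_K$, not an upper bound, and a feasible solution may park its mass at arbitrarily high levels. The fix uses the polymatroid structure from your own Step 3.

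\emph{Restriction direction.} Write the objective as $\sum_{\ell\ge 0}(X-Z_\ell)$ and charge the overflow variable at coefficient $K+1$. The truncated objective is then $\sum_{\ell=0}^{K}(X-Z_\ell)$, so projecting any original solution gives a truncated one of no larger cost.

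\emph{Extension direction.} Let $P_\ell=\{w\ge 0: w(S)\le h_\ell(\lambda_S)\ \forall S\}$; these sets are nested in $\ell$. The vector $w^{(K)}=(\sum_{k\le K}y_{ijk})_j$ lies in $P_{K+1}\cap\{w\le x\}$. It therefore extends to a chain $w^{(K)}\le w^{(K+1)}\le\cdots$ where each $w^{(\ell)}$ is a base of $P_\ell\cap\{w\le x\}$. Setting $y_{ijk}=w^{(k)}_j-w^{(k-1)}_j$ for $k>K$ is feasible. Each base has deficiency $\max_S[x(S)-h_\ell(\lambda_S)]$. This is at most $\E_{n\sim\Poisson(\lambda_J)}(n-\ell)^+$, since $x_{ij}\le\lambda_j$ and $\lambda-h_\ell(\lambda)=\E(n-\ell)^+$ is non-decreasing in $\lambda$. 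So the deficiency tends to $0$, which gives $\sum_k y_{ijk}=x_{ij}$, and the added cost is at most $\sum_{\ell>K}\E(n-\ell)^+$.

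Finally, $K$ cannot be taken much below $\lambda_J/|M|$. Since $Z_\ell\le h_\ell(\lambda_J)\le\ell$, some machine must use levels up to $\lambda_J/|M|$. Your running time is therefore polynomial in $\lambda_J$ rather than in its bit length, a caveat the paper's claim also ignores.
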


The proofs are relatively standard, and thus deferred to \Cref{app:completion-time-lp,app:lp-poly-solvable}.

\subsection{Reduction to Generalized Balls into Bins}
\label{sec:completion-time-reduction}

We treat the machines as bins and map each job to a ball as follows. 
Ideally, we want to sample two candidate machines $i \ne i'$ for every arrived job, based on its type $j$ and the LP solution, but independent of the realization and allocation of previous jobs.
The latter restriction ensures that the arrivals of balls follow a Poisson process from the viewpoint of the downstream Generalized Balls into Bins algorithm (\Cref{lem:job-ball-poisson}).
If a job is of type $j$ and we sample machines $i$ and $i'$, we say it is a ball of type $j \sim \{i, i'\}$.
We can then allocate it using the Mark-and-Swap algorithm.

However, this process cannot allocate more than half of a job $j$ to any machine $i$, even if we always sample $i$ as a candidate.
This is not a problem if $x_{ij} \le \frac{\lambda_j}{2}$ for all machines $i$ (\Cref{lem:job-ball-pairing-exist}).
If $x_{ij} > 1/2$ for some machine $i$, we need to sometimes sample only one machine $i$, in which case we call it a one-choice ball $j \sim i$, and allocate the job directly to machine $i$ without calling Mark-and-Swap.
We make the rate of sampling $j \sim i$ as small as possible, subject to preserving the marginal probability of allocating job $j$ to different machines (\Cref{lem:job-ball-allocation-preservation}).


%
%


\begin{algorithm}{Reduction: Mapping Jobs to Balls}
	\begin{itemize}
		%
		%
		\item For every job $j \in J$ such that $x_{ij} \le \frac{\lambda_j}{2}$ for all machines $i$:
		\begin{itemize} 
			\item Find arrival rates $\lambda_{j \to \{i,i'\}} \ge 0$ such that for any machine $i$:
			\begin{equation}
				\label{eqn:type-decomposition-rates}
				\frac{1}{2} \sum_{i' \ne i} \lambda_{j \to \{i, i'\}} = x_{ij}
				~.
			\end{equation}
			\item Let each arrived job $j$ be a ball $j \to \{i, i'\}$ incident to $i, i'$ with probability $\frac{\lambda_{j \to \{i, i'\}}}{\lambda_j}$.
		\end{itemize}
		\item For every job $j \in J$ such that $x_{ij} > \frac{\lambda_j}{2}$ for some machine $i$:
		\begin{itemize} 
			\item Let $\lambda_{j \to i} = 2x_{ij} - \lambda_j$ and $\lambda_{j \to \{i, i'\}} = 2 x_{i'j}$ for any $i' \ne i$.
			\item Let each arrived job $j$ be a ball $j \to i$ incident only to $i$ with probability $\frac{\lambda_{j \to i}}{\lambda_j}$, and a ball $j \to \{i, i'\}$ incident to $i, i'$ with probability $\frac{\lambda_{j \to \{i, i'\}}}{\lambda_j}$.
		\end{itemize}
	\end{itemize}
\end{algorithm}

\begin{algorithm}{Reduction: Online Allocation}
	\begin{itemize}
		\item Allocate each one-choice ball $j \to i$ to bin/machine $i$.
		\item Allocate two-choice balls using the Mark-and-Swap algorithm, letting $\succ_i$ therein be the descending order of job sizes $s_{ij}$.
	\end{itemize}
\end{algorithm}

%
%
%
%

We first argue that the mapping of jobs to balls is well defined. 

\begin{lemma}
	\label{lem:job-ball-pairing-exist}
	There are arrival rates satisfying \Cref{eqn:type-decomposition-rates}.
\end{lemma}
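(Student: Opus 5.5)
We must show that, for a fixed job $j$ with $x_{ij} \le \frac{\lambda_j}{2}$ for every machine $i$ and $\sum_i x_{ij} = \lambda_j$, there exist non-negative rates $\lambda_{j \to \{i,i'\}}$ on pairs of distinct machines whose half-weighted degrees equal $x_{ij}$. Equivalently, we need a non-negative edge weighting of the complete graph on $M$ with prescribed vertex degrees $w_i \defeq 2x_{ij}$. These satisfy $w_i \le \frac{1}{2}\sum_{i'} w_{i'}$, i.e., no vertex carries more than half of the total degree. I plan to prove this classical ``fractional degree sequence'' fact directly by a greedy, water-filling construction.

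\textbf{Construction.} Normalize so that $W \defeq \sum_i w_i = 2\lambda_j$, so that $\max_i w_i \le W/2$. Sort the machines with positive $w_i$ and lay the masses consecutively on the circle $[0, W)$, machine $i$ occupying an interval of length $w_i$. For each $\theta \in [0, W/2)$, pair the point $\theta$ with the antipodal point $\theta + W/2$.
\begin{itemize}
\item Because every interval has length at most $W/2$, the two points $\theta$ and $\theta + W/2$ never lie in the interior of the same machine's interval.
\item Each pair therefore contributes rate to a pair of distinct machines $\{i,i'\}$. Define $\lambda_{j \to \{i,i'\}}$ as the Lebesgue measure of the set of $\theta \in [0, W/2)$ whose pair lands in $\{i,i'\}$.
\item Every point of the circle is covered exactly once as either $\theta$ or $\theta + W/2$. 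Hence machine $i$ receives total incident rate exactly $w_i$, so $\frac12 \sum_{i'} \lambda_{j\to\{i,i'\}} = x_{ij}$, which is \Cref{eqn:type-decomposition-rates}.
\item Summing over all pairs gives total rate $W/2 = \lambda_j$. Consequently the probabilities $\lambda_{j\to\{i,i'\}}/\lambda_j$ sum to one.
\end{itemize}
This mirrors the interval-based sub-ball sampling already used for Mark-and-Swap.

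\textbf{The delicate point.} The only subtle step is the claim that no antipodal pair falls inside a single interval. An interval has length $w_i \le W/2$, so two points at distance exactly $W/2$ can both lie in it only when both are endpoints, a set of measure zero that does not affect the rates. Alternatively, one can verify the construction by induction on $|M|$. Pair the largest-mass machine with the others until its residual equals the total residual of the rest, then recurse, checking at each step that the invariant $\max \le$ half-of-total is preserved. I would present the circle argument as the main proof, since it is short and needs no case analysis.
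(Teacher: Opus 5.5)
Your construction is correct, but it takes a different route from the paper.

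\textbf{The paper's route.} The paper argues non-constructively via Farkas' lemma. Feasibility of $\frac{1}{2}\sum_{i' \ne i} \lambda_{j \to \{i,i'\}} = x_{ij}$ with nonnegative rates reduces to showing $\sum_i x_{ij} y_i \ge 0$ whenever $y_i + y_{i'} \ge 0$ for all $i \ne i'$. At most one $y_i$ can then be negative. In that case $\sum_i x_{ij} y_i \ge |y_i| (\lambda_j - 2x_{ij}) \ge 0$, which is exactly where the hypothesis $x_{ij} \le \lambda_j/2$ enters.

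\textbf{Your route.} Your antipodal-pairing argument exhibits the rates explicitly. If you use half-open intervals $[s_i, s_i + w_i) \subseteq [0, W)$, the measure-zero caveat disappears entirely. Indeed, $\theta, \theta + W/2 \in [s_i, s_i + w_i)$ would give $\theta + W/2 < s_i + w_i \le s_i + W/2$, hence $\theta < s_i$, contradicting $s_i \le \theta$. The covering argument then yields the degree identity $\sum_{i'} \lambda_{j \to \{i,i'\}} = w_i = 2x_{ij}$ directly.

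\textbf{Comparison.} The paper's approach buys brevity, and a dual-certificate template that adapts easily to other linear feasibility questions. Yours buys an explicit decomposition computable in linear time, with at most $|M|$ nonzero pair rates. This is because $[0, W/2)$ is cut into at most $|M|$ pieces by the interval boundaries and their shifts. That is what the reduction's step ``find arrival rates'' actually needs to implement, rather than solving an auxiliary LP. It also meshes naturally with the $\theta$-offset sub-ball sampling already used for Mark-and-Swap. Given the circle argument, the inductive alternative you sketch is unnecessary.
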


\begin{proof}
	By Farkas' lemma, it suffices to verify that for any $y = (y_i)_{i \in M}$ satisfying:
	\[
		\forall i \ne i' \in M ~,\quad \frac{1}{2} y_i + \frac{1}{2} y_{i'} \ge 0
		~,
	\]
	we have:
	\[
		\sum_{i \in M} x_{ij} y_i \ge 0
		~.
	\]
	
	This is trivially true if $y_i$ is non-negative for all machines $i \in M$.
	Otherwise, suppose $y_i < 0$.
	Then, we have $y_{i'} \ge - y_i$ for all $i' \ne i$.
	Hence, the left-hand-side of the inequality is at least:
	\[
		|y_i| \cdot \Big( \sum_{i' \ne i} x_{i'j} - x_{ij} \Big) ~=~ |y_i| \cdot \Big( \lambda_j - 2 x_{ij} \Big) ~\ge~ 0
		~.
	\]
\end{proof}

For notational convenience, we define $\lambda_{j \to i}$ and $\lambda_{j \to \{i, i'\}}$ to be zeros if the reduction never maps any job to such balls.

\begin{lemma}
	\label{lem:job-ball-poisson}
	For any job $j$ and any machines $i, i'$, balls $j \to i$  and $j \to \{i, i'\}$ arrive by Poisson processes with rates $\lambda_{j \to i}$ and $\lambda_{j \to \{i, i'\}}$.	
\end{lemma}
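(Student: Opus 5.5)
The plan is to use Poisson thinning (the splitting/coloring theorem). Jobs of type $j$ arrive by a Poisson process with rate $\lambda_j$. The reduction maps each arrived job $j$ independently of everything else, including other arrivals, their realizations and all allocation decisions. It becomes a one-choice ball $j \to i$ with probability $\lambda_{j\to i}/\lambda_j$, or a two-choice ball $j \to \{i,i'\}$ with probability $\lambda_{j\to\{i,i'\}}/\lambda_j$. By the thinning theorem, the resulting sub-processes are independent Poisson processes with rates equal to $\lambda_j$ times the respective probabilities, namely $\lambda_{j\to i}$ and $\lambda_{j\to\{i,i'\}}$. Processes for different job types are independent to begin with, so independence across $j$ also holds.

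Before invoking thinning, I would check that these probabilities form a valid distribution, i.e., they are non-negative and sum to $1$. I would treat the two cases of the reduction separately.

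\textbf{First case:} $x_{ij} \le \lambda_j/2$ for all $i$. Non-negativity is part of \Cref{lem:job-ball-pairing-exist}. Summing \Cref{eqn:type-decomposition-rates} over $i$ counts each unordered pair twice, which gives $\sum_{\{i,i'\}}\lambda_{j\to\{i,i'\}} = \sum_i x_{ij} = \lambda_j$ by the LP constraint.

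\textbf{Second case:} $x_{ij} > \lambda_j/2$ for some machine $i$. This $i$ is unique, since two such machines would give $\sum_i x_{ij} > \lambda_j$. Here $\lambda_{j\to i} = 2x_{ij}-\lambda_j > 0$ and $\lambda_{j\to\{i,i'\}} = 2x_{i'j} \ge 0$. The total is $2x_{ij} - \lambda_j + 2\sum_{i'\ne i} x_{i'j} = 2\lambda_j - \lambda_j = \lambda_j$.

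With zero rates defined as zeros, the lemma then follows. The only subtle point, which I would state explicitly, is that the mapping uses fresh randomness independent of the history. This is exactly what thinning requires, so no real obstacle is expected.
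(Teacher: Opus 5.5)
Your proposal is correct and takes the paper's approach. The paper's proof is a one-line observation that the job-to-ball mapping is an independent sub-sampling, i.e., Poisson thinning, independent of previous arrivals and allocations; your additional check that the sampling probabilities are non-negative and sum to $1$ in both cases of the reduction, including uniqueness of the machine with $x_{ij} > \lambda_j/2$, fills in a detail the paper takes for granted.
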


\begin{proof}
	This follows because the mapping from $j$ to balls $j \to i$ and $j \to \{i, i'\}$ is a sub-sampling independent of the previous job arrivals and their allocations.
\end{proof}

\begin{lemma}
	\label{lem:job-ball-allocation-preservation}
	For any job $j$ and any machine $i$, we have:
	\[
		\lambda_{j \to i} + \frac{1}{2} \sum_{i' \ne i} \lambda_{j \to \{i, i'\}} = x_{ij}
		~.
	\]	
\end{lemma}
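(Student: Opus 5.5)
The proof is a direct case analysis on the two branches of the mapping from jobs to balls. Throughout we use the LP constraint $\sum_{i \in M} x_{ij} = \lambda_j$ and the convention that $\lambda_{j \to i}$ and $\lambda_{j \to \{i,i'\}}$ are zero whenever the reduction never produces such balls.

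\emph{Case 1: $x_{ij} \le \frac{\lambda_j}{2}$ for every machine $i$.} No one-choice ball is ever produced, so $\lambda_{j \to i} = 0$ for all $i$. The two-choice rates are exactly those supplied by \Cref{lem:job-ball-pairing-exist}, so \Cref{eqn:type-decomposition-rates} gives $\frac{1}{2}\sum_{i' \ne i} \lambda_{j \to \{i,i'\}} = x_{ij}$ directly. We should also check that the sub-sampling probabilities $\frac{\lambda_{j \to \{i,i'\}}}{\lambda_j}$ sum to $1$. Summing \Cref{eqn:type-decomposition-rates} over $i$ counts every unordered pair twice, which yields $\sum_{\{i,i'\}} \lambda_{j \to \{i,i'\}} = \sum_i x_{ij} = \lambda_j$.

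\emph{Case 2: $x_{i^*j} > \frac{\lambda_j}{2}$ for some machine $i^*$.} Such an $i^*$ is unique: if two machines each had more than $\frac{\lambda_j}{2}$, then $\sum_i x_{ij} > \lambda_j$, contradicting the LP constraint. The reduction sets $\lambda_{j \to i^*} = 2x_{i^*j} - \lambda_j > 0$ and $\lambda_{j \to \{i^*, i'\}} = 2x_{i'j}$ for $i' \ne i^*$. All other rates, namely $\lambda_{j \to i'}$ for $i' \ne i^*$ and $\lambda_{j \to \{i',i''\}}$ for pairs avoiding $i^*$, are zero.

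For machine $i^*$, the left-hand side of the lemma equals
\[
(2x_{i^*j} - \lambda_j) + \frac{1}{2}\sum_{i' \ne i^*} 2x_{i'j} = 2x_{i^*j} - \lambda_j + (\lambda_j - x_{i^*j}) = x_{i^*j}.
\]
For any machine $i' \ne i^*$, the only nonzero term is the pair $\{i', i^*\}$, so the left-hand side equals $0 + \frac{1}{2}\cdot 2x_{i'j} = x_{i'j}$.

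As a sanity check, the total sampling rate is $(2x_{i^*j} - \lambda_j) + 2(\lambda_j - x_{i^*j}) = \lambda_j$, so the sampling probabilities again form a valid distribution.

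There is no real obstacle here. The only points that need care are the uniqueness of $i^*$ in Case 2, which follows from the LP equality, and reading the zero convention correctly so that no pair is double-counted.
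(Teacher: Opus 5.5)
Your proof is correct and takes the same route as the paper: a case split on the two branches of the job-to-ball mapping, with the LP constraint $\sum_{i} x_{ij} = \lambda_j$ doing all the work. It is more complete than the paper's terse argument, which mainly points to the sampling probabilities summing to $1$. You check the per-machine identity explicitly for the heavy machine $i^*$ and for every $i' \ne i^*$, and you justify the uniqueness of $i^*$, which the reduction tacitly assumes.
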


\begin{proof}
	The first case follows by summing \Cref{eqn:type-decomposition-rates} over all machines $i \in M$.
	In the second case, the probabilities of mapping job $j$ to different balls sum to $1$ by definition.
	The lemma then follows by multiplying both sides with $\lambda_j$.
\end{proof}


%
%

\subsection{Proof of \Cref{thm:completion-time}}

For any machine/bin $i$ and any size-level $s > 0$, let $\load_i(s)$ be the number of jobs/balls of sizes at least $s$ that are allocated to bin $i$. 
Recall that we decompose the LP objective by size-levels:
\[
	\sum_{i \in M} \int_0^\infty \sum_{k=1}^\infty k \cdot z_{ik}(s)\,\dif{s}
	~.
\]

Further, recall that we can also write the algorithm's objective by size levels.
\[
	\sum_{i \in M} \int_0^\infty \frac{\load_i(s)^2 + \load_i(s)}{2} \,\dif{s}\tag{by \Cref{eqn:completion-time-by-size}}
	~.
\]

To prove a competitive ratio of $\Gamma = 1.435$, it suffices to consider any bin $i$ and any size-level $s > 0$ and prove that:
\[
	\frac{\load_i(s)^2 + \load_i(s)}{2} ~\le~ \Gamma \cdot  \sum_{k=1}^\infty k \cdot z_{ik}(s)
	~.
\]

Let $d_1$ be the total arrival rate of bin $i$'s incident one-choice balls with sizes at least $s$, and let $d_2$ be the normalized total arrival rate of $i$'s incident two-choice balls with sizes at least $s$:
\[
	d_1 \defeq \sum_{j : s_{ij} \ge s} \lambda_{j \to i}
	~,\quad
	d_2 \defeq \frac{1}{2} \sum_{j : s_{ij} \ge s} \sum_{i' \ne i} \lambda_{j \to \{i, i'\}}
	~.
\]

Let $d = d_1 + d_2$ be the degree of machine $i$ at size level $s$.
It equals the expected number of jobs allocated to $i$ with sizes at least $s$ because Mark-and-Swap is mean-preserving.
We suppress the dependence on machine $i$ and size-level $s$ as they are fixed in the rest of the argument.

By the Poisson arrivals of one-choice balls and the worst-case load distribution of two-choice balls (\Cref{thm:mark-swap-worst-case}):
\[
	\load_i(s) \,\succeq\, \Poisson(d_1) + \MarkSwap(d_2)
	~.
\]

By the expectations and variances of the Poisson and MarkSwap distributions (\Cref{lem:markswap-expectation,lem:markswap-variance}), the second moment of $\load_i(s)$ is at most:
\[
	\E \load_i(s)^2 \,\le\, d^2 + d - \frac{2}{e^2} \big( \lfloor d_2 \rfloor + \{ d_2 \} ^ 2 \big)
  	~,
\]
where we replace $d_1$ by $d - d_2$ after applying \Cref{lem:markswap-expectation,lem:markswap-variance}.

Hence, the expectation of $\frac{\load_i(s)^2 + \load_i(s)}{2}$ is at most:
\begin{equation}
    \label{eqn:completion-time-alg}
    \frac{1}{2} \left( d^2 + 2d - \frac{2}{e^2} \big( \lfloor d_2 \rfloor + \{ d_2 \} ^ 2 \big) \right)
\end{equation}

Next, we consider the counterpart in the LP objective:
\begin{equation}
    \label{eqn:completion-time-lp}
	\sum_{k=1}^\infty k \cdot z_{ik}(s)
	~,
\end{equation}
where:
\[
	\sum_{k=1}^\infty z_{ik}(s) = d
	~,\quad
	0 \le z_{ik}(s) \le 1
	~.
\]

However, these two conditions are insufficient.
For example, suppose that (a) bin $i$ had degree $d = 1$, but (b) none of it came from two-choice balls, i.e., $\lfloor d_2 \rfloor = 0$, and (c) exactly one such ball was allocated to bin $i$ in all cases, i.e., $z_{i1}(s) = 1$ and $z_{ik}(s) = 0$ for $k > 1$.
Then, we would have $\eqref{eqn:completion-time-alg} = \frac{3}{2}$ and $\eqref{eqn:completion-time-lp} = 1$.
Intuitively, the rest of the argument shows that (a), (b), and (c) cannot hold simultaneously. 
Unfortunately, this final part of the proof is tedious.

For each job $j$, we charge:
\[
	a_j \defeq \min \big\{ \lambda_j - x_{ij}, y_{ij1}(s) \big\}
\]
amount of the allocation $y_{ij1}(s)$ to two-choice balls, i.e., those of types $j \sim \{i, i'\}$ for some $i' \ne i$.

To make this intuition behind $a_j$ precise, let us consider two cases depending on the comparison between $x_{ij}$ and $\frac{\lambda_j}{2}$.
If less than half of job $j$ was allocated to machine $i$, i.e., $x_{ij} \le \frac{\lambda_j}{2}$, then all balls involving job $j$ and machine $i$ are two-choice balls according to the mapping from jobs to balls in \Cref{sec:completion-time-reduction}.
Hence, balls contributing to $y_{ij1}(s)$, which involve both $i$ and $j$ by definition, must be two-choice balls;
we expect $a_j = y_{ij1}(s)$.
Indeed, this holds because:
\[
	\lambda_j - x_{ij} \ge x_{ij} \ge y_{ij1}(s)
\]
where the second inequality follows by the second set of LP constraints.

If $x_{ij} > \frac{\lambda_j}{2}$, then $\lambda_j - x_{ij}$ is the total arrival rate of two choice balls $j \to \{i, i'\}$ involving job $j$ and machine $i$, according to the mapping from jobs to balls in \Cref{sec:completion-time-reduction}.
In this case, the definition of $a_j$ means that we charge $y_{ij1}(s)$ to two-choice balls as much as possible.


Further, charge the remaining:
\[
	b_j 
	\defeq y_{ij1}(s) - a_j
	\,=\, 
    \big(y_{ij1}(s) - a_j\big)^+
 	\,\le\, 
    \big( x_{ij} - y_{ij1}(s) \big) + \big( 2 y_{ij1}(s) - \lambda_j \big)^+
\]
amount of the allocation $y_{ij1}(s)$ to one-choice balls $j \sim i$.
Finally, let:
\[
	c_j \defeq x_{ij} - y_{ij1}(s)
\]
be the expected number of jobs $j$ that are allocated to machine $i$ but are \emph{not} the smallest among jobs allocated to $i$ with sizes at least $s$.

Define the sum of these three quantities over all jobs as:
\[
	a \,\defeq\, \sum_{j \in J} a_j
	~,\quad
	b \,\defeq\, \sum_{j \in J} b_j
	~,\quad
	c \,\defeq\, \sum_{j \in J} c_j
	~.
\]

By the definition of $a$ and $b$, we have:
\begin{equation}
	\label{eqn:degree-abc}
	d = a + b + c
	~,
	\quad
	d_2 \ge a
	~,
\end{equation}
and also:
\begin{equation}
	\label{eqn:z-ab}
	z_{i1}(s) = a + b \le 1
	~.
\end{equation}

Next, we derive another inequality that relates $b$ and $c$.
Recall the third set of LP constraints for $\ell = 1$, which we rephrease below:
\begin{equation}
	\label{eqn:poisson-constraint}
	\forall S \subseteq J ~,\quad \sum_{j \in S} y_{ij1}(s) \le 1 - e^{-\lambda_S}
	~.
\end{equation}

\begin{lemma}[Converse Jensen Inequality \cite{HuangS:STOC:2021}, applied to $(2x-1)^+$]
	\label{lem:converse-jensen}
	For any non-negative $y_{ij1}(s)$, $j \in J$, satisfying \Cref{eqn:poisson-constraint}, we have:
	\[
		\sum_{j \in J} (2y_{ij1}(s) - \lambda_j)^+ \le 1 - \ln 2
		~.
	\]
\end{lemma}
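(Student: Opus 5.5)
The proof needs only the single Poisson constraint \Cref{eqn:poisson-constraint} for one well-chosen set, followed by a one-variable maximization. Write $y_j$ for $y_{ij1}(s)$. Let $S^\star \defeq \{ j \in J : 2 y_j > \lambda_j \}$ be the set of jobs on which the positive part is active. Every term with $j \notin S^\star$ vanishes, so
\[
	\sum_{j \in J} (2y_j - \lambda_j)^+ \,=\, \sum_{j \in S^\star} (2 y_j - \lambda_j) \,=\, 2 \sum_{j \in S^\star} y_j - \lambda_{S^\star}
	~.
\]

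Next, apply \Cref{eqn:poisson-constraint} to the set $S = S^\star$, which is a legitimate subset of $J$. This bounds the right-hand side by $h(\lambda_{S^\star})$, where
\[
	h(x) \,\defeq\, 2 \big( 1 - e^{-x} \big) - x
	~.
\]

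It remains to bound $h$ over $x \ge 0$. The derivative $h'(x) = 2e^{-x} - 1$ is decreasing, so $h$ is concave and its unique maximizer is $x = \ln 2$. The maximum value is $h(\ln 2) = 2 \cdot \frac{1}{2} - \ln 2 = 1 - \ln 2$, which gives the claimed bound. The degenerate case $S^\star = \emptyset$ gives $0 \le 1 - \ln 2$ trivially.

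No step here is a serious obstacle. The only point needing care is recognizing that one constraint, on exactly the set of positive terms, suffices: the general converse Jensen machinery of \cite{HuangS:STOC:2021} is not needed for the piecewise-linear function $(2x-1)^+$. For a summable infinite family $J$ the same argument applies verbatim, or one can pass to finite truncations and take a limit.
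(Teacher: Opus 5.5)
Your proof is correct, and it takes a more direct route than the paper. The paper gives no separate argument. It invokes the general converse Jensen inequality of \citet{HuangS:STOC:2021}, which, under the Poisson constraints, bounds $\sum_{j} \lambda_j\, g(y_j/\lambda_j)$ by $\int_0^\infty g(e^{-x})\,\dif{x}$ for convex $g$ with $g(0)=0$. The paper instantiates this with $g(z) = (2z-1)^+$, so the left side becomes $\sum_j (2y_j - \lambda_j)^+$ and the right side becomes $\int_0^{\ln 2} (2e^{-x}-1)\,\dif{x} = 1-\ln 2$.

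You instead use the fact that $(2z-1)^+$ has a single kink. On the set $S^\star$ of active terms the sum is linear, so one Poisson constraint on $S^\star$, followed by maximizing $2(1-e^{-x}) - x$, already gives the bound. Your version is self-contained and elementary. It also shows the bound is tight with a single job type, $\lambda_j = \ln 2$ and $y_j = \tfrac{1}{2}$.

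The cited inequality is the more general tool. It would be needed if the charging argument in \Cref{sec:completion-time} used a convex function with several kinks or genuine curvature. For the specific function here, that generality is unnecessary.
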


By \Cref{lem:converse-jensen} and the definitions of $b$ and $c$, we have:
\begin{equation}
	\label{eqn:bc}
	b \le c + 1 - \ln 2
	~.
\end{equation}

Rewriting the upper bound of the expectation of $\frac{\load_i(s)^2 + \load_i(s)}{2}$ in \Cref{eqn:completion-time-alg} in terms of $a$, $b$, and $c$, i.e., combining \Cref{eqn:completion-time-alg,eqn:degree-abc}, we have:
\[
	\E \frac{\load_i(s)^2 + \load_i(s)}{2} \,\le\, \frac{1}{2} \Big( (a + b + c)^2 + 2(a + b + c) - \frac{2}{e^2} a^2 \Big)
	~.
\]

Similarly, by \Cref{eqn:completion-time-lp,eqn:degree-abc,eqn:z-ab}, the counterpart in the LP objective is at least:
\[
	a + b + \sum_{k=2}^{\lfloor c \rfloor + 1} k + \{ c \} \cdot \big( \lfloor c \rfloor + 2 \big)
	~=~
	a + b + c + \frac{\lfloor c \rfloor^2 + \lfloor c \rfloor}{2} + \{c\} (\lfloor c \rfloor + 1)
	~.
\]

For the stated competitive ratio $\Gamma = 1.435$, we want to show that:
\begin{equation}
	\label{eqn:completion-time-competitive}
	\frac{1}{2} \Big( (a + b + c)^2 + 2(a + b + c) - \frac{2}{e^2} a^2 \Big) 
	~\le~ 
	\Gamma \cdot \Big( a + b + c + \frac{\lfloor c \rfloor^2 + \lfloor c \rfloor}{2} + \{c\} (\lfloor c \rfloor + 1) \Big)
	~,
\end{equation}
for any non-negative $a$, $b$, and $c$ that satisfy linear constraints $a + b \le 1$ by \Cref{eqn:z-ab} and $b \le c + 1 - \ln 2$ by \Cref{eqn:bc}.

Consider the derivatives of $c$ on the two sides of \Cref{eqn:completion-time-competitive}.
The derivative on the left is $a + b + c + 1 \le c + 2 < \lfloor c \rfloor + 3$, while the counterpart on the right-hand-side is $\Gamma \big( \lfloor c \rfloor + 2 \big)$.
The latter is strictly larger if $\lfloor c \rfloor \ge 1$.
Hence, we may focus on $c < 1$, where the inequality becomes:
\[
	\frac{1}{2} \Big( (a + b + c)^2 + 2(a + b + c) - \frac{2}{e^2} a^2 \Big) 
	-
	\Gamma \cdot \big( a + b + 2c \big)
	~\le~
	0
	~.
\]

The left-hand-side is quadratic in $a$ with a positive quadratic coefficient.
Hence, subject to $a + b \le 1$, it suffices to verify the inequality for $a = 0$ and $a = 1-b$.

\bigskip
\noindent{\emph{Case 1: $a = 0$.~}}
We need to show that:
\[
	\frac{1}{2} \big( (b+c)^2 + 2 (b+c) \big) - \Gamma \cdot \big( b + 2c \big) ~\le~ 0
	~,
\]
subject to $0 \le b, c \le 1$ and $b \le c + 1 - \ln 2$.

Since the left-hand-side is convex in $b$ and $c$, it suffices to prove the inequality at the vertices of a polytope defined by these linear constraints.
The vertices are $(b,c) = (0,0)$, $(1-\ln2, 0)$, $(1, \ln2)$, $(0, 1)$, and $(1,1)$.
The inequality holds trivially when $b = c = 0$.
Further, we may omit the vertices with $c = 1$ by the above discussion on the derivatives of $c$.
If $(b, c) = (1-\ln2, 0)$, we have:
\[
	\Big( \frac{3-\ln2}{2} - \Gamma \Big) (1-\ln 2) < - 0.08
	~.
\]

If $(b, c) = (1, \ln2)$, we have:
\[
	\frac{1}{2}(1+\ln2)(3+\ln2) - \Gamma (1+2\ln2) < - 0.29
	~.
\]


\bigskip
\noindent{\emph{Case 2: $a =1-b$.~}}
We need to show that:
\[
	\frac{1}{2} \Big( (1 + c)^2 + 2(1 + c) - \frac{2}{e^2} (1-b)^2 \Big) 
	-
	\Gamma \cdot \big( 1 + 2c \big)
	~\le~
	0
	~.
\]

For any $c$, the left-hand-side is maximized when $b = 1 - (\ln 2 - c)^+$.
If $\ln 2 \le c < 1$, we have $b = 1$ and the inequality becomes:
\[
	\frac{1}{2} \Big( (1 + c)^2 + 2(1+c) \Big) - \Gamma \cdot \big( 1 + 2c \big) ~\le~ 0
	~.
\]

This is convex in $c$ so it suffices to consider $c = \ln 2$ and $c = 1$.
In the former case, it is:
\[
	\frac{1}{2} \Big( (1 + \ln2)^2 + 2(1+\ln2) \Big) - \Gamma \cdot \big(1 + 2\ln2 \big) < -0.29
	~.
\]

In the latter case, it is $4 - 3\Gamma < 0$.

If $0 \le c < \ln 2$, we have $b = 1 - \ln 2 + c$ and the inequality becomes:
\[
	\frac{1}{2} \Big( (1 + c)^2 + 2(1+c) - \frac{2}{e^2} (\ln 2 - c)^2 \Big) - \Gamma \cdot \big( 1 + 2c \big) ~\le~ 0
	~.
\]

This is also convex in $c$ so it suffices to consider $c = 0$; the case of $c = \ln 2$ coincides with the corresponding case above that we have already verified.
When $c = 0$, it is:
\[
	\frac{1}{2} \Big( 3 - \frac{2}{e^2}(\ln2)^2 \Big) - \Gamma < - 10^{-5}
	~.
\]

This last case is the binding one that defines $\Gamma = 1.435$.

\section{Discussion}
\label{sec:discussion}
\subsection{General Framework}
\label{sec:general-framework}

In this section, we present a general framework that extends the analysis of two-choice balls discussed in \Cref{sec:2-choice-framework} to both multi-choice and mixed‑choice settings.

\paragraph{Separable Potential Function.}
We once again design a potential for each ball $e$, denoted as $\Phi_e : [0, 1] \to \R_{\ge 0}$.
Then, we define the overall potential by summing over all balls:
\[
    \Phi(t) \,\defeq\, \sum_{e \in E} \lambda_e \cdot \Phi_e(t)
	~.
\]

Recall that $\load_v$ denotes the load of bin $v$, i.e., the number of balls in it.
We extend the current loads of each ball $e$'s incident bins from the unordered pair (for two-choice) to the unordered tuple $S_e=(L_v)_{v\sim e}$.
The potential of ball $e$ at time $t$ depends on its load tuple $S_e$. 
That is, for functions $\Phi^{S} : [0, 1] \to \R_{\ge 0}$ to be determined, we let:
\[
	\Phi_e(t) = \Phi^{S_e}(t)
	~.
\]
For ease of notation, we further denote $\mathbf{0}$ to be the zero tuple that only contains element(s) equal to $0$.

\paragraph{Local Algorithms.}
We consider algorithms that allocate each incoming ball $e$ based on the local information of its current load tuple $S_e$ (for instance, Greedy). 
Let $x^{S_e}_v$ denote the probability that a local algorithm allocates an arrived ball $e$ with load tuple $S_e$ to bin $v$.

\paragraph{Degree Vectors.}
Let $\loadeset_v$ be the set of the load tuples of $v$'s incident balls.
For any $S \in \loadeset_v$, let $\Lambda^{S}_v$ be the normalized total arrival rate of its incident balls $e \sim v$ with load tuple $S_e = S$, i.e.:
\[
	\Lambda^{S}_v ~\defeq \frac{1}{|S|} \sum_{e \sim v \,:\, S_e = S} \lambda_e
	~,
\]
where $|S|$ denotes the number of elements in tuple $S$.
Observe that $|S|$ equals $|e|$, i.e., the size of a hyper-edge/ball $e$  with load tuple $S_e = S$.

Further, suppose the vector $\vec{\Lambda}_v = ( \Lambda^{S}_v )_{S \in \loadeset_v}$ lies in a polytope $\polytope_v$.
Let $\support(\polytope_v)$ be the set of vectors spanning the polytope $\polytope_v$.
By the definition of $\Lambda^{S}_v$ and the degree of bin $v$, we have:
\[
	\sum_{S\in \loadeset_v} \Lambda_v^{S} = d_v
	~.
\]

The framework further allows additional linear constraints.

\begin{theorem}
	\label{thm:potential-based-general-extention}
	Consider a maximization problem with function $f : \Z_{\ge 0} \to \R_{\ge 0}$, a bound $g : \R_{\ge 0} \to \R_{\ge 0}$, and a local algorithm.
	The algorithm guarantees an amortized bound $g$ if there are potential functions $\Phi^{S}$ satisfying:
	\begin{enumerate}
		\item For any bin $v \in V$:
		\hspace*{\fill} (starting condition)
        \[
            \sum_{e \sim v} \, \frac{\lambda_e}{|e|} \cdot \Phi^{S_e =\, \mathbf{0}}(0) \,\ge\, g(d_v)
            ~;
        \]
		\item For any load tuple $S$, $\Phi^{S}(1) = 0$;
		\hspace*{\fill} (terminal condition)
		\item For any bin $v \in V$ and its load $\load_v$, for any $\vec{\Lambda} \in \support( \polytope_v )$:
		\hspace*{\fill} (monotonicity at limits)
		\[
			\sum_{S \in \loadeset_v} \Lambda^{S} |S| \, x^{S}_v \Big( f(\load_v + 1) - f(\load_v) + \frac{\dif}{\dif{t}} \Phi^{S}(t) \Big) 
			\,\ge\, 
			\sum_{S \in \loadeset_v} \Lambda^{S} |S| \, x^{S}_v \sum_{S \in \loadeset_v} \Lambda^{S} |S| \, \Big( \Phi^{S}(t) - \Phi^{S_+}(t) \Big)
			~,
		\]
		where $S_+$ denotes the load tuple obtained from increasing the element $\load_v$ by $1$.
		\item For any bin $v \in V$ and its load $\load_v$, there is an order of $\vec{\Lambda} \in \support(\polytope_v)$ by which both
		\[
			\sum_{S \in \loadeset_v} \Lambda^{S} \, |S| \,x^{S}_v
		\]
		and
		\[
			\sum_{S\in \loadeset_v} \Lambda^{S} \, |S| \,\Big( \Phi^{S}(t) - \Phi^{S+}(t) \Big)
		\]
		are non-decreasing.
		\hspace*{\fill} (Chebyshev's condition)
	\end{enumerate}
	The theorem also holds for minimization problems, changing the inequalities' directions in 1) and 3), and letting the two terms have opposite monotonicities in 4).
\end{theorem}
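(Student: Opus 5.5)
We will follow the proof of \Cref{thm:potential-based-general} and reduce to \Cref{lem:potential-function-conditions}. We prove the maximization version; the minimization version only flips the inequalities in the steps below.

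\emph{Starting and terminal conditions.} Since $A(0)=0$ and every load tuple at time $0$ is $\mathbf{0}$, we have $\Phi(0)=\sum_e \lambda_e \Phi^{\mathbf{0}}(0)$. Splitting each $\lambda_e$ as $\sum_{v\sim e}\frac{\lambda_e}{|e|}$ and regrouping by bins gives $\Phi(0)=\sum_v\sum_{e\sim v}\frac{\lambda_e}{|e|}\Phi^{\mathbf 0}(0)\ge\sum_v g(d_v)$ by condition 1. The terminal condition of the lemma is immediate from condition 2.

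\emph{Monotonicity.} Condition on the history up to time $t$ and compute $\frac{\dif}{\dif t}\E[\Phi(t)+A(t)]$. Bin $v$ receives a ball at rate $\sum_{S\in\loadeset_v}\Lambda_v^S|S|x_v^S$, since the unnormalized rate of incident balls with tuple $S$ is $|S|\Lambda_v^S$. Each such allocation has two effects:
\begin{itemize}
\item it adds $f(\load_v+1)-f(\load_v)$ to the objective;
\item it changes the tuple of every ball incident to $v$ from $S$ to $S_+$, so the potential drops by $\sum_S \Lambda_v^S|S|\bigl(\Phi^S(t)-\Phi^{S_+}(t)\bigr)$.
\end{itemize}
The time-derivative term $\sum_e\lambda_e\frac{\dif}{\dif t}\Phi^{S_e}(t)$ must also be attributed to bins. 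As in the two-choice proof, split $\lambda_e=\sum_{v\sim e}|e|\cdot\frac{\lambda_e}{|e|}x_v^{S_e}$, which is valid because $\sum_{v\sim e}x_v^{S_e}=1$. This gives $\sum_v\sum_S\Lambda_v^S|S|x_v^S\frac{\dif}{\dif t}\Phi^S(t)$. Collecting terms, it suffices to show for each bin $v$:
\[
F(\vec\Lambda_v)\defeq \sum_S \Lambda^S_v|S|x^S_v\Big(f(\load_v+1)-f(\load_v)+\tfrac{\dif}{\dif t}\Phi^S(t)\Big)-\Big(\sum_S\Lambda^S_v|S|x^S_v\Big)\Big(\sum_S\Lambda^S_v|S|\big(\Phi^S(t)-\Phi^{S_+}(t)\big)\Big)\ge 0 .
\]

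\emph{From the spanning vectors to the polytope.} Condition 3 gives $F\ge 0$ only on the spanning vectors of $\polytope_v$, while the true $\vec\Lambda_v$ is a convex combination $\sum_k\alpha_k\vec\Lambda^{(k)}$ of them. The first term of $F$ is linear in $\vec\Lambda$. The second is a product $P(\vec\Lambda)Q(\vec\Lambda)$ of two linear forms:
\[
P(\vec\Lambda)=\sum_S\Lambda^S|S|x_v^S,\qquad Q(\vec\Lambda)=\sum_S\Lambda^S|S|\big(\Phi^S(t)-\Phi^{S_+}(t)\big).
\]
Condition 4 says the sequences $P(\vec\Lambda^{(k)})$ and $Q(\vec\Lambda^{(k)})$ are similarly ordered along a common order of the spanning vectors. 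Chebyshev's sum inequality with weights $\alpha_k$ then gives
\[
P\Big(\sum_k\alpha_k\vec\Lambda^{(k)}\Big)\,Q\Big(\sum_k\alpha_k\vec\Lambda^{(k)}\Big)=\Big(\sum_k\alpha_kP_k\Big)\Big(\sum_k\alpha_kQ_k\Big)\le\sum_k\alpha_kP_kQ_k .
\]
Hence $F(\vec\Lambda_v)\ge\sum_k\alpha_kF(\vec\Lambda^{(k)})\ge 0$ by condition 3. For the minimization version, the opposite monotonicities in condition 4 reverse Chebyshev's inequality, and this matches the reversed inequality in condition 3.

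The main obstacle is this last step: the product term makes $F$ non-linear in $\vec\Lambda$, so $F\ge0$ on the spanning vectors does not transfer to their convex hull directly. We bridge that gap with the Chebyshev argument above. A secondary point to check is that the same tuple $S$ contributes consistently across bins when we attribute the time-derivative term via $x_v^S$; this is why we weight by $|S|\Lambda_v^S$ rather than by $\Lambda_v^S$.
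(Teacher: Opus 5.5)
Your proposal is correct and follows essentially the same route as the paper: reduce to \Cref{lem:potential-function-conditions}, compute the conditional drift bin by bin (attributing the time-derivative term via $\sum_{v\sim e}x_v^{S_e}=1$), write $\vec{\Lambda}_v$ as a convex combination of the spanning vectors of $\polytope_v$, and apply weighted Chebyshev to the product term before invoking condition 3. Your explicit regrouping for the starting condition and the bound $F(\vec{\Lambda}_v)\ge\sum_k\alpha_k F(\vec{\Lambda}^{(k)})$ only spell out steps the paper leaves implicit.
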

We defer the proof to \Cref{app:general-framework-extension}.


\subsection[Multi-Choice Balls and c-Matching]{Multi-Choice Balls and $c$-Matching}

This section studies the multi-choice balls setting, in which each ball is incident to $k$ bins, together with $c$-matching, i.e., the maximization problem with concave function $f(\load) = \min \{ \load, c \}$.
Suppose the bins' degrees are at most $d$.
We will modify Greedy as follows.
For any bin $v \in V$, define
\[
	r_v(t) \defeq \big(c - \load_v\big)^+
\]
to be the \emph{remaining capacity} of the bin.

\begin{algorithm}{Proportional Greedy Algorithm (for Multi-Choice Balls and $c$-Matching)}
    When a ball $e$ arrives at time $t$:
    \begin{itemize}
        \item Allocate it to an incident bin $v \sim e$ with probability proportional to $r_v(t)$.
    \end{itemize}
\end{algorithm}

\begin{corollary}
    \label{cor:k-choice}
    
    Proportional Greedy Algorithm guarantees a bound:
    \[
    	g(d_v) \,=\, \frac{\E_{j \sim \mathrm{Poisson}(dk)} \min \{ j, ck \}}{dk} \cdot d_v
    	~,
    \]
    i.e., the expected number of balls allocated to bins, each of which has capacity $c$, is at least:
    \[
	    \frac{\E_{j \sim \mathrm{Poisson}(dk)} \min \{ j, ck \}}{dk} \cdot \sum_{v \in V} d_v
	    ~.
    \]
\end{corollary}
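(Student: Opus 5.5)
We will apply \Cref{thm:potential-based-general-extention} with $k$-uniform balls, $f(\load)=\min\{\load,c\}$, and the Proportional Greedy allocation $x^S_v = r_v/\sum_{u\sim e} r_u$. The plan mirrors the two-choice $c$-matching case (\Cref{cor:c-matching}). We choose $\Phi^S(t)$ to depend only on the total remaining capacity $R=\sum_{u\sim e} r_u$ of the ball's incident bins. Its value is the expected number of further units the ball's $k$ bins would absorb if the ball alone kept arriving with the maximal rate $dk$ over $[t,1]$, normalized per unit rate:
\[
	\Phi^S(t) \,=\, \frac{1}{dk}\,\E_{j\sim\Poisson(dk(1-t))}\min\{j,R\}.
\]
The terminal condition is immediate. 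At $t=0$ with $S=\mathbf{0}$ we have $R=ck$, so
\[
	\sum_{e\sim v}\frac{\lambda_e}{k}\Phi^{\mathbf{0}}(0) = d_v\cdot\frac{\E_{\Poisson(dk)}\min\{j,ck\}}{dk},
\]
which is exactly the starting condition for $g$.

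Next we derive the one-ball identity. Write $h_R(t)=\E_{\Poisson(dk(1-t))}\min\{j,R\}$. Then
\[
	\frac{\dif}{\dif t}h_R = -dk\big(h_R-h_{R-1}\big) = -dk\Pr[\Poisson(dk(1-t))\ge R],
\]
and $\Phi^S-\Phi^{S_+} = \frac{1}{dk}(h_R-h_{R-1})$ whenever $r_v>0$ (and $0$ when $r_v=0$). Set $p_R=h_R-h_{R-1}$, which is non-increasing in $R$.

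For monotonicity at a bin $v$ with $r_v>0$, note $f(\load_v+1)-f(\load_v)=1$ and $\frac{\dif}{\dif t}\Phi^S = -p_R$. The left side of condition 3 becomes $\sum_S \Lambda^S k x^S_v(1-p_{R_S})$. The right side is
\[
	\Big(\sum_S \Lambda^S k x^S_v\Big)\Big(\sum_S \Lambda^S p_{R_S}/d\Big).
\]
Since $\sum_S\Lambda^S\le d$, the extreme points of $\polytope_v$ place the whole mass $\Lambda^S=d$ on a single load tuple, or $\vec\Lambda=0$. At such a vertex the inequality reads
\[
	dk x^S_v(1-p_R) \ge dk x^S_v\, p_R.
\]
This requires $p_R\le 1/2$, which is false in general. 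So the potential must instead be scaled so that the allocation drift and the potential drop match exactly. We therefore use the identity $\frac{\dif}{\dif t}h_R = -dk\,p_R$ and carry the factor $x^S_v=r_v/R$: we define the contribution attributable to $v$ within the ball's potential as $\frac{r_v}{R}$ times the ball's potential. The ball potential is split proportionally to remaining capacities, and each bin's share drops by exactly $p_R/(dk)$ per allocation. This is precisely why Proportional Greedy is the right rule: under proportional allocation, the simulated single-ball process on total capacity $R$ is consistent with the real allocation.

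With this scaling, condition 3 at a vertex reduces to the equality $x^S_v(1-p_R)\ge x^S_v(\cdots)$, where the Poisson identity for $\frac{\dif}{\dif t}h$ makes both sides coincide. For $r_v=0$ both sides vanish.

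Chebyshev's condition requires ordering the vertices so that $\Lambda^S k x^S_v=dk\,r_v/R$ and $\Lambda^S(\Phi^S-\Phi^{S_+})=p_R/k$ move the same way. As $R$ decreases with $r_v$ fixed, $r_v/R$ increases and $p_R$ also increases, so ordering by decreasing $R$ works.

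The main obstacle is the exact choice and normalization of $\Phi^S$ so that the vertex inequality holds as an equality. I would first verify the $k=2$ instance against \Cref{cor:c-matching}'s bound, which has $\min\{j,2c\}$, matching $R=2c$. I would then adjust the normalization until the drift cancels exactly, keeping the Chebyshev ordering by $R$.
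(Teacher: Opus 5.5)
Your potential $\Phi^S(t)=\frac{1}{dk}\E_{j\sim\Poisson(dk(1-t))}\min\{j,R\}$ is exactly the paper's. Your starting and terminal conditions and your Chebyshev ordering by decreasing $R$ are also correct.

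The gap is a wrong derivative. You wrote $\frac{\dif{}}{\dif{t}}h_R=-dk\,(h_R-h_{R-1})$. But by \Cref{lem:cap-expectation-change-rate}, $\frac{\dif{}}{\dif{\lambda}}\E_{j\sim\Poisson(\lambda)}\min\{j,R\}=\Pr[\,j<R\,]$, so $\frac{\dif{}}{\dif{t}}h_R=-dk\Pr[\,j<R\,]=-dk\,(1-p_R)$. Meanwhile $h_R-h_{R-1}=\Pr[\,j\ge R\,]=p_R$ by \Cref{lem:cap-expectation-change-cap}. You can check this at $R=1$: $h_1=1-e^{-\lambda}$ has $\lambda$-derivative $e^{-\lambda}=1-p_1$, not $p_1$. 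Hence $\frac{\dif{}}{\dif{t}}\Phi^S=-(1-p_R)$ and $1+\frac{\dif{}}{\dif{t}}\Phi^S=p_R$. At a spanning vector with all mass $\Lambda^R=d_v$ on a single $R$, condition 3 then reads
\[
d_v k\,\tfrac{r_v}{R}\,p_R \;\ge\; d_v k\,\tfrac{r_v}{R}\cdot d_v k\cdot\tfrac{p_R}{dk}
\quad\Longleftrightarrow\quad p_R\ge \tfrac{d_v}{d}\,p_R .
\]
This holds because $d_v\le d$, with equality at $d_v=d$ (and at your vertices with mass exactly $d$). The requirement $p_R\le\frac12$ you derived comes only from the miscomputed derivative.

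Because of that error, you discarded a working potential and replaced it with a ``proportional split'' of each ball's potential among its bins. That object is never defined as a function $\Phi^{S}$ of the ball's load tuple, which is what \Cref{thm:potential-based-general-extention} requires. Condition 3 for it is only asserted (``both sides coincide'', ``adjust the normalization until the drift cancels''). So as written, the proposal does not establish monotonicity.

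The fix is to keep your original $\Phi^S$ and use the corrected derivative; the argument is then complete and coincides with the paper's. Note also that the factor $x^S_v=r_v/R$ simply cancels from both sides of condition 3. Proportional allocation matters only for the Chebyshev condition. There, for the fixed $r_v$, it makes $\Lambda^S k\,x^S_v$ depend on $R$ alone and increase together with $p_R$ as $R$ decreases.
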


Note that Proportional Greedy coincides with Greedy if the bins have unit capacities, (i.e., when $c = 1$).
In other words, Greedy yields optimal amortized bounds for 
(i) $k = 2$ and any $c \ge 1$,
(ii) $c = 1$ and any $k \ge 2$, and trivially
(iii) $k = 1$ and any $c \ge 1$ (since the algorithm makes no decision when each ball has only a single choice).
We do not know if the modification to the algorithm is necessary when $c \ge 2$ and $k \ge 3$.
In fact, we conjecture that the original Greedy algorithm gets the same bound, although we could not prove it using the current analysis.
Resolving the conjecture affirmatively would extend the optimal amortized bounds to arbitrary convex and concave functions via the decompositions in \Cref{lem:concave-decompose,lem:convex-decomposition}.
In its current form, however, we could not apply the decompositions because Proportional Greedy allocates the ball differently for different capacities $c$.

Consider potential functions:
\begin{equation}
	\label{eqn:k-choice-potential}
	\Phi^{S_e}(t) = \E_{j \sim \mathrm{Poisson}(dk(1-t))} \frac{\min \bigl\{ j \,,\, \sum_{v \sim e} (c - \load_v)^+ \bigr\}}{dk}
	~.
\end{equation}

Omitting the normalizing $\frac{1}{dk}$ factor, this equals the expected number of balls that could be allocated to the remaining capacities of bins incident to ball $e$, should balls $e$ arrive with rate $dk$ in the remaining time horizon from $t$ to $1$.
The proof follows by verifying the conditions of \Cref{thm:potential-based-general-extention}, which we defer to \Cref{app:k-choice}.

\subsection{Mixed Numbers of Choices}
Recall that Greedy gives the optimal bounds for the following cases:
(i) matching, i.e., $f(\load) = \min \{\load, 1\}$, (ii) when all balls have $k$-choices for any $k \ge 2$ (\Cref{cor:k-choice}), and trivially
(iii) $k = 1$.
Therefore, it is natural to consider the same algorithm for instances whose balls are incident to different numbers of bins.
The ability to handle balls with mixed numbers of choices is important for downstream applications such as Online Stochastic Matching;
see, e.g., \citet{ChenHS:FOCS:2024} for a framework for reducing Online Stochastic Matching to Stochastic Online Correlated Selection, which corresponds to Generalized Balls into Bins with $f(\load) = \min \{\load, 1\}$ and balls incident to $1$ or $2$ bins.

Optimistically, one may ask whether Greedy gets the best of many worlds, i.e., the optimal amortized bounds in \Cref{cor:c-matching} and \Cref{cor:k-choice} for all balls $e$ with their respective $k = |e|$.
That is, is it true that the expected number of non-empty bins at the end is at least
\[
	\sum_{e \in E} \frac{\E_{j \sim \mathrm{Poisson}(d|e|)} \min \{ j, |e| \}}{d|e|} \cdot \lambda_e
	~?
\]

Suppose for each ball $e$ we consider the same potential as in the proof of \Cref{cor:k-choice}, i.e., \Cref{eqn:k-choice-potential} with $c = 1$ and $k = |e|$.
Under the potential-based amortized analysis framework of \Cref{thm:potential-based-general-extention}, we can prove the above bound if each ball is incident to $1$ or $2$ bins, by verifying the four conditions of \Cref{thm:potential-based-general-extention}.

Further, the framework is general enough to allow an upper bound on the total arrival rate of $1$-choice balls incident to a bin.
For example, \citet{ChenHS:FOCS:2024} effectively reduced an Online Stochastic Matching instance to a Generalized Balls into Bins instance with $1$-choice and $2$-choice balls, such that the total rate of $1$-choice balls incident to any bin is at most $1 - \ln 2$.

\begin{corollary}
    \label{cor:Mixed-choice}
	For an instance where balls are incident to $1$ or $2$ bins, and the total arrival rate of $1$-choice balls incident to each bin is at most $1 - \ln 2$, Greedy guarantees an amortized bound of:
	\[
		1 + \frac{\ln{2}}{1-\ln{2}} \frac{1}{e^2} - \frac{1}{1-\ln{2}} \frac{1}{2e} > 0.706
		~.
	\]
	That is, the expected number of non-empty bins is at least:
	\[
		\Big( 1 + \frac{\ln{2}}{1-\ln{2}} \frac{1}{e^2} - \frac{1}{1-\ln{2}} \frac{1}{2e} \Big) \cdot
        \sum_{v \in V} d_v
		~.
	\]
\end{corollary}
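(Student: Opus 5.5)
The plan is to apply \Cref{thm:potential-based-general-extention} with $f(\load)=\min\{\load,1\}$, normalize the degree bound to $d=1$, and use a separate potential for each type of ball. For a two-choice ball $e$ we take the potential from \Cref{eqn:k-choice-potential} with $c=1$, $k=2$:
\[
\Phi^{S_e}(t)=\tfrac12\,\E_{j\sim\Poisson(2(1-t))}\min\bigl\{j,\textstyle\sum_{v\sim e}(1-\load_v)^+\bigr\}.
\]
For a one-choice ball $e=\{v\}$ we take a potential of the same shape, $\Phi^{(\load_v)}(t)=\frac{1}{\beta}\bigl(1-e^{-\beta(1-t)}\bigr)\cdot\mathbf 1[\load_v=0]$. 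Here $\beta$ is a rate parameter that I expect to tune rather than fix at $1$. The constant in the statement contains $\frac{1}{1-\ln 2}$ factors, which suggests the one-choice and two-choice potentials must be rebalanced using the cap $1-\ln 2$. With these potentials, the terminal condition holds trivially. Since $\Phi$ vanishes once $\load_v\ge1$, only the states $\load_v=0$ matter.

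Next we encode the extra information as the polytope $\polytope_v$. Its coordinates are the one-choice rate $\Lambda^{(0)}_v\in[0,1-\ln2]$ and the two-choice rates $\Lambda^{(0,0)}_v,\Lambda^{(0,1)}_v$, subject to $\sum_S\Lambda^S_v\le1$. The spanning vectors are the vertices, so the monotonicity-at-limits condition only needs to be checked there. At each vertex, the empty bin $v$ sees one-choice balls (always allocated to $v$), two-choice balls whose partner is full ($x=1$), or two-choice balls whose partner is empty ($x=\frac12$). The potential drops $\Phi^S-\Phi^{S_+}$ are respectively $\frac1\beta(1-e^{-\beta(1-t)})$, $\frac12(1-e^{-2(1-t)})$, and the Poisson$(2(1-t))$ difference between $\min\{j,2\}$ and $\min\{j,1\}$, which equals $\frac12\Pr[j\ge2]$. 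Monotonicity then reduces to a handful of one-variable inequalities in $\tau=1-t\in[0,1]$. Each compares a derivative of $\Phi^S$ plus the unit gain $1$ against a product of rates and drops, and I would verify them by calculus. The Chebyshev condition asks for an ordering of the vertices along which both the allocation rate $\sum\Lambda^S|S|x^S_v$ and the total drop are non-decreasing. Ordering the vertices by the fraction of rate coming from one-choice and partner-full balls (those with $x=1$ and the larger drops) should work, provided $\beta$ is chosen so that the one-choice drop is at least the two-choice drops.

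Finally, the starting condition gives the bound $g(d_v)$. It is a weighted combination of $\Phi^{(0)}(0)=\frac{1-e^{-\beta}}{\beta}$ and $\Phi^{(0,0)}(0)=1-\frac{2}{e^2}$, with weights equal to the one-choice and two-choice shares of $d_v$. We then minimize over the share $a\le1-\ln2$ and optimize over $\beta$. The resulting expression should simplify to $1+\frac{\ln2}{1-\ln2}\frac1{e^2}-\frac1{1-\ln2}\frac1{2e}$; if it does not, I would also let the rate in the two-choice potential depend on the cap. The main obstacle I expect is the monotonicity verification at the mixed vertex, where the one-choice rate is exactly $1-\ln2$ and the remaining rate is two-choice with an empty partner. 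That is where the product term is largest, it is presumably the binding case that determines the $\frac{1}{1-\ln2}$ coefficients, and it is where I would start, choosing $\beta$ so that this inequality is tight at $t=0$.
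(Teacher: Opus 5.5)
There is a genuine gap: with the two-choice potentials you fix, the stated constant cannot be certified. Your $\Phi^{(0,0)}$ and $\Phi^{(0,1)}$ are the $c=1$, $k=2$, $d=1$ potentials of \Cref{eqn:k-choice-potential}. They satisfy $1+\frac{\dif{}}{\dif{t}}\Phi^{(0,0)}(t)=2\big(\Phi^{(0,0)}(t)-\Phi^{(0,1)}(t)\big)$ with equality.

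At the vertex you single out, $\vec{\Lambda}=(1-\ln 2,\ln 2,0)$ with $d_v=1$, the total allocation rate is exactly $1$. So the two-choice terms on the two sides of the monotonicity inequality cancel, and what remains is $1+\frac{\dif{}}{\dif{t}}\Phi^{(0)}(t)\ge\Phi^{(0)}(t)$. For your family $\frac{1}{\beta}(1-e^{-\beta(1-t)})$ this forces $\beta\ge 1$, and making it tight at $t=0$ gives exactly $\beta=1$. For any one-choice potential whatsoever, it forces $\Phi^{(0)}(t)\le 1-e^{-(1-t)}$, hence $\Phi^{(0)}(0)\le 1-\frac{1}{e}$.

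The starting condition then yields at most $(1-\ln 2)(1-\frac{1}{e})+\ln 2\,(1-\frac{2}{e^2})\approx 0.6995<0.7063$ per unit degree. This is attained in a $1$-regular instance where every bin carries one-choice rate $1-\ln 2$. A bin of small degree fed only by one-choice balls gets only $(1-\frac{1}{e})d_v$. Note that the cap $1-\ln 2$ is on the absolute rate, so the one-choice share of $d_v$ can be $1$ when $d_v\le 1-\ln 2$; minimizing over a share $a\le 1-\ln 2$ is only correct at $d_v=1$.

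Your fallback of changing the rate in the two-choice potential does not help either. For example, with rate $\gamma>2$ in both two-choice potentials, the extra slack at the mixed vertex is $\ln 2\cdot\frac{\gamma-2}{\gamma}\Pr_{j\sim\Poisson(\gamma(1-t))}[j\ge 2]=O\big((1-t)^2\big)$. By contrast, $\beta<1$ creates a deficit of order $1-t$, and raising $\gamma$ only lowers $\Phi^{(0,0)}(0)$.

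The missing idea is to let the two ball types share credit. The paper's proof uses one common potential $\Phi^{\mathbf{0}}$ for the all-empty tuple of both one-choice and two-choice balls, with $\Phi^{(0,1)}(t)=\frac{1}{2}-\frac{1}{2}e^{-2(1-t)}$ and zero for all other tuples. The starting condition then reads $d_v\,\Phi^{\mathbf{0}}(0)$ regardless of the mix. $\Phi^{\mathbf{0}}$ is chosen to make the mixed-vertex inequality at $d_v=1$ an equality. Writing $\tau=1-t$, this is the linear ODE $\frac{\dif{}}{\dif{\tau}}\Phi^{\mathbf{0}}+(1+\ln 2)\Phi^{\mathbf{0}}=1+\ln 2-\ln 2\,e^{-2\tau}$ with $\Phi^{\mathbf{0}}=0$ at $\tau=0$. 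Its solution is $1+\frac{\ln 2}{1-\ln 2}e^{-2(1-t)}-\frac{1}{1-\ln 2}(2e)^{-(1-t)}$. The $\frac{1}{1-\ln 2}$ coefficients come from the gap between the homogeneous rate $1+\ln 2$ and the forcing rate $2$.

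The value $\Phi^{\mathbf{0}}(0)\approx 0.7063$ lies strictly between $1-\frac{1}{e}$ and $1-\frac{2}{e^2}$. The two-choice potential is lowered below its natural value so that the one-choice potential can rise above $1-e^{-(1-t)}$. Your separate-potential design, with the tight two-choice potential, rules this out. The remaining vertices (worst cases $d_v=1$ and $d_v=1-\ln 2$) and Chebyshev's condition are then verified using $\Phi^{\mathbf{0}}\le 2\Phi^{(0,1)}$. For instance, the pure two-choice vertex $(0,1,0)$ reduces exactly to that inequality.
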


The above bound $0.706$ is the same as the best competitive ratio achievable by two-choice Online Stochastic Matching algorithms~\cite{JailletL:MOR:2014,HuangSY:STOC:2022}.
We defer the proof of \Cref{cor:Mixed-choice} to the \Cref{app:mixed-choice}.
 
For an instance where balls may have $1$ to $3$ choices (or more), the first three conditions of \Cref{thm:potential-based-general-extention} still hold, but the fourth fails.
It is an interesting future research direction to study the reduction of Online Stochastic Matching to Generalized Balls into Bins with mixed numbers of choices, and design algorithms for the latter.
Progress in this direction will likely improve the state-of-the-art of Online Stochastic Matching.

\subsection{Approximate MDP Algorithm}

Generalized Balls into Bins can be viewed as a Markov decision process (MDP) in a bounded and continuous time horizon.
For example, we may consider a state space comprised of pairs $(S, t)$ of bins' loads $S = (\load_v)_{v \in V}$ and time $t$.
The action space is the set of mappings $\pi$ from each ball to an incident bin.
Given a state and an action, nature decides if a ball arrives at the current time according to the Poisson process.
If a ball $e$ arrives and is mapped to bin $v$ by the action, bin $v$'s load increases by $1$.
Otherwise, the loads stay the same.
In either case, time elapses at a unit rate.

As a result, the optimal policy for an instance, e.g., a maximization problem with function $f$, is characterized by a dynamic program.
Let $V(S, t)$ be the value-to-go of state $(S, t)$.
The optimal policy allocates each ball $e$ to an incident bin $v$ to maximize the sum of the immediate increase of the objective and the value-to-go of the new state.
Formally, the Hamilton–Jacobi–Bellman (HJB) equation of the dynamic program is:
\[
	\sum_{e \in E} \lambda_e \cdot \max_{v \sim e} \Big( f(\load_v +  1) - f(\load_v) + V \big( ( \load_v +  1, S_{-v} ), t \big) - V(S, t) \Big)
	\,+\, 	
	\frac{\partial}{\partial t} V \big(S, t\big) 	
	\,=\,
	0
	~,
\]
where $( \load_v +  1, S_{-v} )$ denotes the load tuple obtained by increasing element $\load_v$ by $1$.
Correspondingly, the optimal policy maps each ball $e$ to the maximizer of the maximization problem above.

From this perspective, we may interpret the potential functions in this section as approximate value-to-go functions that satisfy the above differential equation with inequality.
Recall that $\Phi^{S_e}(t)$ is the potential of ball $e$ at time $t$ when its load tuple is $S_e$.
The overall potential is:
\[
	\Phi(S, t) \,=\, \sum_{e \in E} \lambda_e \cdot \Phi^{S_e}(t)
	~,
\]
where we change the notation from $\Phi(t)$ to $\Phi(S, t)$ to make the dependence on $S$ explicit, and to be consistent with the notation of value-to-go functions.
Further, let $\pi$ be the mapping of balls to incident bins by Greedy.
The analysis framework (proof of \Cref{thm:potential-based-general-extention}) implies that:
\begin{align*}
	&
	\sum_{e \in E} \lambda_e \cdot \max_{v \sim e} \Big( f(\load_v +  1) - f(\load_v) + \Phi \big( ( \load_v +  1, S_{-v} ), t \big) - \Phi(S, t) \Big) 
	\,+\, 
	\frac{\partial}{\partial t} \Phi \big(S, t\big) 
	\\
	& \quad
	\,\ge\, 
	\sum_{e \in E} \lambda_e \cdot \Big( f(\load_{\pi(e)} +  1) - f(\load_{\pi(e)}) + \Phi \big( ( \load_{\pi(e)} +  1, S_{-\pi(e)} ), t \big) - \Phi(S, t) \Big) 
	\,+\, 
	\frac{\partial}{\partial t} \Phi \big(S, t\big) 
	\,\ge\,
	0
	~,
\end{align*}
where $\sum_{e \in E} \lambda_e \big( f(\load_{\pi(e)} +  1) - f(\load_{\pi(e)}) \big)$ corresponds to the increase of the algorithm's objective $A(t)$, $\sum_{e \in E} \lambda_e \big( \Phi ( ( \load_{\pi(e)} +  1, S_{-\pi(e)} ), t ) - \Phi(S, t) \big)$ is the decrease of potential $\Phi(S, t)$ due to the allocation at time $t$, and finally, $\frac{\partial}{\partial t} \Phi \big(S, t\big)$ corresponds to the change of potential $\Phi(S, t)$ due to time elapses.

Further, the bounds in this section can also be achieved by an approximate MDP algorithm that allocates each arrived ball $e$ at time $t$ to an incident bin $v \sim e$ that maximizes:
\[
	f(\load_v +  1) - f(\load_v) + V \big( ( \load_v +  1, S_{-v} ), t \big)
	~,
\]
where we omit $- \Phi(S, t)$ since it is a fixed term for all incident bins.

\bibliography{balls-into-bins}

\appendix

\section{Discussion on Asymptotic Equivalence}
\label{sec:asymptotic-equivalence}

\subsection{Sum of Convex or Concave Functions}

First, we show that the Poisson arrival model is harder in the sense that the guarantees therein directly apply to the discrete-time model.

\begin{lemma}
	Consider any convex or concave function $f$ and any algorithm $A$.
	There is an algorithm $A'$ such that the expected objective of $A'$ for $n$ balls is at least as good as the expected objective of $A$ for $n' \sim \Poisson(n)$ balls.
\end{lemma}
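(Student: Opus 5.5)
We prove this by simulating the Poisson model inside the discrete-time model and then using convexity/concavity to handle the random number of balls. In the discrete-time model, $n$ balls arrive one by one, each drawn i.i.d.\ from the distribution $\lambda_e/\sum_{e'}\lambda_{e'}$ (this is the graphical model with total rate normalized to $n$). In the Poisson model with total rate $n$, the number of arrivals is $n'\sim\Poisson(n)$. Conditioned on $n'$, the arrivals form an i.i.d.\ sequence of $n'$ balls with the same distribution, placed at sorted uniform times on $[0,1]$.

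\textbf{Construction of $A'$.} Algorithm $A'$ samples $n'\sim\Poisson(n)$ internally, along with $n'$ sorted uniform arrival times. It also samples $(n'-n)^+$ fictitious balls i.i.d.\ from the ball distribution.

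If $n'\le n$, $A'$ feeds the first $n'$ real balls to $A$ at the sampled times and follows $A$'s decisions. The remaining $n-n'$ real balls are allocated by $A$ as well: we can just continue simulating $A$ by treating them as extra arrivals after time $1$, or place them in any fixed way. This case needs care; see below.

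If $n'>n$, $A'$ feeds the $n$ real balls followed by the $n'-n$ fictitious ones, and allocates the real balls as $A$ does. It simply never places the fictitious balls.

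Two properties make this simulation faithful. First, $A$ is online, so its decisions on the first $\min(n,n')$ balls do not depend on later ones. Second, the joint law of everything $A$ sees matches the Poisson model exactly. Hence there is a coupling in which $A'$'s final load vector $L'$ and $A$'s final load vector $L$ satisfy the following: $L'$ is obtained from $L$ by removing the $(n'-n)^+$ last balls, or by adding $(n-n')^+$ extra balls.

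\textbf{Comparing objectives.} The coupling only gives a pathwise domination when the number of balls is fixed, so the comparison must go through averaging over $n'$ using convexity/concavity. This is the main obstacle. Naively, for convex $f$ (minimization), removing balls helps but adding balls hurts. So we should choose the allocation of the extra balls in the case $n'<n$ so that the expected effect is controlled.

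The cleaner route is to condition differently. Let $h(m)$ denote the expected objective of the best simulation using exactly $m$ balls, where the first $m$ balls come from $A$'s run. I would argue that $m\mapsto \E\sum_v f(L_v^{(m)})$ under $A$'s run on a prefix of length $m$ is convex in $m$ when $f$ is convex, and concave when $f$ is concave. Jensen's inequality over $n'\sim\Poisson(n)$, which has mean $n$, then gives $h(n)\le \E h(n')$ in the convex case (reversed for concave).

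The problem is that this convexity in $m$ is not obvious for an adaptive $A$. I would therefore instead make $A'$ randomize: it runs $A$ on an independently drawn Poisson number of balls. To turn the $n$ real balls into such a run, $A'$ embeds them as the first $n$ of an infinite i.i.d.\ stream, runs $A$ on the stream, and stops at a random $n'$. It then handles the discrepancy by allocating the extra balls uniformly at random (the Random baseline) when $n'<n$, and by discarding them when $n'>n$.

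For the concave (maximization) case, discarding never helps, since $f$ is non-decreasing. So the real difficulty is to show that the loss from the truncated balls is offset by the gain from the Random-allocated extra balls, in expectation over $n'$. I expect this balancing step to be the crux, and I would attack it by pairing the outcomes $n'=n+k$ and $n'=n-k$ and using the symmetry of the first-order behavior of $\Poisson(n)$ together with the convexity/concavity of $f$.
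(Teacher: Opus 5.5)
You handle the case $n'>n$ correctly (real balls first, fictitious ones never placed), and this matches the paper. The case $n'<n$ is where the proof lives, though. Your proposal leaves it open, and the completion you sketch would fail.

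If the $(n-n')^+$ leftover real balls are allocated by Random, you need their expected gain to offset $A$'s expected gain $\E\sum_{n<i\le n'}A(i)$ on the truncated balls. Here $A(i)$ is $A$'s increment on the $i$-th ball. Since $\E(n-n')^+=\E(n'-n)^+$, this essentially requires Random's marginal gain near the end of the run to be at least $A$'s. That fails for any $A$ that beats Random. Take $f(L)=\min\{L,1\}$ and $A=$ Greedy on a large $1$-regular complete graph, with a fraction $q>0$ of empty bins near the end. A Random ball gains about $q$, while a Greedy ball gains about $2q-q^2>q$, so your $A'$ is worse by $\Theta(\sqrt n)$. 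Pairing $n'=n+k$ with $n'=n-k$ cannot repair this, and the Poisson law is not symmetric anyway.

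Your other two options also fail in general. Continuing $A$ on the leftovers, or applying Jensen to $m\mapsto\E\sum_{i\le m}A(i)$, both need $i\mapsto\E A(i)$ to be monotone. An arbitrary adaptive $A$ need not satisfy this. For example, take two bins, $f(L)=\min\{L,1\}$, and an $A$ that puts the first $n$ balls into one bin and ball $n+1$ into the other.

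The missing idea is to make the leftover real balls play the roles of the balls that $A$ sees \emph{after} position $n$, rather than allocating them by a fixed rule.

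\begin{itemize}
\item When $n'<n$, draw a randomized strictly increasing map $\sigma:\{1,\dots,n-n'\}\to\{n+1,n+2,\dots\}$ such that every position $j>n$ is occupied with probability exactly $\Pr[n'\ge j]$. This is feasible because $\E(n-n')^+=\E(n'-n)^+=\sum_{j>n}\Pr[n'\ge j]$, which is where $\E n'=n$ is used.
\item Feed $A$ a stream built as follows. Its first $\min\{n,n'\}$ entries are the first real balls. Its entry $\sigma(i-n')$ is real ball $i$ for $n'<i\le n$. Its remaining entries are fresh fake balls.
\item Let $A'$ copy $A$'s decision on each real ball and ignore the fake ones. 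This is online because $\sigma$ is monotone.
\end{itemize}

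The stream is i.i.d.\ and independent of $(n',\sigma)$. Each bin of $A'$ always holds a subset of the balls $A$ has put there. So by convexity (resp.\ concavity) of $f$, the change caused by a real ball under $A'$ is no worse than $A$'s increment $A(p)$ at its position $p$. Summing and taking expectations bounds $A'$'s objective by $\sum_{j\ge 1}\E A(j)\Pr[n'\ge j]$. This is exactly $A$'s expected objective on $n'\sim\Poisson(n)$ balls.

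Convexity is used ball by ball, only to justify discarding the fake balls. No shape property of $m\mapsto\E\sum_{i\le m}A(i)$ is needed.
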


\begin{proof}
	We prove the claim for convex functions $f$ through a coupling argument. 
	The proof for concave functions is similar.
	
	First, consider a mapping from the ``missing balls'' $n'+1, n'+2, \dots, n$ to the ``extra balls'' $n+1, n+2, \dots, n'$ for different realization of $n' \sim \Poisson(n)$. 
	Formally, define a randomized strictly monotone mapping $\sigma : \{1, 2, \dots, n\} \to \{n+1, n+2, \dots \}$ such that for any $j > n$:
	\begin{equation}
		\label{eqn:poisson-harder-randomized-mapping}
		\E_{n' \sim \Poisson(n)}  \sum_{n' < i \le n} \mathbf{1}_{\sigma(i-n') = j}  ~=~ \Pr_{n' \sim \Poisson(n)} \big[ n' \ge j \big]
		~.
	\end{equation}
	%
	Such a mapping exists because $\E_{n' \sim \Poisson(n)} = n$.
	
	Next, we define algorithm $A'$ for a fixed number of $n$ balls. 
	We draw $n' \sim \Poisson(n)$ and sample the randomized mapping $\sigma$.
	Then, we construct a sequence of balls so that:
	\begin{itemize}
	\item The first $\min\{n', n\}$ balls are identical to the input sequence;
	\item Each of the remaining balls $n' < i \le n$ will be the $\sigma(i-n')$-th ball in the sequence; and 
	\item Sample fake balls independently for the remaining slots.
	\end{itemize}
	
	Algorithm $A'$ follows algorithm $A$'s allocation of the corresponding balls in the above sequence.
	Since $\sigma$ is a monotone mapping, this algorithm $A'$ can be implemented online by sampling the fake balls on the fly. 
	
	\bigskip
%
	The design of $A'$ naturally defines a coupling of a run of $A'$ on $n$ balls and a run of $A$ on the first $n' \sim \Poisson(n)$ balls of the sequence.
	Let $A(i)$ denote the increase of the objective when algorithm $A$ allocates the $i$'th ball of the sequence. 
	Then, the expected objective of $A$ is:
	\[
		\E_{n' \sim \Poisson(n)} \sum_{i=1}^{n'} A(i)
		~=~
		\sum_{i = 1}^\infty A(i) \cdot \Pr_{n' \sim \Poisson(n)} \big[ n' \ge i \big]
		~.
	\]
	
	By the convexity of function $f$, the expected objective of $A'$ is at most:
	\[
		\E_{n' \sim \Poisson(n)} \bigg[ \sum_{1 \le i \le \min\{n', n\}} A(i) + \sum_{n' < i \le n} A\big( \sigma(i-n') \big) \bigg]
		~.
	\]
	
	Further, by linearity of expectation and applying \Cref{eqn:poisson-harder-randomized-mapping} to the second part, this equals:
	\[
		\sum_{i = 1}^\infty A(i) \cdot \Pr_{n' \sim \Poisson(n)} \big[ n' \ge i \big]
		~.
	\]	
	
	We conclude that the expected objective of $A'$ on $n$ balls is at most the expected objective of $A$ on $n' \sim \Poisson(n)$ balls.
\end{proof}

Next, we show that the opposite direction holds up to $1 - o(1)$ for monotone concave functions.

\begin{lemma}
	Consider any non-negative, monotone, and concave function $f$ and any algorithm $A$.
	There is an algorithm $A'$ such that the expected objective of $A'$ on $n' \sim \Poisson(n)$ balls is at least a $1-o(1)$ factor of the expected objective of $A$ on $n$ balls.
\end{lemma}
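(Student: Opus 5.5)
We construct $A'$ by simulation: on $n' \sim \Poisson(n)$ balls, $A'$ runs $A$ on the first $\min\{n', n\}$ balls, treating them as the first balls of $A$'s length-$n$ input. If $n' < n$, the run simply stops early. If $n' > n$, the extra balls $n+1, \dots, n'$ are allocated arbitrarily, for instance by continuing to follow $A$ as if the horizon were longer, or uniformly at random among incident bins. Because $f$ is non-negative and monotone, extra balls never decrease any bin's value, so the objective on the event $n' \ge n$ is at least that of $A$ on exactly $n$ balls. The comparison therefore reduces to bounding the loss on the event $n' < n$, where $A'$ only realizes a prefix of $A$'s run.

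The key step is to bound the objective of $A$ truncated after $m$ balls against its full objective after $n$ balls. Write $A(i)$ for the increase in $\sum_v f(\load_v)$ when $A$ allocates the $i$-th ball in the length-$n$ run; each $A(i)$ is non-negative by monotonicity. A prefix guarantee of the form $\E \sum_{i \le m} A(i) \ge \frac{m}{n} \E \sum_{i \le n} A(i)$ would follow if the expected marginal increments $\E A(i)$ were non-increasing in $i$. Concavity of $f$ suggests this, since each bin's marginal gain shrinks as its load grows, but the increments need not be monotone for an arbitrary adaptive $A$. I would therefore avoid relying on $A$'s own ordering and instead use a random-order or symmetrization argument. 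Because balls are i.i.d., I would compare $A$ on $n$ balls with the objective obtained by allocating only a random subset of $m$ of those balls exactly as $A$ did and discarding the rest. For concave $f$ with $f(0) \ge 0$, keeping each ball independently with probability $m/n$ gives $\E f(\mathrm{Bin}(\load_v, m/n)) \ge \frac{m}{n} f(\load_v)$ per bin, using $f(k) \le$ sums of marginals and concavity along the thinning. The obstacle is implementing this online. The subset must be a prefix, and the discarded balls' slots must be filled with fake balls sampled independently, mirroring the construction in the previous lemma. Concretely, I would draw fake balls to fill the positions $n'+1, \dots, n$ of $A$'s simulated input, with the real balls placed at a uniformly random subset of positions. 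This has to be done online, so the real balls are interleaved with fake ones at random positions, which is possible since the number of real balls is revealed only at the end. To handle this, I would choose a random injection of the real balls into a uniformly random order of $n$ slots, determined by sampling whether each next slot is real or fake with the appropriate hypergeometric-style probabilities given $n'$.

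The final step computes the loss. Conditioned on $n'=m \le n$, the per-bin thinning bound gives an objective of at least $\frac{m}{n}$ times $A$'s objective. Taking expectation yields a factor of $\E \min\{n', n\}/n = 1 - \E(n-n')^+/n \ge 1 - O(1/\sqrt{n}) = 1 - o(1)$, since $\E |n'-n| \le \sqrt{n}$. The main difficulty I expect is this step: making the randomly interleaved online simulation valid when $n'$ is unknown in advance. I would resolve it by using the Poisson splitting property. Conditioned on $n'$, the arrival times are uniform, so placing the $j$-th real ball at the slot determined by its arrival time, scaled to $n$ slots with random rounding, gives the required uniformly random positions without knowing $n'$.
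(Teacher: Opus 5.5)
There is a genuine gap, and it sits at the step you flagged as the main difficulty: placing the real balls among $A$'s $n$ slots online. Your Poisson-splitting fix does not give distinct positions. Conditioned on $n'=m$, the arrival times are i.i.d.\ uniform, so the slots $\approx n t_j$ are i.i.d.\ uniform on $\{1,\dots,n\}$, that is, sampled \emph{with} replacement. For $m\approx n$, roughly $n/e$ slots receive no real ball. Roughly $n/e$ real balls land on a slot that is already used (or already passed by $A$'s simulation), so they can only be treated as overflow. Each slot is then real with probability about $1-e^{-m/n}$, which gives a factor near $1-1/e$, not $1-o(1)$.

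The target itself is also out of reach for any online rule. The first real ball's slot must be fixed at its arrival time, using only $t_1$ and internal randomness. Conditioned on $n'=n$, a uniform $n$-subset is all of $\{1,\dots,n\}$, so the first ball must go to slot $1$. Conditioned on $n'=1$, it must be uniform on $\{1,\dots,n\}$. Since $t_1$ has positive density on $(0,1)$ under both conditionings, no rule can satisfy both. So the conditional bound ``objective $\ge \frac{m}{n}\cdot$($A$'s objective) given $n'=m$'' has no online realization, and the factor $\E\min\{n',n\}/n$ is not established.

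Your concavity argument needs much less than conditional uniformity. It suffices that the set of real slots is independent of $A$'s simulated run (ball types and $A$'s coins), and that every slot $i\le n$ is real with probability at least $1-\epsilon$. Then $f(k)\ge \frac{k}{L}f(L)$ for $0\le k\le L$ (concavity plus $f(0)\ge 0$) gives $\E[f(K_v)\mid \text{run}]\ge (1-\epsilon) f(L_v)$ for every bin.

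This is easy to arrange online without knowing $n'$:
\begin{enumerate}
\item Before any arrival, reserve a uniformly random set of $k=\lceil n^{2/3}\rceil$ slots for fake balls.
\item Each real ball takes the next unreserved slot, and $A$ is simulated on fresh fake balls at the reserved slots it passes.
\item Real balls beyond slot $n$ are allocated arbitrarily, which only helps by monotonicity.
\end{enumerate}
A slot fails to be real only if it is reserved (probability $k/n$) or if $n'<n-k$ (probability $e^{-\Omega(n^{1/3})}$), so $\epsilon=O(n^{-1/3})$.

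With this repair, your route is a genuine alternative to the paper's. The paper works with the deterministic expected marginals $A(i)$ and re-indexes $A$'s steps by a map $\sigma$ that makes $A(\sigma(i))$ approximately non-increasing. It then uses concavity (fake balls are never placed, so real ball $i$ gains at least $A(\sigma(i))$) together with $\E(n-n')^+=o(n)$. Your repaired embedding is order-preserving by construction, needs no normalization or bucketing, and never asks $A'$ to simulate $A$'s steps out of order.
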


\begin{proof}
	Fix any distribution of balls.
	Let $A(i)$ denote the expected increase in the objective when algorithm $A$ allocates the $i$-th ball.
	This is non-negative by the monotonicity of function $f$.
	We further normalize $f(1) - f(0)$ to be $1$ so that $A(i) \le 1$ for any $i \ge 1$.
	
	If $A(i)$ is non-increasing, then lemma follows by $\E_{n' \sim \Poisson(n)} (n-n') = o(n)$.
	Otherwise, we construct $A'$ to ``monotonizes'' $A$ as follows.
	First, define a mapping $\sigma : \Z_{\ge 1} \to \Z_{\ge 1}$ such that $A(\sigma(i))$ is approximately monotone:
	\begin{equation}
		\label{eqn:asymp-equiv-approximate-monotone}
		\forall i < i' : \quad A \big( \sigma(i) \big) ~\ge~ \Big( 1 - \frac{1}{n} \Big) A \big( \sigma(i') \big)
		~,
	\end{equation}
	and it approximately dominates $A(i)$:
	\begin{equation}
		\label{eqn:monotonization}	
		\forall i : \quad A \big( \sigma(i) \big) ~\ge~ \Big( 1 - \frac{1}{n} \Big) \cdot A(i)
		~,
	\end{equation}
	where $\frac{1}{n}$ can be replaced by any $o(1)$ function.
	This mapping can be constructed as follows:
	\begin{enumerate}
		\item For any $k \ge 0$, define:
		\[
			I_k ~\defeq~ \bigg\{~ i : \Big(1-\frac{1}{n}\Big)^{k+1} < A(i) \le \Big(1-\frac{1}{n}\Big)^k ~\bigg\}
			~.
		\]
		\item If $I_0$ is infinite, define $\sigma(i)$ to be the indices in $I_0$ in ascending order.
		\item Otherwise, define the first $|I_0|$ values of $\sigma(i)$ to be the indices in $I_0$ in ascending order, and resurvely consider $I_1$, $I_2$, and so forth.
	\end{enumerate}
	
	Let $A'$ allocate the $i$-th ball following how algorithm $A$ allocates the $\sigma(i)$-th ball, generating fake balls $\sigma(i-1) < j < \sigma(i)$ on the fly.
	By the concavity of $f$, the objective of $A'$ is at least:
	\[
		\E_{n' \sim \Poisson(n)} \sum_{i = 1}^{n'} A\big( \sigma(i) \big)
		~.
	\]
	
	By the approximate monotonicity of $A \big( \sigma(i) \big)$ (\Cref{eqn:asymp-equiv-approximate-monotone}) and $A \big( \sigma(i) \big) \ge 0$, this at least:
	\[
		\big(1 - o(1) \big) \E_{n' \sim \Poisson(n)} \Big( 1 - \frac{(n-n')^+}{n} \Big) \sum_{i = 1}^n A\big( \sigma(i) \big) 
		~.
	\]
	
	Further, by $\E_{n' \sim \Poisson(n)} (n-n') = o(n)$, it is at least:
	\[
		\big( 1 - o(1) \big) \sum_{i = 1}^n A\big( \sigma(i) \big) 
		~.
	\]
	
	Finally, this is at least $\big( 1 - o(1) \big) \sum_{i = 1}^n A(i)$ by \Cref{eqn:monotonization}.
%
\end{proof}

Similar claim does not hold for general convex functions or general non-monotone concave functions.
For example, consider two bins, a single ball between them, and function $f(\load) = 0$ for $\load \le 1$, and $f(\load) = \infty$ otherwise. 
If we have exactly two balls, the optimal objective would be zero.
If we sample the number of balls from $\Poisson(2)$, however, the objective will be infinite with a constant probability.
The counterexample for general non-monotone concave functions is similar.

\subsection{Maximum Load}

For the original maximum load $\max_v \load_v$, instead of the softmax potential function $\sum_v e^{\eta \load_v}$, we need a different argument.
The critical reason is $\max_v \load_v$ is not supermodular (over the integer lattice)---putting one more ball to bin $1$ would increase the maximum load by $1$ if the loads of two bins are $(1, 1)$, but would not affect the objective for some larger load combinations such as $(2, 3)$.

Since the graphical Balls into Bins problem focuses on asymptotic bounds on the maximum load, it suffices to show that bounds of the discrete-time and Poisson arrival models differ by at most a constant factor.

\begin{lemma}
Given any algorithm $A$ for the original graphical Balls into Bins problem where balls arrive in discrete time steps, such that $A$'s upper bound on the expected maximum load increases at most linearly in the number of balls, there is an algorithm $A'$ for Poisson arrival of balls, guaranteeing an upper bound at most twice the bound of $A$.
	The opposite direction also holds.
\end{lemma}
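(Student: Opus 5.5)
The plan is to compare the two models through a coupling in the number of arrivals, and to use the assumed linear growth of $A$'s bound to absorb the difference between $n$ and $n' \sim \Poisson(n)$. Let $B(m)$ be $A$'s upper bound on the expected maximum load with $m$ discrete balls, and assume $B(m) \le \kappa \cdot m/n \cdot B(n)$, or more simply $B(2n) \le 2B(n)$. In the graphical model with $T = n$ steps, each ball is a uniformly random edge. In the Poisson model with $\lambda_e = n/|E|$, the realized balls are $n' \sim \Poisson(n)$ i.i.d. uniform edges in uniformly random order. So the Poisson model is exactly the discrete model run for a random number $n'$ of steps.

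\emph{Poisson from discrete.} Let $A'$ run $A$, initialized for horizon $2n$, on the Poisson arrivals in order. Every ball is an i.i.d. uniform edge, so as long as at most $2n$ balls arrive, $A'$'s allocation is exactly $A$'s on a prefix of a $2n$-step run. The maximum load of a prefix is at most that of the full run, which is monotone under adding balls. Hence on the event $n' \le 2n$ the expected maximum load is at most $B(2n) \le 2B(n)$, up to the conditioning issue discussed below. If $n' > 2n$, $A'$ restarts $A$ in fresh blocks of $2n$ balls. The maximum load is at most the sum of the block maxima, and
\[
\sum_{k\ge 1} \Pr[n' > 2kn]\, B(2n) = o(1)\cdot B(n)
\]
by Poisson tail bounds ($\Pr[n'>2kn]\le e^{-\Omega(kn)}$). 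This gives $2B(n)(1+o(1))$; the constant $2$ can be tightened, for instance by using a horizon $(1+\epsilon)n$ plus linear growth, or by taking $A$'s bound at $n'$ directly when $A$ is anytime.

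\emph{Discrete from Poisson.} Given $A'$ for the Poisson model with rate $n/|E|$, define $A$ on $n$ discrete balls as follows. Sample $n' \sim \Poisson(n/2)$, which is equivalent to running $A'$ on an instance of rate $n/(2|E|)$. Feed the first $\min(n,n')$ real balls to $A'$. Then allocate the remaining real balls by running a second independent copy of $A'$ (or of the scheme) on them, recursing. Each phase's maximum load is bounded by $A'$'s guarantee at half rate, which is at most the full-rate guarantee by monotonicity in rate. The number of phases needed has geometric tails, since $\Pr[n' < n/2 \cdot (1-\delta)]$ is small but constant. So a direct bound gives a constant; getting exactly $2$ needs care. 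An alternative for exactly $2$: let $A$ simulate $A'$ at rate $n$ but pad with fake balls when $n' > n$. Fake balls only increase loads, so monotonicity gives a bound of at most $A'$'s bound whenever $n' \ge n$, which has probability about $1/2$. On $n' < n$, the leftover $n-n'$ real balls, of order $O(\sqrt n)$ in expectation, are handled by a second independent copy of $A'$. The sum of two maxima is then at most $2\times$ the bound.

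\emph{Main obstacle.} The main difficulty is the conditioning: $A$'s guarantee is an \emph{unconditional} expectation, while I use it on the events $\{n'\le 2n\}$ or $\{n'\ge n\}$. The fix I would try first is to avoid conditioning entirely. I would always run the full $2n$-step (resp. padded) process, generating fake balls, so that the real maximum load is pointwise dominated by the maximum load of a full run whose expectation is exactly the guaranteed bound. This requires that fake balls can be generated online with the correct distribution, which holds because they are i.i.d. uniform edges independent of $A$'s state.
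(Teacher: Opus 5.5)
Your padding device (feeding every copy of the algorithm a stream whose law is exactly the one its guarantee refers to, with fake i.i.d.\ balls filling the gaps) is the right tool. In the first direction it removes the conditioning issue entirely, because $n'$ is independent of the i.i.d.\ sequence of ball identities. The weak point there is the block length. With blocks of $2n$ balls and Poisson tails you only get $2B(n)(1+o(1))$, not the stated factor $2$. The tightenings you mention need extra assumptions: linear growth at non-integer ratios, large $n$, or an anytime $A$. Use blocks of length $n$, as the paper does. The maximum load is at most the sum of the $\lceil n'/n\rceil$ block maxima, each of expectation at most $B(n)$ and independent of $n'$, so the bound is $\E\lceil n'/n\rceil\cdot B(n)\le(1+\E n'/n)\,B(n)=2B(n)$. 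Argued this way, you do not even need the linear-growth hypothesis, which the paper invokes as $\alpha(in)\le i\,\alpha(n)$.

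The second direction has a genuine gap. The half-rate scheme does not work as stated. The given Poisson algorithm carries a guarantee only for its own rate, so nothing bounds its maximum load on a half-rate stream. There is no ``monotonicity in rate'' to invoke unless you pad back up to full rate.

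Your padded alternative is the right idea but contains a false step: a second copy need not absorb the $n-n'$ leftover balls. Its own simulated count $n''\sim\Poisson(n)$ can be smaller than $n-n'$; for $n=1$, both copies see no arrival with probability $e^{-2}$. You would have to recurse and show that the expected number of copies, $\sum_{k\ge 0}\Pr[\Poisson(kn)<n]$, is at most $2$. This is true (it is at most about $1.6$, using $\Pr[\Poisson(n)\ge n]\ge\frac12$ for $k=1$ and a tail estimate for $k\ge 2$), but the argument is missing.

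The paper's route is simpler: a single copy that pays for the conditioning. Writing $\alpha'(n')$ for the expected maximum load given $n'$ arrivals, it bounds $\alpha(n)\ge\Pr[n'\ge n]\cdot\alpha'(n)\ge\frac12\,\alpha'(n)$. Your padding device makes the comparison between $n$ and $n'\ge n$ arrivals precise. Sample $m$ from $\Poisson(n)$ conditioned on $m\ge n$, together with arrival times. Let the $n$ real balls occupy the first $n$ slots; the fake balls arrive afterwards and need not be simulated. The discrete algorithm's maximum load is then dominated by that of a Poisson run conditioned on $\{n'\ge n\}$, whose expectation is at most $\alpha(n)/\Pr[n'\ge n]\le 2\alpha(n)$. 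So in this direction you should not try to avoid the conditioning you flagged as the main obstacle. The elementary bound $\E[M\mid E]\le\E[M]/\Pr[E]$ for nonnegative $M$ is exactly where the factor $2$ comes from.
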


\begin{proof}
	Consider any fixed set of bins $V$, and any distribution over balls $\{u, v\}$. 
	If we have a fixed number of balls $n$, it becomes an instance of the graphical Balls into Bins problem.
	If the number of balls is $n' \sim \Poisson(n)$, it is an instance of the Generalized Balls into Bins problem.
	
	First, consider any given algorithm $A$ for graphical Balls into Bins.
	Let $\alpha(n)$ denote its upper bound on the expected maximum load for an instance with $n$ balls, where we hide the dependence on instance-specific parameters such as graph expansion.
	%
	In the Poisson arrival model, let $A'$ be an algorithm that allocates the first $n$ balls using algorithm $A$, and then allocates the next $n$ balls (if applicable) using a fresh copy of algorithm $A$, and so forth.
	Then, $A'$ guarantees the expected maximum load is at most:
	\begin{align*}
		&
		\sum_{i = 1}^\infty \Pr_{n' \sim \Poisson(n)} \big[ (i-1) \cdot n < n' \le i \cdot n \big] \cdot \alpha(in) \\[-0.5ex]
		& \qquad
		~\le~ \alpha(n) \cdot \sum_{i = 1}^\infty \Pr_{n' \sim \Poisson(n)} \big[ (i-1) \cdot n < n' \le i \cdot n \big] \cdot i \\
		& \qquad
		~=~ \alpha(n) \cdot \E_{n' \sim \Poisson(n)} \left\lceil \frac{n'}{n} \right\rceil \\[1ex]
		& \qquad
		~\le~ \alpha(n) \cdot \E_{n' \sim \Poisson(n)} \left( \frac{n'}{n} + 1 \right) ~=~ 2\alpha(n)
		~.
	\end{align*}
	
	Conversely, given any algorithm $A$ for the Poisson arrival model, with upper bound $\alpha(n)$ on the expected maximum load for an instance with $n' \sim \Poisson(n)$ balls.
	Let $\alpha'(n')$ denote the expected maximum load conditioned on the value of $n'$.
	We have:
	\begin{align*}
		\alpha(n) 
		&
		~\ge~ \sum_{n' \sim \Poisson(n)} \alpha'(n') \\
		&
		~\ge~ \Pr_{n' \sim \Poisson(n)} \big[ n' \ge n \big] \cdot \alpha'(n) \\[1ex]
		&
		~\ge~ \frac{1}{2} \cdot \alpha'(n)
		~.
	\end{align*}

	In other words, directly running $A' = A$ gives an upper bound of $\alpha'(n) \le 2 \alpha(n)$.
\end{proof}

\section{Deferred Proofs from \Cref{sec:greedy}}
\label{app:greedy}




\subsection{Proof of \Cref{cor:c-matching}}
\label{app:c-matching}

We construct potential functions as follows:
\[
	\Phi^{(L_v,L_u)}(t) = \E_{j \sim \mathrm{Poisson}(2d(1-t))} \frac{\min \bigl\{ j \,,\, \sum_{v \sim e} (c - \load_v)^+ \bigr\}}{2d}
	~.
\]

The starting condition at $t = 0$ holds because for any bin $v$ and its degree $d_v$:
\[
\Phi^{(0,0)}(0) = \E_{j \sim \mathrm{Poisson}(2d)}\frac{\min \{ j, 2c \}}{2d} 
~.
\]
	
	The terminal condition also holds because $\Phi^{S}(1) = 0$ for any load pair $S$ by definition.

It remains to verify the last two conditions of \Cref{thm:potential-based-general}.
Consider an arbitrary bin $v \in V$ and its load $\load_v$.
The subset of consistent load pairs are:
\[
	\loadeset_v \,=\, \Big\{ (\load_v, i) : 0 \le i \le c \Big\}
	~,
\]
where load pairs $(\load_v, i)$ for $i > c$ are omitted since the argument can treat them as $(\load_v, c)$---the other bin has zero remaining capacity for these pairs.

For any bin $v \in V$ and its load $L_v$, and any load pair $S = (L_v, i)$, the monotonicity inequality we need is that:
\[
	\mathbf{1}_{\load_v < c} + \frac{\dif{}}{\dif{t}} \Phi^{(\load_v,i)}(t) 
	\,\ge\,
	 2d_v \, \Big( \Phi^{(\load_v,i)}(t) - \Phi^{(\load_v+1,i)}(t) \Big)
	~.
\]
Since the right-hand-side is non-negative and $d_v \le d$, we only need to verify the case of $d_v = d$, where it holds with equality, i.e.:
\begin{equation}
	\label{eqn:c-matching-monotone}
	1 + \frac{\dif{}}{\dif{t}} \Phi^{(\load_v,i)}(t)
	\,=\,
	2d \, \Big( \Phi^{(\load_v,i)}(t) - \Phi^{(\load_v+1,i)}(t) \Big)
	~.
\end{equation}
	
This follows by two simple facts, which we phrase as lemmas below as they will be useful in other examples too.

\begin{lemma}
\label{lem:cap-expectation-change-cap}
    For any integer $k \ge 1$ and any distribution $D$ over $\Z_{\ge 0}$: 
    \[
	    \E_{j \sim D} \min \{ j, k \} -  \E_{j \sim D} \min \{ j, k-1 \}  
    	\,=\,
    	\Pr_{j \sim D} \big[\, j \ge k \,\big]
    	~.
    \]    
\end{lemma}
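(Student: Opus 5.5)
The plan is to compare the two truncated variables pointwise for every realization $j \in \Z_{\ge 0}$ and then take expectations. Since $k \ge 1$ is an integer and $j$ is a non-negative integer, there are exactly two regimes to consider.

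\emph{Case $j \ge k$.} Here $\min\{j,k\} = k$ and $\min\{j,k-1\} = k-1$, so
\[
\min\{j,k\} - \min\{j,k-1\} = 1 .
\]

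\emph{Case $j \le k-1$.} Here $\min\{j,k\} = \min\{j,k-1\} = j$, so the difference is $0$.

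Together, the two cases give the pointwise identity
\[
\min\{j,k\} - \min\{j,k-1\} = \mathbf{1}_{j \ge k} \qquad \text{for all } j \in \Z_{\ge 0}.
\]

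Next, take the expectation over $j \sim D$ and use linearity. The expectation of the indicator is exactly $\Pr_{j \sim D}[\, j \ge k \,]$. The only care needed is that subtracting the two expectations is legitimate. This is fine because $\min\{j,k-1\} \le k-1$ is bounded, so its expectation is finite. Even if $\E \min\{j,k\}$ were considered separately, it too is bounded by $k$, so both expectations are finite for every distribution $D$ on $\Z_{\ge 0}$.

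I do not expect any real obstacle. The only subtlety is the integrality of $j$ and $k$: it rules out an intermediate regime $k-1 < j < k$, which is precisely what makes the difference an indicator rather than a fractional quantity.
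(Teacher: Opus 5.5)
Your proof is correct and matches the paper's argument: the paper likewise notes that $\min\{j,k\}-\min\{j,k-1\}$ equals $1$ when $j \ge k$ and $0$ otherwise, and then takes expectations. Your remark that both truncated expectations are finite, being bounded by $k$, is a small justification the paper leaves implicit.
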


\begin{proof}
	It follows because $\min \{ j, k \} - \min \{ j, k-1 \}$ equals $1$ when $j \ge k$, and $0$ otherwise.
\end{proof}

\begin{lemma}
    \label{lem:cap-expectation-change-rate}
    For any integer $k \ge 1$: 
    \[
    	\frac{\dif{}}{\dif{\lambda}} \E_{j \sim \mathrm{Poisson}(\lambda)} \min \big\{ j \,,\, k \big\}
    	\,=\, 
    	\Pr_{j \sim \mathrm{Poisson}(\lambda)} \big[\, j < k \,\big]
	    ~.
    \]    
\end{lemma}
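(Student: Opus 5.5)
We prove the identity by writing the expectation as a tail sum, which reduces it to the classical derivative formula for Poisson tail probabilities. First, since $\min\{j,k\} = \sum_{m=1}^{k} \mathbf{1}_{j \ge m}$ (equivalently, by telescoping \Cref{lem:cap-expectation-change-cap} from $0$ up to $k$), we have
\[
	\E_{j \sim \Poisson(\lambda)} \min\{j,k\} \,=\, \sum_{m=1}^{k} \Pr_{j \sim \Poisson(\lambda)}\big[\, j \ge m \,\big]
	~.
\]
This is a finite sum of smooth functions of $\lambda$, so we may differentiate term by term without any convergence issues.

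Next we compute $\frac{\dif{}}{\dif{\lambda}} \Pr[\, j \ge m \,]$. Write $p_i(\lambda) = e^{-\lambda}\lambda^i/i!$. The product rule gives $p_i'(\lambda) = p_{i-1}(\lambda) - p_i(\lambda)$, with the convention $p_{-1} \equiv 0$. Then
\[
	\Pr[\, j \ge m \,] \,=\, 1 - \sum_{i=0}^{m-1} p_i
	~.
\]
Differentiating, the sum $\sum_{i=0}^{m-1}(p_{i-1}-p_i)$ telescopes to $-p_{m-1}$, so the derivative of $\Pr[\, j \ge m \,]$ is $p_{m-1}(\lambda) = \Pr[\, j = m-1 \,]$. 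This also has a direct interpretation through the Poisson process: increasing the rate by $\dif{\lambda}$ adds an arrival with probability $\dif{\lambda}$, and that extra arrival pushes the count past $m-1$ exactly when the count was $m-1$.

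Finally, summing over $m = 1, \dots, k$ gives
\[
	\sum_{m=1}^{k} \Pr[\, j = m-1 \,] \,=\, \Pr[\, j \le k-1 \,] \,=\, \Pr[\, j < k \,]
	~,
\]
which is the claimed identity.

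None of the steps is a real obstacle. The only points needing care are the telescoping index convention at $i=0$ (where $p_0' = -p_0$) and observing that the sum is finite, so term-by-term differentiation is legitimate.
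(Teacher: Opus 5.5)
Your proof is correct, but it takes a different route from the paper's.

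\textbf{The paper's route.} The paper argues probabilistically. It views $\Poisson(\lambda+\delta)$ as $\Poisson(\lambda)$ plus an independent $\Poisson(\delta)$ batch of extra arrivals. Two or more extra arrivals occur with probability $O(\delta^2)$, and a single extra arrival raises $\min\{j,k\}$ by one exactly when $j<k$. Dividing by $\delta$ gives the identity in one step, without decomposing $\min\{j,k\}$ at all.

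\textbf{Your route.} You reduce to tail probabilities via $\min\{j,k\}=\sum_{m=1}^{k}\mathbf{1}_{j\ge m}$. You then differentiate each $\Pr[j\ge m]$ exactly using $p_i'=p_{i-1}-p_i$, so the argument is a finite computation with smooth functions. Your parenthetical Poisson-process remark is essentially the paper's argument, applied term by term.

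\textbf{What yours buys.} You get full rigor and a genuine two-sided derivative with no limiting argument, plus the intermediate identity $\frac{\dif{}}{\dif{\lambda}}\Pr[j\ge m]=\Pr[j=m-1]$. The paper's perturbation, as written (increasing the rate by a small $\delta$), is really a right-derivative computation. Making it airtight needs a word about $\delta<0$, for instance the coupling $\Poisson(\lambda)=\Poisson(\lambda-\delta)+\Poisson(\delta)$, or continuity of the right derivative.

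\textbf{What the paper's buys.} Its argument is conceptually transparent and matches the ``one more arrival'' intuition behind the potential functions. The same argument gives $\frac{\dif{}}{\dif{\lambda}}\E\, h(j)=\E[h(j+1)-h(j)]$ for any bounded $h$. Your tail-sum decomposition is tailored to the cap function, although summation by parts with $p_i'=p_{i-1}-p_i$ would generalize in the same way.
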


\begin{proof}
	Consider increasing the arrival rate from $\lambda$ to $\lambda + \delta$ for a small $\delta$.
	The probability of having two or more extra arrivals from the increased rate $\delta$ is $O(\delta^2)$.
	If that did not happen, then the increased rate makes a difference only if (1) there were strictly less than $k$ arrivals from rate $\lambda$, and (2) there was an extra arrival from rate $\delta$.
	These two events happen independently with probabilities $\Pr_{j \sim \mathrm{Poisson}(\lambda)} \big[\, j < k \,\big]$ and $\delta - O(\delta^2)$.
	Overall, the expectation increases by:
	\[
		\Pr_{j \sim \mathrm{Poisson}(\lambda)} \big[\, j < k \,\big] \cdot \delta + O(\delta^2)
		~.
	\]

	Dividing this by $\delta$ and letting $\delta$ tend to zero proves the lemma.
\end{proof}

By the definition of $\Phi^{(\load_v, i)}$ and as a corollary of \Cref{lem:cap-expectation-change-rate} with $k = (c - \load_v)^+ + (c - i)^+$ and $\lambda = 2d(1-t)$, the left-hand-side of \Cref{eqn:c-matching-monotone} equals:
\[
	1 - \Pr_{j \sim \mathrm{Poisson}(\lambda)} \big[\, j < k \,\big] = \Pr_{j \sim \mathrm{Poisson}(\lambda)} \big[\, j \ge k \,\big]
	~,
\]
which equals the right-hand-side of \Cref{eqn:c-matching-monotone} as a corollary of \Cref{lem:cap-expectation-change-cap} for the same $k = (c - \load_v)^+ + (c - i)^+$ and distribution $D = \Poisson\big(2d(1-t)\big)$.

\subsection{Proof of \Cref{lem:concave-decompose}}
\label{app:concave-decompose}
	The coefficient of $f(c)$ is:
	\[
		2 \cdot \min\big\{ \load, c \big\} - \min\big\{ \load, c+1 \big\} - \min\big\{ \load, c-1 \big\}
		~,
	\]
	which equals $1$ for $c = \load$ and $0$ for other values of $c$. 
%
%

\subsection{Proof of \Cref{cor:concave-amortized}}
\label{app:concave-amortized}

By \Cref{lem:concave-decompose}, we have:
\begin{align*}
	\E \sum_{v \in V} f(\load_v)
	&
	~=~ 
	\E \sum_{v \in V} \sum_{c=1}^\infty \Big( 2 f(c) - f(c-1) - f(c+1) \Big) \cdot \min \big\{\load_v , c\big\}
	\\
	&
	~=~
	\sum_{c=1}^\infty \Big( 2f(c) - f(c-1) - f(c+1) \Big) \E \sum_{v \in V} \min \big\{\load_v , c\big\}~.
\end{align*}

Substituting \Cref{cor:c-matching} and grouping terms by $d_v$, we get:
\begin{align*}
	\E \sum_{v \in V} f(\load_v)
	&
	~\ge~ \sum_{v \in V} d_v \sum_{c=1}^\infty \Big(   2f(c) - f(c-1) - f(c+1) \Big) \cdot \frac{\E_{j \sim \mathrm{Poisson}(2d)} \min \{ j, 2c \}}{2d} \\
	&
    ~=~
	\sum_{v \in V}d_v  \sum_{c=0}^\infty
    \Big( 2 f(c) - f(c-1) - f(c+1) \Big) \cdot
    \frac{\E_{j \sim \mathrm{Poisson}(2d)} \min\{ \lfloor \nicefrac{j}{2} \rfloor,c\} + \min\{ \lceil \nicefrac{j}{2} \rceil , c \}}{2d}
	\\
	&
	~=~ \sum_{v \in V} d_v \cdot \E_{j \sim \mathrm{Poisson}(2d)} \frac{f( \lfloor \nicefrac{j}{2} \rfloor ) + f( \lceil \nicefrac{j}{2} \rceil )}{2d}
	~.
\end{align*}
Here, the last step follows by applying the decomposition in \Cref{lem:concave-decompose} in the opposite direction.

\bigskip
\noindent
\textbf{Remark.}
{Alternatively, we can also prove \Cref{cor:concave-amortized} by directly constructing the potential functions and applying \Cref{thm:potential-based-general}.

Define $\Delta f(j \,\vert\, S_e)$ to be the maximum total increase of function values, when we allocate $j$ balls to the bins incident to $e$ subject to load pair $S_e$.
Formally,
\[
	\Delta f(j \,\vert\, S_e) 
	\,\defeq\,
 	\max_{x \,:\, \sum_{v \sim e} x_v = j} ~ \sum_{v \sim e} \big( f(\load_v + x_v) - f(\load_v) \big)
 	~.
 \]

We define the potential of a ball $e$ with load pair $S_e$ as:
 \[
	\Phi^{S_e}(t) \,=\, \frac{1}{2d} \E_{j \sim \mathrm{Poisson}(2d(1-t))} \Delta f(j \,\vert\,S_e)
	~,
\]
which is the expected increase of the function values per unit of arrival rate, should balls $e$ arrive in the remaining time horizon from $t$ to $1$ at the maximum rate $2d$ allowed by the degree bound $d$, and if they were allocated optimally to the bins incident to $e$ subject to the existing loads.

The corollary then follows by verifying the conditions of \Cref{thm:potential-based-general} with the above potentials.
}

\subsection{Proof of \Cref{lem:convex-decomposition}}
	Define $\Delta f(\load) \defeq f(\load) - f(\load - 1)$ for $\load \in \Z_{\ge 0}$, where we artificially let $f(-1) = 0$ for notational convenience.
 	Further, define $\Delta^2 f(\load) \defeq \Delta f(\load) - \Delta f(\load - 1)$ for positive integers $\load$.
	We have that:
	\begin{align*}
		f(\load) 
		&
		\,=\,
		\sum_{i=1}^\load \Delta f(i) \\
		&
		\,=\,
		\sum_{i=1}^\load \sum_{j=1}^i \Delta^2 f(j) \\
		&
		\,=\,
		\sum_{j=1}^\load \Delta^2 f(j) \cdot (\load - j + 1) \\
		&
		\,=\,
		\sum_{j=1}^\infty \Delta^2 f(j) \cdot (\load - j + 1)^+
		~.
	\end{align*}
	
	The lemma then follows by substituting $c = j-1$, and noting that $\Delta^2 f(1) = 0$.
%
%
%

\subsection{Proof of \Cref{cor:convex-amortized}}
\label{app:convex-amortized}

By \Cref{lem:convex-decomposition} and $(L - c)^+ = L - \min \{L, c\}$, we have:
\begin{align*}
	\E \sum_{v \in V} f(\load_v)
	&
	~=~ 
	\E \sum_{v \in V} \sum_{c=1}^\infty \Big( f(c-1) + f(c+1) - 2 f(c) \Big) \cdot \Big( \load_v - \min \big\{\load_v , c\big\} \Big)
	\\
	&
	~=~
	\sum_{c=1}^\infty \Big( f(c-1) + f(c+1) - 2 f(c) \Big) \E \sum_{v \in V} \load_v \\
	& \qquad\qquad
	- \sum_{c=1}^\infty \Big( f(c-1) + f(c+1) - 2 f(c) \Big) \E \sum_{v \in V} \min \big\{\load_v , c\big\} 
	~.
\end{align*}

On one hand, $\sum_{v \in V} \load_v$ is the number of arrived balls.
Hence, its expectation equals the sum of the arrival rate of the balls, which further equals the sum of the degrees of the bins, i.e.:
\[
	\E \sum_{v \in V} \load_v \,=\, \sum_{e \in E} \lambda_e \,=\, \sum_{v \in V} d_v
	~.
\]

On the other hand, by \Cref{cor:c-matching}, we have:
\[
	\E \sum_{v \in V} \min \big\{\load_v , c\big\} \,\ge\, \E_{j \sim \mathrm{Poisson}(2d)} \frac{\min \{ j, 2c \}}{2d} \cdot\sum_{v \in V} d_v~.
\]

Putting together and grouping terms by $d_v$, we have:
\begin{align*}
	\E \sum_{v \in V} f(\load_v)
	&
	~\le~ \sum_{v \in V} d_v \sum_{c=1}^\infty \Big(  f(c-1) + f(c+1) - 2f(c) \Big) \bigg( 1 - \frac{\E_{j \sim \mathrm{Poisson}(2d)} \min \{ j, 2c \}}{2d} \bigg) \\
	&
	~=~ \sum_{v \in V} d_v \sum_{c=1}^\infty \Big(  f(c-1) + f(c+1) - 2f(c) \Big) \cdot \frac{\E_{j \sim \mathrm{Poisson}(2d)} \big[ j - \min \{ j, 2c \} \big]}{2d} \\
	&
	~=~	\sum_{v \in V} d_v \sum_{c=1}^\infty \Big(  f(c-1) + f(c+1) - 2f(c) \Big) \cdot \frac{\E_{j \sim \mathrm{Poisson}(2d)} \big[ (\lfloor \nicefrac{j}{2} \rfloor - c)^+ + (\lceil \nicefrac{j}{2} \rceil - c)^+ \big]}{2d}
	\\
	&
	~=~ \sum_{v \in V} d_v \cdot \E_{j \sim \mathrm{Poisson}(2d)} \frac{f( \lfloor \nicefrac{j}{2} \rfloor ) + f( \lceil \nicefrac{j}{2} \rceil )}{2d}
	~.
\end{align*}
Here the last step follows by applying the decomposition in \Cref{lem:convex-decomposition} in the opposite direction.

\bigskip
\noindent
\textbf{Remark.}
{Similar to the discussion of maximizing a concave function (\Cref{app:concave-amortized}), we
define $\Delta f(j \,\vert\, S_e)$ to be the minimum increase of function values if we allocate $j$ balls to a ball $e$'s incident bins given the load pair $S_e$.
Formally:
\[
	\Delta f(j \,\vert\, S_e) 
	\,\defeq\,
	\min_{x \,:\, \sum_{v \sim e} x_v = j} ~ \sum_{v \sim e} \big( f(\load_v + x_v) - f(\load_v) \big)
	~.
\] 
This is the total increase of the function values of bins incident to $e$ when we allocate $j$ balls optimally to them subject to their existing loads.
We can construct such an optimal allocation, e.g., by allocating j balls one by one, each to the less loaded bin. Define the potential of a ball $e$ with load pair $S_e$ as:
\[
	\Phi^{S_e}(t) \,=\, \frac{1}{2d} \E_{j \sim \mathrm{Poisson}(2d(1-t))} \Delta f(j \,\vert\,S_e)
	~.
\]
The corollary then follows by verifying the conditions of \Cref{thm:potential-based-general} with the above potentials.
}

\subsection{Proof of \Cref{cor:load-balancing} and \Cref{cor:completion-time-min}}
\label{app:convex-function-examples}
\begin{corollary}
     \label{cor:general-convex-amortized}
    For a convex function $f : \Z_{\ge 0} \to \R_{\ge 0}$ not normalized,
    decompose it into $f(L) = f_*(L) + r(L)$,
    where
    $f_*(\load) = f(\load)  - \big(f(1) - f(0)\big) \cdot \load - f(0)$ and
    $r(\load) = \big(f(1) - f(0)\big) \cdot \load + f(0)~.$
    Then Greedy guarantees an amortized bound:
    \[
    	g(d_v) ~=~ \E_{j \sim \mathrm{Poisson}(2d)} \frac{f_*( \lfloor \nicefrac{j}{2} \rfloor ) + f_* ( \lceil \nicefrac{j}{2} \rceil )}{2d}
        \cdot d_v + r(d_v)
        ~.
    \]
    That is, the expected objective is at most:
    %
    \begin{equation*}
        \E \sum_{v \in V}f(L_v) ~\le~
        \E_{j \sim \mathrm{Poisson}(2d)} \frac{f_*( \lfloor \nicefrac{j}{2} \rfloor ) + f_* ( \lceil \nicefrac{j}{2} \rceil )}{2d}
        \cdot \sum_{v \in V} d_v + \sum_{v \in V} r(d_v)~.
    \end{equation*}
\end{corollary}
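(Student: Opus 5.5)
The plan is to reduce to \Cref{cor:convex-amortized} by linearity, since the decomposition $f = f_* + r$ separates a normalized convex part from an affine part whose total is algorithm-independent.

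First, check that $f_*$ meets the hypotheses of \Cref{cor:convex-amortized}. By construction $f_*(0) = f(0) - f(0) = 0$ and $f_*(1) = f(1) - (f(1)-f(0)) - f(0) = 0$, so $f_*$ is normalized. Its second differences equal those of $f$ because $r$ is affine, so $f_*$ is convex. Non-negativity of $f_*$ follows from convexity together with $f_*(0)=f_*(1)=0$. Indeed, the first difference $f_*(\load+1)-f_*(\load)$ is non-decreasing and equals $0$ at $\load=0$, so $f_*$ is non-decreasing on $\Z_{\ge 0}$ and hence $f_* \ge 0$. Also, $f_*(\Poisson(d))$ is well defined whenever $f(\Poisson(d))$ is, since they differ by an affine function with finite Poisson expectation. \Cref{cor:convex-amortized} therefore gives
\[
\E \sum_{v \in V} f_*(\load_v) \le \E_{j \sim \Poisson(2d)} \frac{f_*(\lfloor j/2\rfloor) + f_*(\lceil j/2\rceil)}{2d} \cdot \sum_{v\in V} d_v .
\]

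Second, handle the affine part. We have $\E \sum_v r(\load_v) = (f(1)-f(0)) \E\sum_v \load_v + |V| f(0)$. The total load $\sum_v \load_v$ is the number of arrived balls, whose expectation is $\sum_e \lambda_e = \sum_v d_v$ (as used in the proof of \Cref{cor:convex-amortized} in \Cref{app:convex-amortized}). Hence $\E\sum_v r(\load_v) = \sum_v \big((f(1)-f(0)) d_v + f(0)\big) = \sum_v r(d_v)$, with equality regardless of the algorithm. Adding the two bounds yields the claimed inequality, and rewriting the right-hand side per bin gives the amortized bound $g(d_v)$.

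The only step needing care is verifying that $f_*$ is non-negative, so that it fits the range $\R_{\ge 0}$ required in \Cref{cor:convex-amortized}; this is handled by the monotone-first-difference argument above. Everything else is linearity of expectation.
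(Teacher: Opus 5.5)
Your proposal is correct and follows the paper's proof essentially verbatim. You split $f = f_* + r$, apply \Cref{cor:convex-amortized} to the normalized convex part $f_*$, and note that $\E \sum_{v} r(\load_v) = \sum_{v} r(d_v)$ for any algorithm because $\E \sum_v \load_v = \sum_v d_v$. Your explicit check that $f_* \ge 0$ (non-decreasing first differences starting at $0$) fills in a detail the paper leaves as ``easy to verify.''
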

\begin{proof}
On the one hand, it is easy to verify that 
$f_*$
is normalized ($f_*(0) = f_*(1) = 0$) and convex.

On the other hand, the total contribution of $r$ to the original function $f$ does not depend on the decision process of the algorithm since:
\begin{equation*}
    \sum_{v \in V} r(\load_v) = |V| \cdot f(0) + \big(f(1) - f(0) \big)\cdot \sum_{v \in V} \load_v~,
\end{equation*}
where $\sum_{v\in V} \load_v$ equals the number of balls arrived. Therefore, the amortized optimality of Greedy algorithm in \Cref{cor:convex-amortized} continues to hold for $f$.

Applying \Cref{cor:convex-amortized} to the normalized function $f_*$ gives:
\begin{equation*}
        \E \sum_{v \in V}f(L_v) ~\le~
        \E_{j \sim \mathrm{Poisson}(2d)} \frac{f_*( \lfloor \nicefrac{j}{2} \rfloor ) + f_*( \lceil \nicefrac{j}{2} \rceil )}{2d}
        \cdot \sum_{v \in V} d_v + \sum_{v \in V} r(d_v)~.  \qedhere
\end{equation*}
\end{proof}

\noindent
\textit{Proof of \Cref{cor:load-balancing}}~
Using the decomposition in \Cref{cor:general-convex-amortized} gives
\begin{equation*}
    f_*(\load) = e^{\load} - (e-1)L - 1~,~r(\load) = (e-1)\load +1~.   
\end{equation*}

Applying \Cref{cor:general-convex-amortized}, we have:
\begin{align*}
    \E \sum_{v \in V}f(L_v) ~&\le~
    \E_{j \sim \mathrm{Poisson}(2d)} \bigg[ \frac{1}{2d} \Big(e^{\lfloor \nicefrac{j}{2} \rfloor} - (e-1)\lfloor \nicefrac{j}{2} \rfloor - 1 
    +
    e^{\lceil \nicefrac{j}{2} \rceil} - (e-1)\lceil \nicefrac{j}{2} \rceil -1 \Big) \bigg] \cdot \sum_{v \in V}d_v \\
    ~&~~~~~+~ \sum_{v \in V} \Big( (e-1)d_v + 1 \Big)~ \\
    ~&=~
    \E_{j \sim \mathrm{Poisson}(2d)}
    \frac{e^{\lfloor \nicefrac{j}{2} \rfloor} + e^{\lceil \nicefrac{j}{2} \rceil} - (e-1)(j-2d) - 2}{2d} \cdot \sum_{v\in V}d_v + |V|~. 
\end{align*}

\noindent
\textit{Proof of \Cref{cor:completion-time-min}}~
    Using the decomposition in \Cref{cor:general-convex-amortized} gives
    \begin{equation*}
        f_*(\load) = \frac{{\load}^2 - \load}{2}~,~r(\load) = \load~.
    \end{equation*}
    
    Applying \Cref{cor:general-convex-amortized}, we have:
    \begin{align*}
        \E \sum_{v \in V}f(L_v)
        ~&\le~
        \E_{j \sim \mathrm{Poisson}(2d)} \bigg[ \frac{{\lfloor \nicefrac{j}{2} \rfloor}^2 - {\lfloor \nicefrac{j}{2} \rfloor} + {\lceil \nicefrac{j}{2} \rceil}^2 - {\lceil \nicefrac{j}{2} \rceil}}{2 \cdot 2d} \Bigg] \cdot \sum_{v \in V}d_v + \sum_{v \in V}d_v \\
        ~&=~
        \E_{j \sim \mathrm{Poisson}(2d)} \frac{{\lfloor \nicefrac{j}{2} \rfloor}^2 + {\lceil \nicefrac{j}{2} \rceil}^2 - j + 4d}{4d} \cdot \sum_{v \in V}d_v~. \qedhere
    \end{align*}

\section{Deferred Proofs from \Cref{sec:mark-swap}}
\label{app:mark-swap}

\subsection{Proof of \Cref{lem:swap-concave}}
\label{app:swap-concave}

By the definition of second-order stochastic dominance, we need to show for any concave function $f$ that:
\[
 	f \big( \Swap(\alpha) + \Swap(\beta) \big) \,\ge\, f \big( \Swap(\alpha - \varepsilon) + \Swap(\beta + \varepsilon) \big)
 	~,
\]
where recall that $f(D) = \E_{X \sim D} f(X)$.

By the definition of distributions $\Swap(\lambda)$:
\begin{align*}
	f \big( \Swap(\alpha) + \Swap(\beta) \big) 
	&
	= \E_{X, Y \sim \Exponential(1)} f \Big( \Poisson \big( \alpha (1-X)^+ \big) + \Poisson \big( \beta (1-Y)^+ \big) \Big) \\
	&
	= \E_{X, Y \sim \Exponential(1)} f \Big( \Poisson \big( \alpha (1-X)^+ +  \beta (1-Y)^+ \big) \Big)
	~.
\end{align*}

For ease of exposition, we define an auxiliary function:
\[
    g(x) \defeq f\big(\Poisson(x)\big)
    ~.
\]

Then, we have:
\begin{align*}
	f \big( \Swap(\alpha) + \Swap(\beta) \big)
	&
	= \E_{X, Y \sim \Exponential(1)} g \big( \alpha (1-X)^+ +  \beta (1-Y)^+ \big) \\
	&
	= \frac{1}{2} \E_{X, Y \sim \Exponential(1)} \Big[ g \big( \alpha (1-X)^+ +  \beta (1-Y)^+ \big) + g \big( \alpha (1-Y)^+ +  \beta (1-X)^+ \big) \Big]
	~,
\end{align*}
where the second equality holds because $X$ and $Y$ are independently and identically distributed.
Rewriting $f \big( \Swap(\alpha-\varepsilon) + \Swap(\beta+\varepsilon) \big)$ also in this way, the desired inequality becomes:
\begin{align*}
	&
	\E_{X, Y \sim \Exponential(1)} \Big[ g \big( \alpha (1-X)^+ +  \beta (1-Y)^+ \big) + g \big( \alpha (1-Y)^+ +  \beta (1-X)^+ \big) \Big] \\
	& \quad
	\ge
	\E_{X, Y \sim \Exponential(1)} \Big[ g \big( (\alpha-\varepsilon) (1-X)^+ + (\beta+\varepsilon) (1-Y)^+ \big) + g \big( (\alpha-\varepsilon) (1-Y)^+ + (\beta+\varepsilon) (1-X)^+ \big) \Big]
	~.
\end{align*}

We will prove the inequality for every realization of $X$ and $Y$.
Recall that $\alpha \le \beta$.
Further, suppose without loss of generality that $(1-X)^+ \ge (1-Y)^+$.
On one hand, we have:
\begin{align*}
	&
	(\alpha-\varepsilon) (1-X)^+ + (\beta+\varepsilon) (1-Y)^+  \le \alpha (1-X)^+ +  \beta (1-Y)^+ \big) \\
	& \quad
	\le \alpha (1-Y)^+ +  \beta (1-X)^+ \le (\alpha-\varepsilon) (1-Y)^+ + (\beta+\varepsilon) (1-X)^+
	~.
\end{align*}

On the other hand:
\begin{align*}
	&
	\big( \alpha (1-X)^+ +  \beta (1-Y)^+ \big) + \big( \alpha (1-Y)^+ +  \beta (1-X)^+ \big) \\
	& \quad
	=
	\big( (\alpha-\varepsilon) (1-X)^+ + (\beta+\varepsilon) (1-Y)^+ \big) + \big( (\alpha-\varepsilon) (1-Y)^+ + (\beta+\varepsilon) (1-X)^+ \big)
	~.
\end{align*}

Therefore, the inequality reduces to the concavity of function $g$.
Consider any $x > y \ge 0$, and random variables $A \sim \Poisson(y)$ and $B, C \sim \Poisson(\nicefrac{(x-y)}{2})$.
We have:
\begin{align*}
	g(x) + g(y) 
	&
	= \E \big[ f(A) + f(A + B + C) \big] 
	\tag{$\Poisson(y) + 2 \Poisson(\nicefrac{(x-y)}{2}) = \Poisson(x)$} \\[1ex]
	&
	\le \E \big[ f(A + B) + f(A + C) \big] 
	\tag{concavity of $f$} \\
	&
	= 2 \cdot g\Big(\frac{x+y}{2} \Big)
	~.
	\tag{$\Poisson(y) + \Poisson(\nicefrac{(x-y)}{2}) = \Poisson(\nicefrac{(x+y)}{2})$}
\end{align*}

Hence, function $g$ is indeed concave.

\subsection{Proof of \Cref{lem:markswap-dominate-poisson}}
\label{app:markswap-dominate-poisson}
	Consider random variables $T \sim \Exponential(1)$ and $S \sim \Poisson\big( \lambda (1-T)^+ \big)$ so that $S \sim \Swap(\lambda)$.
	Further, sample $F' \sim \Bernoulli(\lambda)$ if $T < 1$, and let $F' = 0$ otherwise.
	
	On one hand, $T$ can be seen as the arrival time of a Poisson process with rate $1$ and $F'$ is the indicator for a sub-sample with probability $\lambda$;
	the combined process corresponds to a Poisson process with rate $\lambda$.
	Since $S$ is the number of Poisson arrivals with rate $\lambda$ after time $T$, we conclude that $S + F' \sim \Poisson(\lambda)$.
	
	On the other hand, we also have $F' \sim \Bernoulli\big( (1-\frac{1}{e}) \lambda \big)$, except that it is positively correlated with $S$.
	More precisely, for any $c \ge 1$ we have:
	\[
		\Pr \big[ S \ge c \,\vert\, F' = 0 \big] 
		=
		\frac{(1-\lambda) \int_0^1 e^{-t} \cdot \Pr_{\load \sim \Poisson(\lambda(1-t))} \big[ \load \ge c \big] \,\dif{t}}{(1-\lambda) \int_0^1 e^{-t}  \,\dif{t} + \frac{1}{e}}
	\]
	by Bayes' Rule, and similarly:
	\[
		\Pr \big[ S \ge c \,\vert\, F' = 1 \big] 
		=
		\frac{\lambda \int_0^1 e^{-t} \cdot \Pr_{\load \sim \Poisson(\lambda(1-t))} \big[ \load \ge c \big] \,\dif{t}}{\lambda \int_0^1 e^{-t}  \,\dif{t}}
		~.
	\]
	The letter is larger for every $c$.

	Compared to $F \sim \Bernoulli\big( (1-\frac{1}{e}) \lambda \big)$ drawn independent to the realization of $S$, we conclude that  $S + F \succeq S + F'$.
%

\section{Deferred Proofs from \Cref{sec:completion-time}}

\subsection{Proof of \Cref{lem:completion-time-lp}}
\label{app:completion-time-lp}

Consider any instance and its offline optimal solutions for different realizations of jobs, it suffices to construct a feasible LP solution with the same objective value.
%
Let $X_{ij}$ be the number of jobs of type $j$ allocated to machine $i$. 
Let $Y_{ijk}(s)$ be the indicator that, among jobs of sizes at least $s$ allocated to machine $i$, the $k$-th largest job is of type $j$.
Finally, let $Z_{ik}(s)$ be the indicator that machine $i$ is assigned with at least $k$ jobs of sizes at least $s$.
We will let $x_{ij} = \E X_{ij}$, $y_{ijk}(s) = \E Y_{ijk}(s)$, and $z_{ik}(s) = \E Z_{ik}(s)$.

We first verify that the LP objective of these variables equals the expected total completion time of the offline solution.
Under any realization of jobs and the corresponding optimal allocation of jobs to machines, we claim that the total completion time of jobs on machine $i$ is equal to:
\[
    \int_0^\infty \sum_{k=1}^\infty k \cdot Z_{ik}(s)\,\dif{s}
    ~.
\]

Suppose $n$ jobs were allocated to machine $i$ with sizes $s_1 \le s_2 \le \dots \le s_n$.
Then, the total completion time is:
\begin{multline}
    \label{eqn:lp-relaxation-objective}
    s_1 + (s_1 + s_2) + \dots + (s_1 + s_2 + \dots + s_n) \\
    = s_1 \cdot \frac{n^2+n}{2} + (s_2 - s_1) \cdot \frac{(n-1)^2 + (n-1)}{2} + \dots + (s_n - s_{n-1}) \cdot \frac{1^2 + 1}{2}
\end{multline}

Artificially let $s_0 = 0$ and $s_{n+1} = \infty$ for notational convenience.
For any $0 \le i \le n$ and any $s_i \le s < s_{i+1}$, there are $n-i$ jobs of sizes at least $s$.
Hence, we have $Z_{ik}(s) = 1$ for $1 \le k \le n-i$ and $Z_{ik}(s) = 0$ otherwise.
As a result:
\[
    \sum_{k=1}^\infty k \cdot Z_{ik}(s) ~=~ \sum_{k=1}^{n-i} k ~=~ \frac{(n-i)^2 + (n-i)}{2}
    ~.
\]
Integrating over $s \ge 0$, this equals \Cref{eqn:lp-relaxation-objective}.

Next, we verify the constraints.
The first set of constraints holds since $\sum_{i \in M} X_{ij}$ equals the number of realized jobs of type $j$, whose expectation is $\lambda_j$.

The second set of constraints holds for every realization, i.e., $\sum_{k=1}^\infty Y_{ijk}(s) = X_{ij}$, since every job $j$ allocated to machine $i$ is a job of size $s_{ij} \ge s$.
Therefore, there is a one-to-one mapping between such jobs and variables $Y_{ijk}(s) = 1$,  by the definition of $Y_{ijk}(s)$.

The third set of constraints follows by two upper bounds of $\sum_{j \in S} \sum_{k \le \ell} Y_{ijk}(s)$.
On one hand, this is at most the number of realized jobs with types in $S$, which follows distribution $\Poisson(\lambda_S)$.
On the other hand, it is at most $\ell$, because for any $1 \le k \le \ell$, there is at most one type of jobs $j$ for which $Y_{ijk}(s) = 1$.

Finally, the fourth set of constraints holds for every realization, i.e., $Z_{ik}(s) = \sum_{j \in J} Y_{ijk}(s)$, by the definitions of these variables.

\subsection{Proof of \Cref{lem:lp-poly-solvable}}
\label{app:lp-poly-solvable}

	The main challenge is that the LP has a continuum of variables and constraints.
	We first reduce the number of variables from a continuum to a polynomial of the problem size.
	Note that there are at most $|M| |J|$ distinct job sizes, denoted as $0 \le s_1 \le s_2 \le \dots \le s_{|M||J|}$.
	Artificially define $s_0 = 0$ and $s_{|M||J|+1} = \infty$ for notational convenicence.
	Then, we may without loss of generality consider solutions satisfying that for any $0 \le \ell \le |M||J|$, variables $z_{ik}(s)$ and $y_{ijk}(s)$ have fixed values for $s_\ell \le s < s_{\ell+1}$.
	
	With this treatment, the number of variables is polynomial in $|M|$ and $|J|$.
	Further, it suffices to consider the second and third sets of constraints for a polynomial number of size-levels $s$.
	That being said, there are still exponentially many constraints in the third set, even for a fixed size-level $s = s_i$, $0 \le i \le |M||J|$.
	Fortunately, for any fixed $i$, $\ell$, and $s$, the right-hand-side of the constraint is submodular, and thus, the corresponding constraints form a poly-matroid.
	Hence, there is a polynomial time separation oracle for the third set of constraints, allowing us to solve the LP in polynomial time, e.g., using the ellipsoid method.

\section{Deferred Proofs from \Cref{sec:discussion}}
\subsection{Proof of \Cref{thm:potential-based-general-extention}}
\label{app:general-framework-extension}
	We prove the theorem by verifying the conditions of \Cref{lem:potential-function-conditions}.
	The starting and terminal conditions of \Cref{lem:potential-function-conditions} follow from the starting and terminal conditions of the theorem.
	
	The rest of the proof verifies the monotonicity condition of \Cref{lem:potential-function-conditions}, using the last two conditions of the theorem.
	We only prove the maximization version, as the minimization version is almost verbatim.
	
	It suffices to show that $\frac{\dif}{\dif{t}} \E \big[\Phi(t)+A(t)\big] \ge 0$ for $t \in [0, 1]$.
	We prove a stronger property that the inequality holds \emph{conditioned on what happened before time $t$}, in particular, on the realization of the balls' load tuples, represented by the normalized arrival rates $\Lambda^{S}_v$ of balls $e \sim v$ with $S_e = S$.

	At time $t$, the algorithm allocates a ball to bin $v$ with rate:
	\[
		\sum_{S \in \loadeset_v} \Lambda^{S}_v \, |S| \,x^{S}_v
		~,
	\]
	where $\Lambda^{S}_v \,|S|$ is the (unnormalized) arrival rate of balls $e \sim v$ with load tuples $S_e = S$, and $x^{S}_v$ is the probability that the algorithm allocates such a ball to bin $v$ conditioned on its arrival.
	
	Hence, the algorithm's objective changes by:
	\begin{equation}
		\label{eqn:general-algo}
		\frac{\dif{}}{\dif{t}} \E A(t) ~=~ \sum_{v \in V} \sum_{S \in \loadeset_v} \Lambda^{S}_v \,|S|\, x^{S}_v \, \Big( f(\load_v + 1) - f(\load_v) \Big)
		~.
	\end{equation}

	Further, when the algorithm allocates a ball to bin $v$, its potential decreases by:
	\[
		\sum_{S \in \loadeset_v} \Lambda^{S}_v \, |S| \, \Big( \Phi^{S}(t) - \Phi^{S_+}(t) \Big)
		~.
	\]

	Summing over $v \in V$, the decrease of the overall potential due to the allocation at time $t$ is:
	\begin{equation}
	    \label{eqn:general-pot-allocation}
	    \sum_{v\in V}\,
	    \biggl(\, 
	    	\sum_{S \in \loadeset_v} \Lambda^{S}_v \,|S|\, x^{S}_v 
		\,\biggr)
	    \biggl(\,
	    	\sum_{S \in \loadeset_v} \Lambda^{S}_v \,|S| \, \Big( \Phi^{S}(t) - \Phi^{S_+}(t) \Big) 
    	\,\biggr)
	    ~.
	\end{equation}

	Finally, the overall potential also changes due to time elapses, independent of the arrival and allocation at time $t$, by:
	\begin{equation}
	    \label{eqn:general-pot-time}
	    \sum_{e \in E} \lambda_e \cdot\frac{\dif{}}{\dif{t}} \Phi^{S_e}(t)
		\,=\,
		\sum_{\vphantom{S} v \in V} \sum_{S \in \loadset_v} \Lambda^{S}_v \,|S| \, x^{S}_v \cdot \frac{\dif{}}{\dif{t}} \Phi^{S}(t)
	    ~.
	\end{equation}

	Putting \Cref{eqn:general-algo,eqn:general-pot-allocation,eqn:general-pot-time} together and grouping the linear terms w.r.t.\ $\Lambda^{S}_v$, we get that $\frac{\dif{}}{\dif{t}} \E \big[\Phi(t)+A(t)\big]$ is equal to:
	\[
		\sum_{v \in V} 
		\Bigg( 
		    \sum_{S \in \loadeset_v} \Lambda_v^{S} \, |S| \, x^{S}_v \,\Big( f(\load_v + 1) - f(\load_v) + \frac{\dif}{\dif{t}} \Phi^{S}(t) \Big)
			\,-\,
			\sum_{S \in \loadeset_v} \Lambda_v^{S} \, |S| \, x^{S}_v
			\sum_{S \in \loadeset_v} \Lambda_v^{S} \, |S| \, \Big( \Phi^{S}(t) - \Phi^{S_+}(t) \Big)
		\Bigg)
	    ~.
	\]

	It suffices to show non-negativity for every bin $v \in V$.
	Recall that $\vec{\Lambda}_v$ lies in polytope $\polytope_v$.
	In other words, there are $\alpha_{\vec{\Lambda}} \ge 0$ for $\vec{\Lambda} \in \support(\polytope_v)$ satisfying $\sum_{\vec{\Lambda}} \alpha_{\vec{\Lambda}} = 1$ such that:
	\[
		\vec{\Lambda}_v = \sum_{\vec{\Lambda} \in \support(\polytope_v)} \alpha_{\vec{\Lambda}} \cdot \vec{\Lambda}
		~.
	\]
	
	The inequality for bin $v$ is then:
	\begin{align*}
		&
		\sum_{\vec{\Lambda} \in \support(\polytope_v)} \alpha_{\vec{\Lambda}} \sum_{S \in \loadeset_v} \Lambda^{S} \, |S| \, x^{S}_v \Big( f(\load_v + 1) - f(\load_v) + \frac{\dif}{\dif{t}} \Phi^{S}(t) \Big) \\
		& \qquad
		\ge\, 
		\bigg( \sum_{\vec{\Lambda} \in \support(\polytope_v)} \alpha_{\vec{\Lambda}} \sum_{S \in \loadeset_v} \Lambda^{S} \, |S| \, x^{S}_v \bigg)
		\bigg( \sum_{\vec{\Lambda} \in \support(\polytope_v)} \alpha_{\vec{\Lambda}} \sum_{S \in \loadeset_v} \Lambda^{S} \, |S| \, \Big( \Phi^{S}(t) - \Phi^{S_+}(t) \Big) \bigg)
		~.
	\end{align*}
	
	By Chebyshev's sum inequality, which is applicable due to the last condition of the theorem, the right-hand-side is at most:
	\[
		\sum_{\vec{\Lambda} \in \support(\polytope_v)} \alpha_{\vec{\Lambda}} 
		~
		\sum_{S \in \loadeset_v} \Lambda^{S} \, |S| \, x^{S}_v 
		~
		\sum_{S \in \loadeset_v} \Lambda^{S} \, |S| \, \Big( \Phi^{S}(t) - \Phi^{S_+}(t) \Big)
		~.
	\]
	
	Hence, the inequality follows by the third condition of the theorem.

\subsection{Proof of \Cref{cor:k-choice}}
\label{app:k-choice}

Recall the potential functions:
\[
	\Phi^{S_e}(t) = \E_{j \sim \mathrm{Poisson}(dk(1-t))} \frac{\min \bigl\{ j \,,\, \sum_{v \sim e} (c - \load_v)^+ \bigr\}}{dk}
	~.
\]

The starting condition at $t = 0$ holds because for any bin $v$:
\[
    \Phi^{\mathbf{0}}(0) 
    \,=\,
    \E_{j \sim \mathrm{Poisson}(dk)}\frac{\min \{ j, ck \}}{dk}
    ~.
	\]
	
	The terminal condition also holds because $\Phi^{S}(1) = 0$ for any load tuple $S$ by definition.

It remains to verify the last two conditions of \Cref{thm:potential-based-general-extention}.
Consider an arbitrary bin $v \in V$ and its load $\load_v$.
The subset of consistent load tuples are:
\[
	\loadeset_v \,=\, \Big\{ (\load_v, \load_1, \load_2, \dots, \load_{k-1}) : 0 \le \load_i \le c \mbox{ for } 1 \le i \le k-1 \Big\}
	~,
\]
where load tuples with $\load_i > c$ are omitted since the argument can treat them as $\load_i = c$.

By the definition of the Proportional Greedy algorithm, the probability of allocating the ball $e$ to bin $v$ only depends on the remaining capacity $r_v = (c - \load_v)^+$ of bin $v$ and the total remaining capacity of all incident bins of ball $e$, denoted as $r_e = \sum_{u \sim e} (c - \load_u)^+$.
The same applies to the potential of ball $e$.
Hence, we will write $r_e$ instead of the entire load tuple $S$ as the superscripts in the following notations:
\[
	x_v^r = \frac{r_v}{r} 
	~,\quad
	\Phi^r(t) = \E_{j \sim \mathrm{Poisson}(dk(1-t))} \frac{\min \{ j , r \}}{dk}
	~.
\]

Let $\Lambda_v^r = \frac{1}{k} \sum_{e \sim v : r_e = r} \lambda_e$ is the normalized total arrival rate of balls incident to bin $v$ with total remaining capacity being $r_e = r$. 
The vector $\vec{\Lambda} = (\Lambda^r)_{r_v \le r \le c(k-1)+r_v}$ lies in the polytope:
\[
	\polytope_v = \bigg\{\, \vec{\Lambda} : \sum_{r=r_v}^{c(k-1)+r_v} \Lambda^r = d_v \,\mbox{, and}~ \Lambda^r \ge 0 ~\mbox{for $r_v \le r \le c(k-1)+r_v$} \,\bigg\}
	~.
\]

It is spanned by the vectors with $\Lambda^r = d_v$ for some $r$ and zeros in the other entries.
The case of $r = 0$ is trivially true since $\Phi^{\mathbf{0}}(t) = 0$ at all time $t$.

Next, suppose $r\ge 1$.
For the vector with $\Lambda^r = d_v$, the inequality reduces to:
\[
	d_v k \cdot \frac{r_v}{r} \cdot \bigg( \mathbf{1}_{r_v > 0} + \frac{\dif{}}{\dif{t}} \Phi^r (t) \bigg)
	\,\ge\,
	d_v k \cdot \frac{r_v}{r} \cdot  d_v k \, \Big( \Phi^r(t) - \Phi^{r-1}(t) \Big)
	~.
\]

If $r_v = 0$, both sides are zero.
Otherwise, we cancel the common $d_vk\cdot \frac{r_v}{r}$ term from both sides.
Since the right-hand-side is non-negative and $d_v \le d$, we only need to verify the case of $d_v = d$, where it holds with equality, i.e.:
\begin{equation}
	\label{eqn:k-choice-c-matching-monotone}
	1 + \frac{\dif{}}{\dif{t}} \Phi^r(t)
	\,=\,
	dk \, \Big( \Phi^r(t) - \Phi^{r-1}(t) \Big)
	~.
\end{equation}

By the definition of $\Phi^r$ and as a corollary of \Cref{lem:cap-expectation-change-rate} with $k = r$ and $\lambda = dk(1-t)$, the left-hand-side of \Cref{eqn:k-choice-c-matching-monotone} equals:
\[
	1 - \Pr_{j \sim \mathrm{Poisson}(dk(1-t))} \big[\, j < r \,\big] = \Pr_{j \sim \mathrm{Poisson}(dk(1-t))} \big[\, j \ge r \,\big]
	~,
\]
which equals the right-hand-side of \Cref{eqn:k-choice-c-matching-monotone} as a corollary of \Cref{lem:cap-expectation-change-cap} for the same $k = r$ and distribution $D = \Poisson\big(dk(1-t)\big)$.

\subsection{Proof of \Cref{cor:Mixed-choice}}
\label{app:mixed-choice}

Consider potential functions:
\begin{align*}
    \Phi^{\mathbf{0}} (t) &=
    1 + \frac{\ln{2}}{1-\ln{2}}e^{-2(1-t)} - \frac{1}{1-\ln{2}}(2e)^{-(1-t)}~, \\
    \Phi^{(0,1)}(t) &=
    \frac{1}{2} - \frac{1}{2}e^{-2(1-t)}~, \\[1ex]
    \Phi^{(1,1)}(t) &= \Phi^{(1)}(t) = 0~.
\end{align*}

The starting condition at $t = 0$ holds because for any bin $v$:
\[
    \Phi^{\textbf{0}}(0) = 1 + \frac{\ln{2}}{1-\ln{2}} \frac{1}{e^2} - \frac{1}{1-\ln{2}} \frac{1}{2e}
    ~.
\]

The terminal condition also holds because $\Phi^{S}(1)=0$ for any load tuple $S$.

It remains to verify the last two conditions of \Cref{thm:potential-based-general-extention}.
Recall that $f(\load) = \min \{ \load , 1 \}$.
Consider an arbitrary bin $v \in V$ and its load $\load_v \in \{ 0, 1 \}$.
	The set of consistent load tuples is:
	\[
		\loadeset_v \,=\, \Big\{ (\load_v, i) : i = 0,1 \Big\}
  \bigcup \Big\{ (\load_v) \Big\}~.
	\]

Recall that $\Lambda_v^{S} = \frac{1}{|S|} \sum_{e \sim v : S_e = S} \lambda_e$.
Consider any empty bin $v$ and its vector of arrival rates $\vec{\Lambda}_v = (\Lambda_v^{(0)} , \Lambda_v^{(0,0)}, \Lambda_v^{(0,1)})$.
Vector $\vec{\Lambda}_v$ lies in the polytope:
\[
    \polytope_v = \bigg\{ \vec{\Lambda} : 
{||\vec{\Lambda}||}_1 = d_v~,~ 
\Lambda^{(0)} \le 1-\ln2 \mbox{~and}~ \Lambda^{(0)},\Lambda^{(0,0)},\Lambda^{(0,1)} \ge 0 \bigg\}
    ~.
\]


Recall the third condition of \Cref{thm:potential-based-general-extention}.
For any $\vec{\Lambda} \in \support\big( \polytope_v \big)$, we need to show that:
\[
    \sum_{S \in \loadeset_v} \Lambda^{S} |S| \, x^{S}_v \Big( 1 + \frac{\dif}{\dif{t}} \Phi^{S}(t) \Big) 
    \,\ge\, 
    \sum_{S\in \loadeset_v} \Lambda^{S} |S| \, x^{S}_v \sum_{S \in \loadeset_v} \Lambda^{S} |S| \, \Big( \Phi^{S}(t) - \Phi^{S_+}(t) \Big)
    ~.
\]

We may consider without loss of generality $d_v > 0$.
If $d_v \le 1 - \ln 2$, the spanning vectors are $(d_v, 0, 0)$, $(0, d_v, 0)$, and $(0, 0, d_v)$.
It is easy to verify that (with $d_v = 1 - \ln 2$ as the worst case):
\begin{enumerate}[label=(\roman*)]
\item For vector $(0,d_v,0)$, the inequality is:
\[
    1+\frac{\dif}{\dif{t}} \Phi^{\textbf{0}}(t) 
    \,\ge\,
    2 d_v \Big(\Phi^{\textbf{0}}(t) - \Phi^{(0,1)}(t) \Big)
    ~.
\]
\item For vector $(d_v,0,0)$, the inequality is:
\[
    1+\frac{\dif}{\dif{t}} \Phi^{\textbf{0}}(t) 
        \,\ge\,
        d_v \, \Phi^{\textbf{0}}(t)
    ~.
\]
\item For vector $(0,0,d_v)$, the inequality is:
\[
    1+\frac{\dif}{\dif{t}} \Phi^{(0,1)}(t)
    \ge
    2 d_v \, \Phi^{(0,1)}(t)
    ~.
\]
\end{enumerate}

The fourth condition of \Cref{thm:potential-based-general-extention} follows by the non-decreasingness of the second and third rows below, where we use the fact that $\Phi^{\textbf{0}} \le 2 \Phi^{(0,1)}(t)$.
\begin{table}[h]
\centering
\begin{tabular}{lx{3.6cm}x{3cm}x{3cm}}
\toprule
& (i) & (ii) & (iii) \\
\midrule
$\sum_{S \in \loadeset_v} \Lambda^{S} |S| \, x^{S}_v$ & $d_v$ & $d_v$ & $2d_v$ \\
$\sum_{S \in \loadeset_v} \Lambda^{S} |S| \, \big( \Phi^{S}(t) - \Phi^{S_+}(t) \big)$ & $2d_v \big( \Phi^{\textbf{0}}(t) - \Phi^{(0,1)}(t) \big)$ & $d_v \Phi^{\textbf{0}}(t)$ & $2d_v \Phi^{(0,1)}(t)$\\
\bottomrule
\end{tabular}
\end{table}

If $d_v > 1 - \ln 2$, we list the spanning vectors $(1-\ln2,d_v -1 + \ln 2, 0)$ and the corresponding inequalities below.
It is easy to verify the inequalities in the worst case when $d_v = 1$.
\begin{enumerate}[label=(\roman*)]
    \setcounter{enumi}{3}
    \item  \label{case:1}
    For vector $(0, d_v, 0)$, the inequality is:
    \[
        1+\frac{\dif}{\dif{t}} \Phi^{\textbf{0}}(t) 
        \,\ge\,
        2 d_v \Big(\Phi^{\textbf{0}}(t) - \Phi^{(0,1)}(t) \Big)
        ~.
    \]
    
    \item \label{case:2}
    For vector $( 1-\ln2 , d_v - 1 + \ln2 , 0)$, the inequality is:
    \[
        1+\frac{\dif}{\dif{t}} \Phi^{\textbf{0}}(t)
        \ge
        \big(2 d_v -1+\ln2\big) \Phi^{\textbf{0}}(t)-2 \big( d_v - 1 + \ln{2} \big)  \Phi^{(0,1)}(t)
        ~.
    \]
    
    \item \label{case:3}
    For vector $(1-\ln2 , 0 , d_v - 1 + \ln2)$, the inequality is:
    \begin{multline*}
        (1-\ln2)\Big(1+\frac{\dif}{\dif{t}} \Phi^{\textbf{0}}(t)\Big) +
        2 \big( d_v - 1 + \ln2 \big) \Big(1+\frac{\dif}{\dif{t}} \Phi^{(0,1)}(t)\Big) \\
        \ge (2 d_v - 1 +\ln2)
        \Big((1-\ln2)\Phi^{\textbf{0}}(t) + 2 (d_v - 1 + \ln2) \Phi^{(0,1)}(t) \Big) ~.
    \end{multline*}
    This is quadratic in $d_v$ with a positive quadratic coefficient on the right.
    Hence, it suffices to verify $d_v = 1 - \ln 2$ and $d_v = 1$.
    Since we have already check $d_v = 1 - \ln 2$ in the previous case, we only need to verify $d_v = 1$ as for the other three cases.
    \item \label{case:4}
    For the vector $(0, 0, d_v)$, the inequality is:
    \[
        1+\frac{\dif}{\dif{t}} \Phi^{(0,1)}(t)
        \ge
       2 d_v \, \Phi^{(0,1)}(t)
        ~.
    \]
\end{enumerate}

The fourth condition of \Cref{thm:potential-based-general-extention} follows by the non-decreasingness of the second and third columns below, where we once again use the fact that $\Phi^{\textbf{0}}(t) \le 2 \Phi^{(0,1)}(t)$.
\begin{table}[h]
\renewcommand{\arraystretch}{1.25}
\centering
\begin{tabular}{cx{5cm}x{8cm}}
\toprule
& $\sum_{S \in \loadeset_v} \Lambda^{S} |S| \, x^{S}_v$ & $\sum_{S\in \loadeset_v} \Lambda^{S} |S| \, \big( \Phi^{S}(t) - \Phi^{S_+}(t) \big)$ \\
\midrule
(iv) & $d_v$ & $2d_v \big( \Phi^{\textbf{0}}(t) - \Phi^{(0,1)}(t) \big)$ \\
(v) & $d_v$ & $\big(2 d_v -1+\ln2) \Phi^{\textbf{0}}(t)-2 \big( d_v - 1 + \ln{2} \big)  \Phi^{(0,1)}(t)$ \\
(vi) & $2d_v-1+\ln2$ & $(1-\ln2)\Phi^{\textbf{0}}(t) + 2 (d_v - 1 + \ln2) \Phi^{(0,1)}(t)$ \\
(vii) & $2d_v$ & $2d_v \Phi^{(0,1)}(t)$ \\
\bottomrule
\end{tabular}
\end{table}

Therefore, by \Cref{thm:potential-based-general-extention} we get that the expected number of non-empty bins is at least:
\begin{equation*}
    \Phi^{\textbf{0}}(0) \sum_{v \in V} d_v =
    \Big( 1 + \frac{\ln{2}}{1-\ln{2}} \frac{1}{e^2} - \frac{1}{1-\ln{2}} \frac{1}{2e} \Big) \sum_{v \in V} d_v~.
\end{equation*}
\end{document}